\newtheorem{Defn}{Definition}
\newtheorem{thm}{Theorem}
\DeclareMathOperator*{\argmax}{argmax}
\begin{document}
	\title{Effective Two-Stage Double Auction for Dynamic Resource Provision over Edge Networks via Discovering The Power of Overbooking}
	
\author{Sicheng Wu, Minghui Liwang, \IEEEmembership{Senior Member}, \IEEEmembership{IEEE}, Deqing Wang, \IEEEmembership{Member}, \IEEEmembership{IEEE}, Xianbin Wang, \IEEEmembership{Fellow}, \IEEEmembership{IEEE}, Chao Wu, Junyi Tang, Li Li,~\IEEEmembership{Member}, \IEEEmembership{IEEE}, Xiaoyu Xia, \IEEEmembership{Senior Member}, \IEEEmembership{IEEE}\vspace{-0.2cm}
	\thanks{This work was supported in part by National Natural Science Foundation of China under Grant nos. 62271424, 62271427; Shanghai Pujiang Programme under Grant no. 24PJD117; Shanghai Municipal Science and Technology Major Project under Grant no. 2021SHZDZX0100; Aeronautical Science Foundation of China under Grant no. 2023Z066038001; Chinese Academy of Engineering, Strategic Research and Consulting Program under Grant no. 2023-XZ-65; Key Science and Technology Project of Fujian Province under Grant no. 2023H0001.
		S. Wu (wusch2025@lzu.edu.cn) 
		and J. Tang (tjunyi2024@lzu.edu.cn) are with School of Information Science and Engineering, Lanzhou University, Lanzhou, China.
		M. Liwang (minghuiliwang@tongji.edu.cn)
		and L. Li (lili@tongji.edu.cn) are with the Department of Control Science and Engineering, the National Key Laboratory of Autonomous Intelligent Unmanned Systems, and also with Frontiers Science Center for Intelligent Autonomous Systems, Ministry of Education, Tongji University, Shanghai, China.
		D. Wang (deqing@xmu.edu.cn) is with the School of Informatics, Xiamen University, Fujian, China.
		X. Wang (xianbin.wang@uwo.ca) is with the Department of Electrical and Computer Engineering, Western University, Ontario, Canada. 
		C. Wu (chao.wu@zju.edu.cn) is with the School of Public Affairs, Zhejiang University, Hangzhou 310027, China. 
		X. Xia (xiaoyu.xia@rmit.edu.au) is with the School of Computing Technologies, RMIT University, Melbourne, Australia.
        
        Corresponding author: Minghui Liwang (minghuiliwang@tongji.edu.cn)
}}

	\IEEEtitleabstractindextext{
		\begin{abstract}
			\justifying
	To facilitate responsive and cost-effective computing service delivery over edge networks, this paper investigates a novel two-stage double auction methodology via discovering an interesting idea of resource overbooking to overcome dynamic and uncertain nature of supply of edge servers (sellers) and demand generated from mobile devices (as buyers). The proposed auction integrates multiple essential goals such as maximizing social welfare as well as accelerating the decision-making process from both short-term and long-term \textcolor{black}{perspectives (e.g., the time required to determine} winning seller-buyer pairs), by introducing a stagewise strategy: an overbooking-driven pre-double auction (OPDAuction) for determining long-term cooperations between sellers and buyers before practical resource transactions as Stage I, and a real-time backup double auction (RBDAuction) for quickly coping with residual resource demands during actual transactions. In particular, by embedding a proper overbooking rate, OPDAuction helps with facilitating trading contracts between appropriate sellers and buyers as guidance for future transactions, by allowing the booked resources to exceed theoretical supply. Then, since pre-auctions may cause risks, our RBDAuction adjusts to real-time market changes, further enhancing the overall social welfare. More importantly, we offer an interesting view to show that our proposed two-stage auction can support significant design properties such as truthfulness, individual rationality, and budget balance.
	\textcolor{black}{Extensive experiments demonstrate that our TwoSAuction achieves up to 76.8\% reduction in decision-making time compared to conventional double auctions when considering 150 buyers and 25 sellers, while maintaining superior performance in social welfare and computational scalability over dynamic edge settings.}
		\end{abstract}

		\begin{IEEEkeywords}
			Edge-assisted mobile networks, dynamics and uncertainty, two-stage double auction, overbooking, time efficiency
		\end{IEEEkeywords}
}
	
\maketitle
\IEEEdisplaynontitleabstractindextext
%
\IEEEpeerreviewmaketitle

\section{Introduction }

\IEEEPARstart{T}{he} proliferation of smart devices and their enhanced computing capabilities have enabled a wide range of innovative mobile applications, such as intelligent transportation, E-health, and smart homes \cite{b1}. Most of these applications are machine learning-driven, requiring intensive computation for massive data analysis, which exceeds the capacity of individual devices with limited resources and battery supply \cite{d1-2}. Edge computing provides a viable paradigm for resource sharing at the network edge, supporting responsive and cost-effective service delivery \cite{b2}. However, guaranteeing sufficient resources for delay-sensitive and computation-intensive tasks remains difficult due to limited edge server capacity and concurrent demands from numerous devices \cite{d1-4}. The challenge further intensifies with the increasing dynamics and heterogeneity of distributed application requirements. To bridge the gap between distributed edge resources and diverse service demands, the double auction \cite{b3} emerges as a promising solution, allowing buyers and sellers to submit bids and asks to establish mutually beneficial trades. This creates a market with built-in incentives \cite{d1-6}, enhancing overall network efficiency.

\subsection{Motivation}
\color{black}
This section outlines our key motivations in a Q\&A form. \textit{(i) Why is a market-based approach necessary for service provision in edge networks?} Traditional non-market mechanisms struggle to handle device heterogeneity, time-varying supply/demand, partial information, and self-interested participants. In contrast, market-based solutions introduce price signals and incentive-compatible rules (e.g., auctions, dynamic pricing, contracts), aligning individual incentives with system objectives, enabling scalable allocation under uncertainty, and supporting long-term participation \cite{b4}. \textit{(ii) Why adopt a double auction with monetary incentives?} Edge resource trading is inherently two-sided, involving self-interested buyers (mobile devices) and sellers (edge servers). Double auctions capture heterogeneous valuations on both sides and ensure fair, efficient matching. Monetary incentives are essential for incentive compatibility: buyers truthfully reveal valuations via payments, while sellers allocate resources to maximize revenue, ensuring fairness, sustained participation, and robustness \cite{d1-8}. \textit{(iii) Why is a conventional real-time double auction insufficient?} Despite their strengths, real-time implementations incur frequent bidding and clearing, causing high communication and computation overheads that undermine scalability in dynamic environments \cite{b5}. \color{black}\textit{(iv) Why introduce resource overbooking?} Edge resources are often underutilized due to uncertain and volatile demand. Unlike cloud platforms with abundant pooled capacity, edge nodes are geographically constrained and cannot offload shortages. Resource overbooking mitigates this by safely oversubscribing local capacity, reducing idleness and improving utilization (see Appendix \ref{ob.vs}).

\color{black}
Answering the above four questions forms our core motivation. To support responsive and cost-effective double auctions in dynamic edge networks, we design a pre-auction stage for early negotiation, fostering long-term partnerships and reducing bargaining overheads. We further pioneer the use of resource overbooking, allowing sellers to offer beyond nominal supply and buyers to request beyond immediate needs \cite{b5-1}. This, however, requires accurate estimation of historical data, as improper rates may cause economic or operational risks absent in cloud environments. To address this, we develop a two-stage double auction that integrates pre-auction and real-time decisions, strategically exploiting overbooking to manage diverse uncertainties (see Fig.~1).
\color{black}

\begin{figure}[t!]
	\centering
	\includegraphics[trim=0cm 0cm 0cm 0cm, clip, width=\columnwidth]{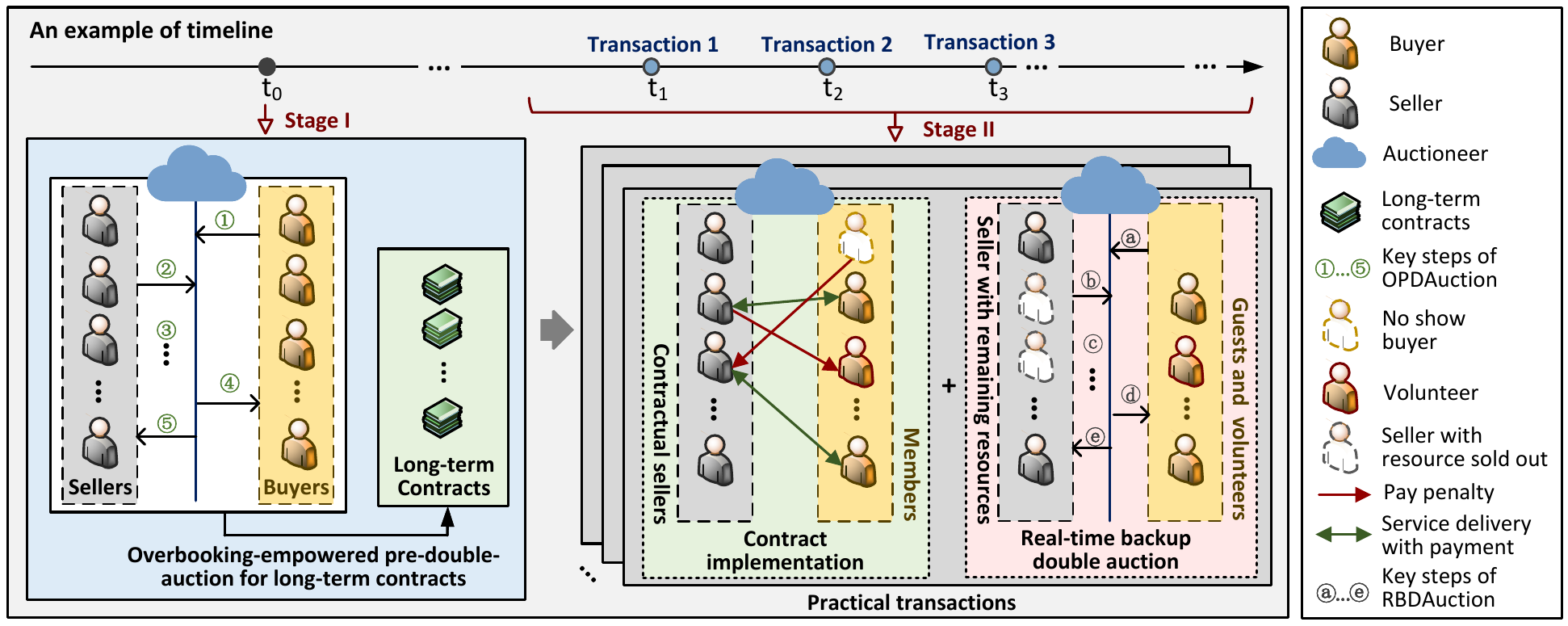}
	\caption{Schematic of our proposed two-stage double auction. During Stage I:  
		\ding{172} Buyers report their information such as demand, bid, attendance probability;  
		\ding{173} Sellers report their information such as resource capacity and ask pricing;  
		\ding{174} Auctioneer determines members, long-term contracts and overbooking rate;  
		\ding{175}\ding{176} Auction results (contract terms) are delivered to buyers and sellers.   
		During Stage II:  
		\textcircled{\scriptsize a} Volunteers and guests submit demand and bid;  
		\textcircled{\scriptsize b} Sellers with remaining resources submit their supply and ask pricing;  
		\textcircled{\scriptsize c} Auctioneer decides trading pairs and payment;  
		\textcircled{\scriptsize d}\textcircled{\scriptsize e} Auction results are delivered to buyers and sellers.}
	\label{fig_1}
\end{figure}

\subsection{Contribution}
As far as we know, this paper makes the first attempt among the existing literature to design a stagewise double auction mechanism with overbooking embedded, that facilitates resource trading between edge servers (as sellers) and mobile devices (as buyers), upon considering the dynamic and uncertain nature of edge networks. Our core principle \textit{is to maintain economic efficiency in resource trading, ensuring mutually beneficial outcomes for all parties involved, while fostering a truthful, individually rational, and risk-aware auction environment. Additionally, we aim to minimize overhead (e.g., delays) in decision-making, thereby facilitating a time-efficient auction process}\footnote{Fig. 2 presents simulation results illustrating the benefits of our proposed two-stage double auction with overbooking. While a conventional double auction may achieve slightly higher social welfare, it suffers from low time efficiency, rendering it impractical for dynamic, real-world networks.}. Key contributions are summarized below.

\noindent $\bullet$ Given the dynamic nature of edge networks, characterized by uncertain buyer participation and fluctuating resource supply of sellers, we propose a novel two-stage double auction methodology. This integrates a pre-double auction stage with overbooking (Stage I) and a real-time double auction stage as a backup (Stage II), with the goal of optimizing social welfare, ensuring time efficiency, while supporting essential auction properties.

\noindent $\bullet$ We first design an overbooking-driven pre-double auction (OPDAuction) implemented before future resource transactions. This incentivizes sellers and buyers to negotiate risk-aware, long-term trading contracts aimed at maximizing expected social welfare. Specifically, sellers are encouraged to overbook their services to account for potential buyer participation uncertainties. These contracts, comprising terms such as trading price, resource volume, and default clauses, can be directly implemented in practical transactions, thus ease the burden of seeking for auction results during subsequent stage.

\noindent $\bullet$ Given that uncertain nature can lead to unsatisfactory service quality, such as sellers failing to deliver promised resources due to overbooking or buyers not receiving the resources they require, we then design a real-time backup double auction (RBDAuction). This serves as an efficient complementary plan to further enhance social welfare.

\noindent $\bullet$ Theoretical analysis and experiments demonstrate that our two-stage double auction upholds essential properties. Furthermore, extensive evaluations showcase strong performance across multiple dimensions, such as social welfare, time efficiency in auction decision-making, individual rationality, and truthfulness.

\begin{figure}[h!]
	\centering
	\subfigure[] {\includegraphics[trim=0cm 0cm 0cm 0cm, clip, width=.241\textwidth]{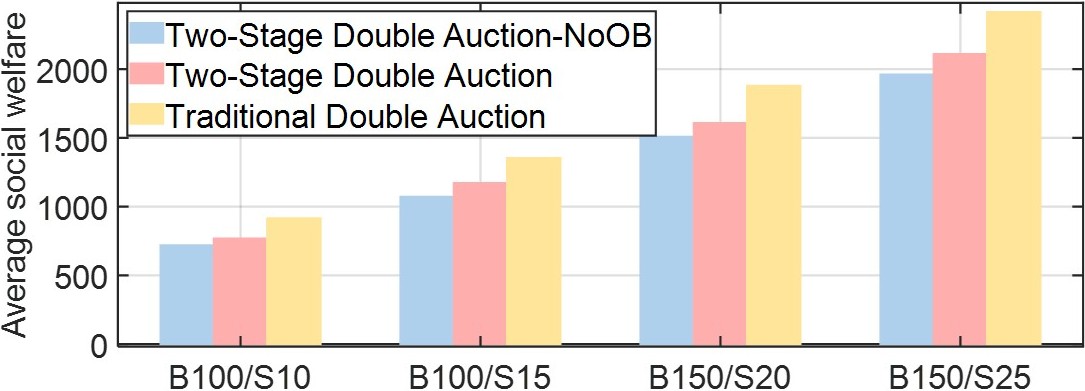}}
	\subfigure[] {\includegraphics[trim=0cm 0cm 0cm 0cm, clip, width=.241\textwidth]{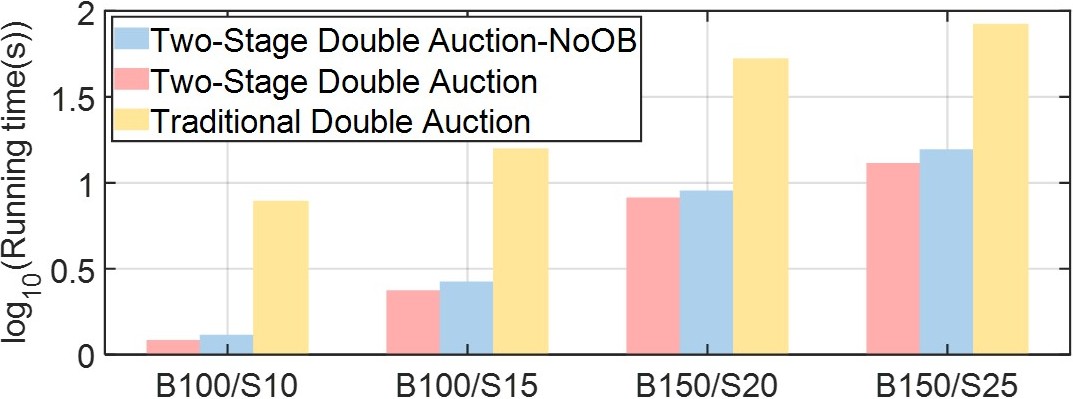}}
	\caption{Advantages brought by two-stage double auction and overbooking in terms of social welfare and time efficiency, via simulations.}
	\label{fig_E1}
\end{figure}

\color{black}
\section{Literature Review and Key Challenge}
\noindent
\textit{Investigation on Existing Literature. }\color{black}  Numerous studies have investigated auction-based resource provision mechanisms over edge networks. \textit{Li et al.} \cite{b6} developed a reinforcement learning framework for privacy-preserving resource trading in double auction markets, balancing privacy protection and resource allocation efficiency. \textit{Chen et al.} \cite{b7} proposed a blockchain-based iterative double auction mechanism for decentralized manufacturing resource sharing, achieving verifiable matching through smart contracts. \textit{Cui et al.} \cite{b8} focused on security-enhanced resource allocation, with double auction to coordinate task scheduling. \textit{Dai et al.} \cite{b9} designed a hybrid auction-game theoretic scheme for marine resource scheduling, combining Stackelberg game modeling with double auction to optimize task offloading between heterogeneous maritime devices.
 \textit{Zhang et al.} \cite{b10} addressed resource competition in mobile edge computing through a double auction model that coordinates base station allocation while mitigating service interruptions. \textit{Li et al.} \cite{b11} introduced a dual-identity double auction framework with reinforcement learning-based model selection to align incentives in personalized federated learning, tackling both data heterogeneity and participant reluctance. \textit{Qi et al.} \cite{b12} developed a group-buying coalition auction mechanism leveraging double auction principles to optimize UAV-assisted data collection through truthful sensor coalition formation. \textit{Du et al.} \cite{b13} integrated double auctions with blockchain-based smart contracts to enable verifiable resource leasing in distributed edge computing markets while ensuring social welfare maximization. \textit{Yin et al.} \cite{b14} employed multi-agent reinforcement learning within double auction frameworks to optimize strategic bidding in electricity markets with transmission congestion constraints. \textit{Zhang et al.} \cite{b15} combined Shapley value-based cooperation with double auctions to achieve adaptive computing resource transfers in wireless computing power networks.
 

\begin{table}[t!]
	\color{black}
	\centering
	\caption{Visualization of novelties of this paper through literature investigation\\(DP: Dynamic Pricing, SC: Smart Contract, PA: Pre-auction, OB: Overbooking, RC: Risk Control, T: Truthfulness, IR: Individual Rationality, BB: Budget Balance)}
	\vspace{-0.2cm}
	\label{tab:comparison}
	\scriptsize
	\resizebox{0.5\textwidth}{!}{%
	\begin{tabular}{|l|c|c|c|c|c|c|c|c|}
		\hline
		\multicolumn{1}{|c|}{\multirow{2}{*}{\textbf{Reference}}} & \multicolumn{2}{c|}{\textbf{Auction Attributes}} & \multicolumn{3}{c|}{\textbf{Advanced Functions}} & \multicolumn{3}{c|}{\textbf{Key Properties}} \\
		\cline{2-9}
		& DP & SC & PA & OB & RC & T & IR & BB \\
		\hline
		\cite{b6},\cite{b11} & $\checkmark$ &  &  &  &  & $\checkmark$ & $\checkmark$ &  \\
		\cite{b7} & $\checkmark$ & $\checkmark$ &  &  &  & $\checkmark$ & $\checkmark$ & $\checkmark$ \\
		\cite{b8},\cite{b13},\cite{b15} &  & $\checkmark$ &  &  &  & $\checkmark$ & $\checkmark$ & $\checkmark$ \\
		\cite{b9},\cite{b14} & $\checkmark$ &  &  &  &  & $\checkmark$ & $\checkmark$ &  \\
		\cite{b10} & $\checkmark$ &  &  &  &  & $\checkmark$ & $\checkmark$ &  \\
		\cite{b12} &  &  &  &  &  & $\checkmark$ & $\checkmark$ &  \\
		\hline
		our work & $\checkmark$ & $\checkmark$ & $\checkmark$ & $\checkmark$ & $\checkmark$ & $\checkmark$ & $\checkmark$ & $\checkmark$ \\
		\hline
	\end{tabular}
}
	\vspace{-0.2cm}
\end{table}

\color{black}
Although prior works have made notable contributions, they mainly focus on on-site auctions, where decisions rely solely on current network or market states. Such processes incur high overhead, as identifying winning buyer-seller pairs and clearing prices often requires multiple negotiation rounds, causing delays and extra energy costs, especially for battery-constrained mobile devices. In addition, most studies assume fixed resource availability and stable demand, assumptions rarely valid in dynamic edge networks (see Table~1). These limitations motivate the design of time-efficient auctions that adapt to network dynamics. Pre-auction decision-making has shown promise in our early studies \cite{b17,b20}, while resource overbooking further improves trading efficiency \cite{b16}. However, inaccurate estimation of uncertain factors, such as fluctuating supply, demand, or wireless conditions, may still yield suboptimal outcomes, including unreasonable pricing and application failures, which are also of our interest.

\noindent
\textit{Key Challenges.} Driven by the above, this paper explores a novel stagewise double auction methodology over dynamic and uncertain edge networks, which combines pre-decisions and real-time decisions into a two-stage process for resource provision. To be implementable, we focus on addressing the following key challenges that arise specifically in our problem setting, namely, a dynamic edge environment characterized by uncertain buyer participation, fluctuating seller supply, and the strategic integration of overbooking. These challenges are intrinsic to our design goals and differ from those commonly encountered in conventional auction frameworks\footnote{\textcolor{black}{For a discussion on the potential technical challenges that may arise when extending our mechanism to incorporate additional non-standard auction properties, please refer to Appendix H.}}:

\noindent $\bullet$ \textit{Double auctions typically have inherent properties}, such as truthfulness and individual rationality\footnote{\textcolor{black}{Following established conventions in auction literature (e.g., \cite{R1}, \cite{R2}, \cite{R3}, \cite{R4}, \cite{R5}), we focus on the core properties of truthfulness, individual rationality, and budget balance, which are widely regarded as standard requirements for double auctions. Collusion resistance, though theoretically relevant, is not treated as a standard property in these foundational works and is therefore beyond the scope of this paper; its investigation is left for future research.}}. Maintaining these properties while designing the two-stage double auction presents a significant challenge. For example, it requires careful consideration on factors such as price determination during the pre-auction process.

\noindent $\bullet$ \textit{An unique aspect of our studied auction is to determine an appropriate overbooking rate}. A low overbooking rate can limit the effectiveness of pre-made decisions in managing dynamic resource demands, while a large one can result in insufficient resource supply. Therefore, integrating overbooking into a double auction and determining proper rate remains a noteworthy challenge.

\noindent $\bullet$ \textit{The adaptation and scalability of a double auction in dynamic and uncertain network environments are crucial}. The inherently dynamic and unpredictable nature of real-world edge networks necessitates designing double auction mechanisms that are highly adaptable and scalable to effectively manage various uncertain factors.
\color{black}

\section{Overview}
We are interested in a dynamic resource trading market (see Fig. 1) over edge networks that involves three key parties: \textit{(i)} multiple resource buyers denoted by $\bm{\mathcal{B}} = \left\{b_1, \ldots, b_n, \ldots, b_{|\bm{\mathcal{B}}|}\right\}$, where each $b_n \in \bm{\mathcal{B}}$ periodically generates computation-intensive tasks~\cite{b20-1}, requiring edge resources for further computing; \textit{(ii)} multiple resource sellers (edge servers) represented by $\bm{\mathcal{S}} = \left\{s_1, \ldots, s_m, \ldots, s_{|\bm{\mathcal{S}}|}\right\}$, where each $s_m \in \bm{\mathcal{S}}$ owns a certain amount of resources that can serve buyers for certain fee; and \textit{(iii)} a neutral auction platform, playing the role of a trustworthy auctioneer that coordinates resource trading among buyers and sellers. Note that in our model, resources are quantized, e.g., resource blocks (RBs), for analytical simplicity~\cite{b20-2}.
A \textit{transaction} in our considered market refers to a trading event, in which \textit{a buyer can transfer its task to a seller for edge processing while paying for the obtained services, while a seller can serve multiple buyers simultaneously according to its resource supply.} 
To capture the uncertainty and dynamics of resource supply and demand in a trading market, we identify two key factors. First, buyers’ selfishness, willingness, and mobility can lead to \textit{fluctuating resource demand}; for example, a buyer may be absent if moving out of a seller’s coverage. Second, sellers’ resource supply can vary over time due to factors such as local workloads, affecting the \textit{availability of resources} for sale.

To ensure cost-effective and mutually beneficial resource provisioning under dynamic conditions, we propose a two-stage double auction: \textit{(i) Stage I}, the \textbf{o}verbooking-driven \textbf{p}re-\textbf{d}ouble \textbf{auction} (OPDAuction), enables resources to be traded ahead of actual transactions. Buyers and sellers negotiate long-term contracts based on historical supply-demand data while managing potential risks. Each winning buyer becomes a \textit{member} of the corresponding seller, and overbooking allows sellers to contract with more buyers than their nominal capacity, with the market setting an appropriate overbooking rate to protect profits. This pre-auction eliminates the need for negotiating transaction details (e.g., prices), reducing overhead such as auction delays. To address inaccuracies when historical estimates diverge from real conditions, \textit{(ii) Stage II}, the \textbf{r}eal-time \textbf{b}ackup \textbf{d}ouble \textbf{auction} (RBDAuction), is triggered during transactions. RBDAuction lets non-member buyers (``\textit{guests}'') and \textit{volunteers} (members lacking promised resources) compete for remaining capacity using up-to-date information, improving real-time efficiency and system-level performance (e.g., social welfare).

\subsection{Key Definition}
In the following, we will cover the unique definitions in our auction design.

\begin{Defn}
	\textit{(Member and Guest):} Buyers who have signed long-term contracts with sellers in Stage I are called members; while those who without contracts and will be engaged in Stage II are named as guests.
\end{Defn}
\begin{Defn}
	\textit{(Volunteer):} A volunteer is a member who participates in a transaction but fails to obtain the required resources stipulated in the pre-signed contract. As compensation, each volunteer receives a monetary incentive from the corresponding seller, and will compete with guests for available resources.
\end{Defn}
The aforementioned \textit{member, guest,} and \textit{volunteer} constitute the most critical personas regarding buyers in the market of our interest. Following this, we next detail crucial auction properties.
\begin{Defn}
	\textit{(Individual rationality):} An auction is individual rational if the expense of any winning buyer (paid to a seller) does not exceed its bid; while the income of any winning seller can cover its asked payment.
\end{Defn}
\begin{Defn}
	\textit{(Truthfulness):} An auction is truthful (or incentive compatible) when all the buyers and sellers declare the bids and asked payments same as their true (private) valuations. Namely, participants will not misreport their information.
\end{Defn}
\begin{Defn}
	\textit{(Budget-balance):} An auction achieves budget-balance if the overall income of the auctioneer from winning buyers can be larger than or at least equal to the total expense that it pays to winning sellers.
\end{Defn}
\begin{Defn}
	\textit{(Computational efficiency):} A mechanism is computationally efficient if it runs in polynomial time.
\end{Defn}

Our two-stage double auction aims to support the above key properties from unique perspectives (which are different from conventional auctions, please see Section 3.2.2). Besides, as a distinctive feature of our auction design, we then introduce the concept of overbooking rate, representing an innovative consideration to enhance resource provisioning in dynamic trading markets.
\begin{Defn}
	\textit{(Overbooking rate):} As the actual available resources (denoted by $R_m$ for analyzing simplicity) of seller $s_m$ is challenging to be evaluated during stage I, we consider its expected value denoted as $\overline{R_m}$ to quantify the theoretical resource supply. Accordingly, the overbooking rate of $s_m$ (denoted by $\lambda_m$) indicates the ratio of the amounts of booked resources t (e.g., the overall contractual resources) to its expected resource supply, as calculated by $\lambda_m=\max\left\{0,\frac{t-\overline{R_m}}{\overline{R_m}}\right\}$\footnote{In our designed market, overbooking represents a fundamental policy where all the sellers use the same overbooking rate to maintain the market law. Different overbooking rates among various sellers and their optimization will be studied in our future work.}.
\end{Defn}

\section{Stage I. Overbooking-driven Pre-Double Auction (OPDAuction)}
We first delve into the \textbf{o}verbooking-enabled \textbf{p}re-\textbf{d}ouble \textbf{auction} (OPDAuction) scheme that facilitates early agreements on resource trading between buyers and sellers, prior to future transactions. Specifically, OPDAuction focuses on identifying suitable members for sellers and establishing their long-term contracts. In the following, we begin by detailing key models.

\subsection{Key Modeling}
\subsubsection{Modeling of long-term contract}
The long-term contract $ \mathbb{C}_{m,n}^\mathsf{b\leftrightarrow s}$ between buyer $ b_n $ and seller $ s_m $ in stage I is modeled by a 5-tuple $ \mathbb{C}_{m,n}^\mathsf{b\leftrightarrow s}=\left\{t_n,p_n^\mathsf{b},r_m^\mathsf{s},q_{m,n}^\mathsf{b\to s},q_{m,n}^\mathsf{s\to b} \right\} $, where $t_n$ represents the amount of trading resources (quantized by RBs for analytical simplicity \cite{b20-23}); $ p_n^\mathsf{b} $ indicates the unit payment (per RB) from $ b_n $ to the contractual seller; $ r_m^\mathsf{s} $ is the unit reward (per RB) of $ s_m $ for offering services; $ q_{m,n}^\mathsf{b\to s} $ and $ q_{m,n}^\mathsf{s\to b} $ describe the default clause, referring to the unit penalty (per RB) when $ b_n $, and $ s_m $ breaks the contract, respectively. For example, when buyer $ b_n $ is absent from a transaction and will not purchase the preserved resources, it has to pay a penalty (e.g., $ q_{m,n}^\mathsf{b\to s} $) to seller $ s_m $. Similarly, the limited resource supply of sellers can also result in failures to afford promissory computing services, causing possible compensations (e.g., $ q_{m,n}^\mathsf{s\to b} $) to buyers. Our proposed OPDAuction in Stage I makes efforts to maximize the expectation of social welfare, e.g., the overall expected profit of all the three parties (see Section 3.2.2), by mapping buyers to feasible sellers while optimizing their contract terms as well as a proper overbooking rate.

\subsubsection{Modeling of buyers}

Considering multiple resource buyers collected in set $ \bm{\mathcal{B}} $, where each  $b_n\in\bm{\mathcal{B}}$ is described by a 4-tuple $b_n=\left\{t_n,v_{m,n},bid_{m,n},\alpha_n \right\}$. Specifically, $t_n$ denotes its required resources; $v_{m,n}$ represents the unit valuation (per RB) of $b_n$ as benefited from enjoying computing service offered by $s_m$, which varies across different sellers due to heterogeneous attributes of edge servers, e.g., hardware settings. Specifically, $v_{m,n}$ \textcolor{black}{is a private information of} $ b_n $, unknown to \textcolor{black}{sellers or the auctioneer}. Besides, $bid_{m,n}$ refers to the unit price (per RB) that $ b_n $ is willing to afford to $ s_m $.

To capture the \textcolor{black}{dynamic and} random nature of edge networks, we model the uncertain buyers' participation through independent but non-identical Bernoulli trials. Specifically, $\alpha_n$ obeys $\alpha_n\sim \text{\textbf{Ber}}\left\{(1,0),(\mathbbm{a}_n,1-\mathbbm{a}_n)\right\}$, where $\alpha_n=1$ (with probability $\mathbbm{a}_n$) indicates $b_n$'s attendance. Such an assumption satisfies two critical properties: \textit{(i)} \emph{statistical independence} among buyers aligns with the auction process across multiple sellers' decision-making; \textit{(ii)} \emph{Poisson convergence} emerges when considering collective behaviors -- the total number of active buyers $\sum_{n}\alpha_n$ asymptotically follows a Poisson distribution under mild conditions\footnote{Although our model preserves
	heterogeneity through distinct an, this theoretical foundation
	justifies using Bernoulli trials to approximate Poisson distributed
	mass events in large-scale systems, while maintaining
	individual-level controllability for behavior diversity \cite{b20-3}.} \cite{b20-3}.
To facilitate our analysis, let $x_{m,n}$ represent the assignment between buyers and sellers, where $x_{m,n}=1$ indicates that buyer $ b_n $ wins the service of seller $ s_m $ in OPDAuction and thus becomes a member; while $x_{m,n}=0$, otherwise. We then use $\bm{X}=\left\{x_{m,n} | b_n \in \bm{\mathcal{B}}, s_m \in \bm{\mathcal{S}}\right\}$  to denote the profile of $x_{m,n}$ for notational simplicity. 
\textcolor{black}{This binary assignment is linked to the participation indicator $\alpha_n$: a buyer can only be matched ($x_{m,n}=1$) if they are active ($\alpha_n=1$), thereby preventing inactive buyers from obtaining resources}\footnote{\textcolor{black}{Notably, this modeling choice leverages the Bernoulli–Poisson relationship: it preserves buyer heterogeneity via individual $\mathbbm{a}_n$ parameters while enabling analytical tractability through Poisson approximation.}
This balances mathematical rigor with practical considerations for edge computing markets where participants exhibit \emph{sporadic yet correlated} engagement patterns.}.

\noindent\textbf{1) Utility and expected utility of $b_n$:} The utility of a buyer $b_n\in \bm{\mathcal{B}}$ is denoted by $U_n^\mathsf{B} \left(t_n,p_n^\mathsf{b},q_{m,n}^\mathsf{b\to s},q_{m,n}^\mathsf{s\to b} \right)$, which comprises three key aspects: \textit{(i)} the profit that $b_n$ receives from enjoying computing service offered by $s_m$; \textit{(ii)} The compensation that $b_n$ obtains from the contractual seller $s_m$ for being selected as a volunteer; and \textit{(iii)} The penalty that $b_n$ pays to the seller when it breaks the contract (e.g., $\alpha_n=0$). Correspondingly, the utility of $b_n$ can be calculated as 
\begin{equation}\label{key}
	\begin{aligned}
		&U_n^\mathsf{B} \left(t_n,p_n^\mathsf{b},q_{m,n}^\mathsf{b\to s},q_{m,n}^\mathsf{s\to b} \right)=
		\\&\sum_{s_m\in \bm{\mathcal{S}}}x_{m,n}t_n\alpha_n\left(M_n\left(v_{m,n}-p_n^\mathsf{b}\right)+(1-M_n) q_{m,n}^\mathsf{s\to b}\right)
		\\&-\sum_{s_m\in \bm{\mathcal{S}}}x_{m,n}t_n(1-\alpha_n)q_{m,n}^\mathsf{b\to s},
	\end{aligned}
\end{equation}
where $ p_n^\mathsf{b} $ is the unit payment of $ b_n $; and $ M_n $ denotes a binary variable for volunteers. Specifically, we have $ M_n= 1 $ at probability $ 1-\mathbb{P}_n $, describing that buyer $ b_n $ attends a practical transaction and enjoys computing service from $ s_m $; while $ M_n= 0 $ at probability $ \mathbb{P}_n $ depicts that $ b_n $ is selected as a volunteer (derivation of $\mathbb{P}_n$ is given by (5)). Let $\mathbf{M}=\left\{M_1,...,M_n,...,M_{|\bm{\mathcal{B}}|}\right\} $ be the profile of $ M_n $ for notational simplicity. 
Since OPDAuction is implemented before practical transactions, obtaining the practical value of $U_n^\mathsf{B}$ is challenging due to the uncertain values of $ \alpha_n $, we accordingly calculate its expectation as 
\begin{equation}\label{key}
	\begin{aligned}
		&\overline{U_n^\mathsf{B}} \left(t_n,p_n^\mathsf{b},q_{m,n}^\mathsf{b\to s},q_{m,n}^\mathsf{s\to b} \right)=
		\\&\sum_{s_m\in \bm{\mathcal{S}}}x_{m,n}t_n\mathbbm{a}_n\left(\left(1-\mathbb{P}_n\right)\left(v_{m,n}-p_n^\mathsf{b}\right)+\mathbb{P}_n q_{m,n}^\mathsf{s\to b}\right)
		\\&-\sum_{s_m\in \bm{\mathcal{S}}}x_{m,n}t_n(1-\mathbbm{a}_n)q_{m,n}^\mathsf{b\to s}.
	\end{aligned}
\end{equation}
\noindent\textbf{2) Risk analysis of $b_n$:} Given that our considered trading market encompasses both profits (e.g., positive utility for buyers) and risks, we evaluate two key risks that each buyer may encounter during a transaction. First, we define the risk of a member $ b_n $ (not a volunteer) experiencing a non-positive utility, abbreviated as ``BRisk'', as the probability that $ b_n $'s utility falls too close to or below its acceptable minimum $ U^\mathsf{min} $:
\begin{equation}\label{key}
	\begin{aligned}
		&\mathcal{R}_{n}^\mathsf{BRisk} \left(t_n,p_n^\mathsf{b},q_{m,n}^\mathsf{b\to s} \right)
		\\&=\text{Pr}\left(\frac{t_n\left(\alpha_n\left(v_{m,n}-p_n^\mathsf{b}\right)-\left(1-\alpha_n\right)q_{m,n}^\mathsf{b\to s}\right)}{U^\mathsf{min}}\leq\xi_1\right)
		\\&=\begin{cases}
			0,\mathbbm{c}_1<0\\
			1-\mathbbm{a}_n,0\leq\mathbbm{c}_1<1\\
			1,1\leq\mathbbm{c}_1\\
		\end{cases},
	\end{aligned}
\end{equation}
where $ U^\mathsf{min} $ is a positive value approaching to zero, $ \xi_1 $ denotes a positive threshold coefficient, and $ \mathbbm{c}_1=\frac{\frac{U^\mathsf{min}\xi_1}{t_n}+q_{m,n}^\mathsf{b\to s}}{v_{m,n}+q_{m,n}^\mathsf{b\to s}-p_n^\mathsf{b}} $ represents a constant for notational simplicity. Derivation of $ \mathcal{R}_n^\mathsf{BRisk} $ is detailed in Appendix D.
Then, the implementation of resource overbooking may force some members to take on volunteer roles and forgo access to edge services. Although these buyers can be compensated, this arrangement might lead to a less favorable trading experience for them. Therefore, we also consider the risk of a member being selected as a volunteer, referred to as ``VRisk'', during an actual transaction (see (4)). This risk is defined as the probability that a member $ b_n $ participates in a transaction but is chosen as a volunteer owing to the insufficient resources of its contractual seller.
\begin{equation}\label{key}
	\begin{split}
		& \mathcal{R}_n^\mathsf{VRisk}= \\
		&\text{Pr}\Biggl(t_n-\sum_{s_m \in \bm{\mathcal{S}}} x_{m, n}\left(\mathbbm{d}_m\mathbbm{r}_m-\sum_{b_{n'}\in \bm{\mathcal{B}}^-}\mathbbm{a}_{n'}x_{m,{n'}}t_{n'}\right)\\ 
		&\geqslant 0,\alpha_n = 1\Biggr)=\mathbbm{a}_n\mathbb{P}_n,
	\end{split}
\end{equation}
where $ \bm{\mathcal{B}}^-=\bm{\mathcal{B}}\setminus{\{b_n\}} $ denotes the set of buyers excluding $ b_n $. Moreover, we have $\mathbb{P}_n$ as the following (5).
\begin{equation}\label{key}
	\small{
	\begin{aligned}
		&\mathbb{P}_n = \\
		&1-\frac{\sum_{s_m\in \bm{\mathcal{S}}}x_{m,n}\left[\mathbbm{d}_m\mathbbm{r}_m(1-\mathbbm{d}_m)+\sum_{b_{n'}\in \bm{\mathcal{B}}^-}\mathbbm{a}_{n'}x_{m,{n'}}(t_{n'})^2\right]}{\left(t_n - \sum_{s_m\in \bm{\mathcal{S}}}x_{m,n}\left(\mathbbm{d}_m\mathbbm{r}_m-\sum_{b_{n'}\in \bm{\mathcal{B}}^-}\mathbbm{a}_{n'}x_{m,{n'}}t_{n'}\right)\right)^2}.
	\end{aligned}
}
\end{equation}

Apparently, a large value of $ \mathcal{R}_n^\mathsf{VRisk} $ indicates an increased likelihood of the buyer $ b_n $ chosen as a volunteer, which negatively impacts their trading experience. Parameters $  \mathbbm{d}_m $ and $ \mathbbm{r}_m $ associated with $ s_m $ will be elaborated in the following section, and the detailed derivation of $ \mathcal{R}_n^\mathsf{VRisk} $ is provided in Appendix D.
\subsubsection{Modeling of sellers}
A seller $ s_m\in\bm{\mathcal{S}} $ is denoted by a triple $ s_m=\left\{c_m,{ask}_m,R_m\right\} $, where $ c_m $ represents the unit cost (per RB) for processing buyers' tasks, e.g., consumed energy and hardware cost, \textcolor{black}{which is private information for} $ s_m $; while $ {ask}_m $ denotes the unit price (per RB) asked by $ s_m $ for providing services. 
Given that the actual resource supply can vary over time, we use $ R_m $ to quantify the number of available resources that $ s_m $ can offer during each transaction. Without loss of generality, $ R_m $ is modeled as a random variable obeying a Binomial distribution denoted by $ R_m\sim{\text{\textbf{Bin}}\left(\mathbbm{d}_m,\mathbbm{r}_m\right)} $ with parameters $ \mathbbm{d}_m $ and $ \mathbbm{r}_m $, where $ \text{Pr}\left\{R_m=k\right\}=\text{C}_{\mathbbm{d}_m}^k\mathbbm{r}_m^k\left(1-\mathbbm{r}_m\right)^{\mathsf{\mathbbm{d}_m-}k} $. Specifically, $ \mathbbm{r}_m $ represents the probability that one RB is available, while $ 1-\mathbbm{r}_m $, otherwise. 

\noindent\textbf{1) Utility and expected Utility of $s_m$:} The utility of seller $ s_m\in\bm{\mathcal{S}} $ involves two key aspects: \textit{(i)} the income obtained from members (not volunteers), e.g., the payment from attendant members and the penalty of absent members; and \textit{(ii)} the compensation from $ s_m $ to volunteers. Accordingly, the utility of $ s_m $ is expressed as:
\begin{equation}\label{key}
	\begin{aligned}
		&U_m^\mathsf{S} \left(r_m^\mathsf{s},q_{m,n}^\mathsf{b\to s},q_{m,n}^\mathsf{s\to b} \right)=
		\\&\sum_{b_n\in \bm{\mathcal{B}}}x_{m,n}t_n\alpha_n\left(M_n\left(r_m^\mathsf{s}-c_m\right)-(1-M_n) q_{m,n}^\mathsf{s\to b}\right)
		\\&+\sum_{b_n\in \bm{\mathcal{B}}}x_{m,n}t_n(1-\alpha_n)q_{m,n}^\mathsf{b\to s},
	\end{aligned}
\end{equation}
where $ r_m^\mathsf{s} $ denotes the unit reward (per RB) that $ s_m $ obtains for offering services. Similar to buyers, since it is challenging to ascertain the practical value of $ U_m^\mathsf{S} $ during the first stage, \textcolor{black}{we compute} its expected value as
\begin{equation}\label{key}
	\begin{aligned}
		&\overline{U_m^\mathsf{S}} \left(r_m^\mathsf{s},q_{m,n}^\mathsf{b\to s},q_{m,n}^\mathsf{s\to b} \right)=
		\\&\sum_{b_n\in \bm{\mathcal{B}}}x_{m,n}t_n\mathbbm{a}_n\left(\left(1-\mathbb{P}_n\right)\left(r_m^\mathsf{s}-c_m\right)-\mathbb{P}_n q_{m,n}^\mathsf{s\to b}\right)
		\\&+\sum_{b_n\in \bm{\mathcal{B}}}x_{m,n}t_n(1-\mathbbm{a}_n)q_{m,n}^\mathsf{b\to s}.
	\end{aligned}
\end{equation}
\noindent\textbf{2) Risk analysis of $s_m$:} Our OPDAuction is also designed to be risk-aware for sellers. Accordingly, we define the risk of $s_m$ receiving an unsatisfactory utility (abbreviated to ``SRisk'') as the probability that $ U_m^\mathsf{S} \left(r_m^\mathsf{s},q_{m,n}^\mathsf{b\to s},q_{m,n}^\mathsf{s\to b} \right) $ approaches too close or falls below its expectation, shown as 
\begin{equation}\label{key}
	\begin{aligned}
		&\mathcal{R}_m^\mathsf{SRisk} \left(r_m^\mathsf{s},q_{m,n}^\mathsf{b\to s},q_{m,n}^\mathsf{s\to b} \right)
		\\&=\text{Pr}\left(\frac{U_m^\mathsf{S} \left(r_m^\mathsf{s},q_{m,n}^\mathsf{b\to s},q_{m,n}^\mathsf{s\to b} \right)}{\overline{U_m^\mathsf{S}} \left(r_m^\mathsf{s},q_{m,n}^\mathsf{b\to s},q_{m,n}^\mathsf{s\to b} \right)}\leq\xi_2\right)
		\\&=\frac{\sum_{n=1}^\mathsf{|\bm{\mathcal{B}}|} (x_{m,n})^2 (t_n)^2 \mathbb{P}_n(1-\mathbb{P}_n)\left(M_n \mathbbm{c}_2-\mathbbm{c}_3\right)^2}{\left(\sum_{n=1}^\mathsf{|\bm{\mathcal{B}}|} x_{m, n} t_n \mathbb{P}_n\left(M_n \mathbbm{c}_2-\mathbbm{c}_3\right) - \mathbbm{c}_4\right)^2},
	\end{aligned}
\end{equation}
where $ \xi_2 $ represents a positive threshold coefficient approaching to 1. Also, we use constants $ \mathbbm{c}_2=\left(r_m^\mathsf{s}-c_m+q_{m,n}^\mathsf{s\rightarrow b}\right) $, $ \mathbbm{c}_3=q_{m,n}^\mathsf{s\rightarrow b}+q_{m,n}^\mathsf{b\rightarrow s} $, and $ \mathbbm{c}_4=\overline{U_m^\mathsf{S}}\xi_2-\sum_{b_n\in\bm{\mathcal{B}}}{x_{m,n}t_nq_{m,n}^\mathsf{b\rightarrow s}} $ to simplify the complicated calculations associated with (8), for notational simplicity. Derivation of $ \mathcal{R}_m^\mathsf{SRisk} $ is detailed in Appendix D.

\subsubsection{Modeling of the auctioneer}

An auctioneer plays a pivotal role in coordinating the auction process, including collecting bids/asks, determining winning sellers/members, designing long-term contracts, and optimizing the overbooking rate. The utility of the auctioneer in Stage I can generally be defined as the difference between total income received of sellers and total expenses paid from buyers, as expressed by:
\begin{equation}\label{key}
	\begin{aligned}
		&U^\mathsf{P}\left(p_n^\mathsf{b},r_m^\mathsf{s}\right)=
		\\&\sum_{b_n\in \bm{\mathcal{B}}}\sum_{s_m\in \bm{\mathcal{S}}}x_{m,n}t_n\left(\alpha_nM_n+\mu\left(1-\alpha_n\right)\right)\left(p_n^\mathsf{b}-r_m^\mathsf{s}\right),
	\end{aligned}
\end{equation}
where the $\mu$ in (9) represents a penalty factor, intentionally introduced for buyers and sellers. Regarding the penalties specified in long-term contracts, we have the following considerations: We set a penalty factor, $\mu$, whose value ranges between 0 and 1. Similarly, we further show the expected utility of the auctioneer as 
\begin{equation}\label{key}
	\begin{aligned}
		&\overline{U^\mathsf{P}}\left(p_n^\mathsf{b},r_m^\mathsf{s}\right)=
		\\&\sum_{b_n\in \bm{\mathcal{B}}}\sum_{s_m\in \bm{\mathcal{S}}}x_{m,n}t_n\left(\mathbbm{a}_n\left(1-\mathbb{P}_n\right)+1/2\left(1-\mathbbm{a}_n\right)\right)\left(p_n^\mathsf{b}-r_m^\mathsf{s}\right).
	\end{aligned}
\end{equation}

In subsequent sections, our algorithm design will systematically ensure that the auctioneer’s income remains nonnegative (also see Appendix C). Therefore, we will not address the associated risks for the auctioneer in detail.

\subsection{Design of OPDAuction}
This section details problem formulation and solution design of OPDAuction in Stage I.
\subsubsection{Problem formulation}
A fundamental and critical goal within an auction market is to maximize its social welfare. In our model, social welfare is defined by the collective utilities of three parties (e.g., buyers, sellers, auctioneer), as given by:
\begin{equation}\label{key}
	\begin{aligned}
		&U^\mathsf{SW}=\sum_{b_n\in \bm{\mathcal{B}}}U_n^\mathsf{B}+\sum_{s_m\in \bm{\mathcal{S}}}U_m^\mathsf{S}+U^\mathsf{P}
		\\&=\sum_{b_n\in \bm{\mathcal{B}}}\sum_{s_m\in \bm{\mathcal{S}}}x_{m,n}\alpha_nM_nt_n\left(v_{m,n}-c_m\right).
	\end{aligned}
\end{equation}

Interestingly, since our proposed OPDAuction implements an unique auction procedure prior to actual transactions, our emphasis during this stage relies on the expectation of social welfare, as illustrated by (12).
\begin{equation}\label{key}
	\begin{aligned}
		\overline{U^\mathsf{SW}}=\sum_{b_n\in \bm{\mathcal{B}}}\sum_{s_m\in \bm{\mathcal{S}}}x_{m,n}\mathbbm{a}_nt_n\left(1-\mathbb{P}_n\right)\left(v_{m,n}-c_m\right).
	\end{aligned}
\end{equation}

In this context, the primary goal of OPDAuction is to \textit{identify the winning pairs of sellers and buyers (i.e., determining the members for each seller, which involves figuring out all the feasible $ x_{m,n} $), establish their long-term contracts (e.g., $ \mathbb{C}_{m,n}^\mathsf{b\leftrightarrow s} $), and determine the appropriate overbooking rate (e.g., $ \lambda_m $).} These issues are encapsulated in the following optimization problem $ \bm{\mathcal{F}_1} $, given by (13).
\begin{equation}\label{key}
	\begin{aligned}
		&\bm{\mathcal{F}_1}:\mathop{\text{argmax}}\limits_{x_{m,n},\mathbb{C}_{m,n}^\mathsf{b\leftrightarrow s},\lambda_m}\overline{U^\mathsf{SW}}
	\end{aligned}
\end{equation}
\begin{subequations}{
		\begin{align}
			\text{s.t.}~~~~&\mathcal{R}_n^\mathsf{BRisk}\le\xi^\mathsf{M},\forall b_n\in\bm{\mathcal{B}},\tag{C1}\\
			&\mathcal{R}_n^\mathsf{VRisk}\le\xi^\mathsf{V},\forall b_n\in\bm{\mathcal{B}},\tag{C2}\\
			&\mathcal{R}_m^\mathsf{SRisk}\le\xi^\mathsf{S},\forall s_m\in\bm{\mathcal{S}},\tag{C3}\\
			&\sum_{b_n\in\bm{\mathcal{B}}}{x_{m,n}t_n}\le\left(1+\lambda_m\right)\overline{R_m},\ \forall\ s_m\in\bm{\mathcal{S}},\tag{C4}\\
			&\sum_{s_m\in\bm{\mathcal{S}}} x_{m,n}\le1,\ \forall\ b_n\in\bm{\mathcal{B}},\tag{C5}\\
			&x_{m,n}\in\left\{0,1\right\},\forall\ b_n\in\bm{\mathcal{B}},\forall\ s_m\in\bm{\mathcal{S}}.\tag{C6}
	\end{align}}
\end{subequations}
\setcounter{equation}{13}
Here, $ \xi^\mathsf{M} $, $ \xi^\mathsf{V} $, and $ \xi^\mathsf{S}$ are positive threshold coefficients for risk assessment, while constraints (C1)–(C3) manage risks for both buyers and sellers. Constraint (C4) ensures that overbooked resources do not exceed anticipated supply, (C5) maps each buyer to only one seller, and (C6) enforces the binary nature of $ x_{m,n} $. Problem $ \bm{\mathcal{F}_1} $ is a mixed integer linear program (MILP) with discrete variables (e.g., $ x{m,n} $) and continuous variables (e.g., elements of $ \mathbb{C}_{m,n}^\mathsf{b\leftrightarrow s} $ and $ \lambda_m $), and is generally NP-hard \cite{b21}. The probabilistic risk constraints (C1)–(C3) further increase complexity, making $ \bm{\mathcal{F}_1} $ challenging to solve. To address this, we propose OPDAuction to identify feasible buyer-seller matchings—determining members for each seller and their contract terms (e.g., payments and penalties). OPDAuction also adjusts the overbooking rate to mitigate potential losses and improve expected social welfare under market dynamics.

\subsubsection{Solution design}
The proposed OPDAuction in Stage~I is sealed-bid, private, and collusion-free, meaning all sellers and buyers submit their asks and bids simultaneously to the auctioneer. It involves three key subproblems: $\bm{\mathcal{F}_{1a}}$ for member selection (winning seller-buyer determination), $\bm{\mathcal{F}_{1b}}$ for designing long-term contracts, and $\bm{\mathcal{F}_{1c}}$ for optimizing the overbooking rate to enhance resource provision performance. Specifically, $\bm{\mathcal{F}_{1a}}$ and $\bm{\mathcal{F}_{1b}}$ find feasible buyer-seller assignments and establish contract terms (e.g., payments and penalties), while $\bm{\mathcal{F}_{1c}}$ accounts for dynamic and uncertain resource demand/supply, allowing sellers to reserve more resources for members than their theoretical capacity. For clarity, Fig.~3 provides an overview of OPDAuction in the considered market. Strategies for the three subproblems are detailed in Algorithms~1–3: Algorithms~1 and 2 address $\bm{\mathcal{F}_{1a}}$ and $\bm{\mathcal{F}_{1b}}$, respectively, while Algorithm~3 tests various overbooking rates to finalize feasible long-term contracts for $\bm{\mathcal{F}_{1c}}$. In the following, we show how we address these sub-problems.
\begin{figure}[b!]
	\centering
	\includegraphics[trim=0cm 0cm 0cm 0cm, clip, width=\columnwidth]{Algorithm_1.pdf}
	\caption{Flow diagram of our proposed OPDAuction.}
	\label{fig_1}
\end{figure}

\noindent $\bullet$ \textbf{Member Determination (abbreviated as OPDAuction-MemberD, Algorithm 1):} OPDAuction-MemberD establishes seller-buyer mappings through three coordinated phases. To facilitate better reading, more details can be found in Appendix A. 

\noindent\textit{Phase 1: Participant sorting.} Buyers are sorted in $\mathcal{L}_b$ by non-ascending average bid $bid^\mathsf{Avg}_n = \frac{1}{|\mathcal{S}|}\sum bid_{m,n}$ (see Appendix A, Eq. 18), while sellers are ordered in $\mathcal{L}_s$ by non-descending $ask_m$ (see Appendix A, Eq. 21). This creates bid-ask aligned sequences:
\begin{equation}
	\mathcal{L}_b: bid^{\mathsf{List}}_1 \succ \cdots \succ bid^{\mathsf{List}}_{|\mathcal{B}|}, \quad \mathcal{L}_s: ask^{\mathsf{List}}_1 \prec \cdots \prec ask^{\mathsf{List}}_{|\mathcal{S}|},
\end{equation}

\noindent\textit{Phase 2: Key index identification.} We determine pivotal indices $(k_b^*,k_s^*)$ that should satisfying the following (15). This ensures budget balance through nested search from $k_b=|\mathcal{B}|-1$ to $k_b=1$ (lines 7-18). The optimal $(k_b^*,k_s^*)$ maximizes feasible pairs while keeping $MaxNumB \leq \sum_{m=1}^\mathsf{k_s^*} \overline{R_m}(1+\lambda_m)$.
\begin{equation}
	\begin{cases}
		bid^{\mathsf{List}}_{k_b^*+1} \geq ask^{\mathsf{List}}_{k_s^*+1} \\
		bid^{\mathsf{List}}_{k_b^*+2} < ask^{\mathsf{List}}_{k_s^*+2} \text{ or }k_b^\ast+1=\left|\bm{\mathcal{B}}\right|\ \text{or} \ k_s^\ast+1=\left|\bm{\mathcal{S}}\right|
	\end{cases}
\end{equation}

\noindent\textit{Phase 3: Dynamic matching.} Top $k_b^*$ buyers and $k_s^*$ sellers enter 0-1 knapsack-based matching (lines 21-32). For each seller $s_m$ with capacity $h_m = \overline{R_m}(1+\lambda_m)$, we solve:
\begin{equation}
	\max \sum (bid_{m,n} - ask_m)x_{m,n}, \quad \text{s.t.} \sum t_nx_{m,n} \leq h_m,
\end{equation}
where $x_{m,n} \in \{0,1\}$, and we use dynamic programming (DP) with weights $v=t_n$ and values $w=bid_{m,n}$. Unmatched buyers retain eligibility for other sellers through bid list updating (zeroing current bids). Individual rationality is enforced by requiring $bid_{m,n} \geq bid^\mathsf{Avg}_n$ for successful matches.

\begin{algorithm}[]
	{\footnotesize  
		\caption{\small{OPDAuction-MemberD}}
		\LinesNumbered 
		{\bf{Input :}} 
		$ \bm{\mathcal{B}},t_n,v_{m,n},bid_{m,n},\mathbbm{a}_n,\bm{\mathcal{S}},c_m,{ask}_m,\mathbbm{d}_m,\mathbbm{r}_m $
		
		{\bf{Output :}} 
		$ \bm{X}^* $
		
		{\bf{Initialization :}} 
		$ bid^\mathsf{Avg}_n \leftarrow mean({bid}_{m,n}),MaxNumB \leftarrow 0, k_b^\ast \leftarrow 0, k_s^\ast \leftarrow 0,U_{m',n'}\leftarrow{bid}_{m',n'}-{ask}_{m'}, U\leftarrow\left[U_{1,1}\cdots U_{1,n'};\cdots;U_{m',1}\cdots U_{m',n'}\right]$
		
		\textbf{\# Phase 1: List generation} 
		
		\text{Generating lists $\mathcal{L}_b$ and $\mathcal{L}_s$}\
		
		\textbf{\# Phase 2: Key index determination} 
		
		\For{
			each $ n' =\left|\bm{\mathcal{B}}\right|-1, ..., 1 $
		}{
			\For{
				each $ m' = \left|\bm{\mathcal{S}}\right|-1, ..., 1 $
			}{
				\If{$ \overline{{bid}_{n'+1}} \ge {ask}_{m'+1} $ and $ (n' + 1 = \left|\bm{\mathcal{B}}\right| $ or $ m' + 1 = \left|\bm{\mathcal{S}}\right| $ or  $ \overline{{bid}_{n'+2}} < {ask}_{m'+2}) $}
				{
					$ R\leftarrow\sum_{k=1}^\mathsf{m'}\overline{R_{k}},\forall s_{k}\in\bm{\mathcal{S}} $
					
					\For{each $ n' = 1, ..., n $}{
						$ R\leftarrow R-t_{n'},\forall b_{n'}\in\bm{\mathcal{B}} $
						
						\If{R<0}{
							\If{$ n'> MaxNumB + 1 $}
							{$ MaxNumB \leftarrow n'-1 $
								
								$ k_b^\ast\leftarrow n'-1,k_s^\ast\leftarrow m' $
								
								break}
						}
					}
				}

			}
		}
		\textbf{\# Phase 3: Buyer-Seller matching} 

		\textbf{$ h_{m'}\leftarrow\overline{R_{m'}}\times\left(1+\lambda_{m'}\right) $}\
		
		\For{each $ m' = k_s, ..., 1 $}
		{$ UMAX\leftarrow max\left(U\right) $\
			
			\For{each $ n' = k_b^\ast, ..., 1 $}
			{
				\If{$ U_{m',n'}={UMAX}_{n'} $}
				{$ v\leftarrow\left[v,t_{n'}\right] $\
					
					$ w\leftarrow\left[w,{bid}_{m',n'}\right] $\
					
					$ tt\leftarrow 01\text{KP}\left(c,v,w\right) $\
					
					\For{each $  n' = k_b^\ast, ..., 1 $}{
						\If{$ tt_{n'}=0 $}
						{$ U_{m',n'}\leftarrow0 $}
						\ElseIf{$ {bid}_{m',n'}\geq\overline{{bid}_{n'}} $}{
							$ X_{m',n'}\leftarrow1 $\
							
							$ h_{m'}\leftarrow h_{m'}-t_{n'} $\
							
							\textbf{Return} $\bm{X}^*$
						}	
					}
				}
			}
		}
					
}\end{algorithm}

\noindent $\bullet$ \textbf{Long-term contract design (abbreviated as OPDAuction-ContractD, Algorithm 2):} To resolve subproblem $\bm{\mathcal{F}_{1b}}$, we develop OPDAuction-ContractD via borrowing the idea of binary search to finalize payments between matched buyer-seller pairs identified previously (more details are found in Appendix A). This algorithm replaces $bid^{\mathsf{List}}_{n'}$ with ${bid}_{m',n'}$ during auction sorting (lines 5-8), ensuring contract negotiations reflect genuine buyer valuations for specific seller services. This adjustment preserves truthful bidding representation while streamlining price discovery.
For each winning buyer $b_{n'} \in \mathcal{L}_b$, we establish price bounds using $bid^{\mathsf{List}}_{k_b+1}$ (lower) and $bid^{\mathsf{List}}_n$ (upper), guaranteeing final payments stay between seller asks and buyer bids (line 8). The core mechanism employs binary search \cite{b25} to efficiently identify the minimal acceptable price within this range (lines 9-15), balancing buyer affordability with seller compensation requirements. A parallel process handles seller pricing (lines 17-32), where the lower bound $low = {ask}_{m'}$ ensures sellers receive at least their ask price while optimizing buyer costs.
OPDAuction-ContractD concludes by outputting payment vectors (line 20) that capture all financial settlements, having systematically reduced negotiation complexity through constrained search spaces and bid-ask alignment. This dual-boundary approach maintains individual rationality while preventing price inflation, crucial for maintaining auction equilibrium. 

\begin{algorithm}[]
	{\footnotesize  
		\caption{\small{OPDAuction-ContractD}}
		\LinesNumbered 
		
		{\bf{Input :}} 
		$ \bm{\mathcal{B}},t_n,v_{m,n},bid_{m,n},\mathbbm{a}_n,\bm{\mathcal{S}},c_m,{ask}_m,\mathbbm{d}_m,\mathbbm{r}_m, \bm{X}^* $
		
		{\bf{Output :}} 
		$ \bm{p}^\mathsf{\bm{b}*},\bm{r}^\mathsf{\bm{s}*}, \bm{q^\mathsf{s\to b}}, \bm{q^\mathsf{b\to s}} $
		
		\textbf{\# Phase 1: Initialization}\\ 
		\textbf{$ p_{n'}^\mathsf{b} \leftarrow 0, r_{m'}^\mathsf{s} \leftarrow 0 $}\\
		\For{
			each $ n' =\left|\bm{\mathcal{B}}\right|-1, ..., 1 $
		}{
			\For{
				each $ m' = \left|\bm{\mathcal{S}}\right|-1, ..., 1 $
			}{
				\If{$ x_{m',n'}=1 $}
				{$ \overline{{bid}_{n'}}\leftarrow{bid}_{m',n'} $}
			}
		}
		\textbf{\# Phase 2: Buyer's payment determination}\\ 
		\For{
			each $ n' =\left|\bm{\mathcal{B}}\right|-1, ..., 1 $
		}{
			\If {$ \sum_{m'\in\bm{\mathcal{S}}} x_{m',n'}=1 $
			}{
				$ high\ \leftarrow{bid}_{n'},\ low\ \leftarrow{bid}_{k_b^*+1},\ tmp\ \leftarrow{bid}_{n'} $
				
				\While {low\ <\ high
				}{
					$ {bid}_{n'}\ \leftarrow\ (high\ +\ low)/2 $
					
					$ \bm{X^\mathsf{Temp}} \leftarrow$ OPDAuction-MemberD 
					
					\If {$ \sum_{m'\in\bm{\mathcal{S}}} x_{m',n'}^\mathsf{Temp}=1 $
					}{
						$ high\ \leftarrow\ {bid}_{n'} $
					}
					\Else{$ low\ \leftarrow\ {bid}_{n'} $
						
						$ p_{n'}^\mathsf{b}\ \leftarrow\ {bid}_{n'},\ {bid}_{n'}\ \leftarrow\ tmp,\ q_{m',n'}^\mathsf{s\to b}\ \leftarrow\ \mu p_{n'}^\mathsf{b} $
					}
				}	
			}
		}
		\textbf{\# Phase 3: Seller's reward determination}\\ 
		\For{
			each $ m' = \left|\bm{\mathcal{S}}\right|-1, ..., 1 $
		}{
			\If {$ \sum_{n'\in\bm{\mathcal{B}}} x_{m',n'}=1 $
			}{
				$ high\ \leftarrow{ask}_{m'},\ low\ \leftarrow{ask}_{k_s^*+1},\ tmp\ \leftarrow{ask}_{m'} $
				
				\While{low\ <\ high
				}{
					$ {ask}_{m'}\ \leftarrow\ (high\ +\ low)/2$
					
					$ \bm{X^\mathsf{Temp}} \leftarrow$ OPDAuction-MemberD
					
					\If {$ \sum_{n'\in\bm{\mathcal{B}}} x_{m',n'}^\mathsf{Temp} =1 $
					}{
						$ low\ \leftarrow\ {ask}_{m'} $
					}
					\Else{$ high\ \leftarrow\ {ask}_{m'} $
						
						$ r_{m'}^\mathsf{s}\leftarrow {ask}_{m'}, {ask}_{m'} \leftarrow tmp, q_{m',n'}^\mathsf{b\to s} \leftarrow \mu r_{m'}^\mathsf{s} $
					}
				} 
			}
		}
		\Return{$ \bm{p}^\mathsf{\bm{b}*},\bm{r}^\mathsf{\bm{s}*}, \bm{q^\mathsf{s\to b}}, \bm{q^\mathsf{b\to s}}  $.}

	}				
\end{algorithm}

\noindent\textcolor{black}{$\bullet$ \textbf{An illustrative numerical example:} To illustrate the operation of Algorithms 1-2, we consider 5 sellers and 5 buyers: Sellers $\{s_1,\dots,s_5\}$ have capacities $\{5,4,6,3,7\}$, idle probabilities $\{0.8,0.7,0.9,0.6,0.85\}$, and ask prices $\{2.0,2.2,1.8,2.5,1.9\}$; buyers $\{b_1,\dots,b_5\}$ have demands $\{3,2,4,1,3\}$, attendance probabilities $\{0.9,0.8,0.7,0.6,0.5\}$, and bid prices $\{3.0,2.8,2.5,2.2,2.0\}$. With overbooking rate $\lambda^*=0.2$, Algorithm 1 first sorts buyers by bid density ($b_1 > b_2 > b_3 > b_4 > b_5$) and sellers by ask price ($s_3 < s_5 < s_1 < s_2 < s_4$). It determines $k_b^*=4$ buyers and $k_s^*=2$ sellers for matching. The optimal matching pairs $b_1$-$b_2$ with $s_3$ (utility 5.6) and $b_3$-$b_4$ with $s_5$ (utility 2.7). Algorithm 2 then determines payments: $b_1$ pays 2.07/unit (vs. bid 3.0), $b_2$ pays 2.15/unit (vs. bid 2.8), $b_3$ pays 2.05/unit (vs. bid 2.5), $b_4$ pays 2.02/unit (vs. bid 2.2); $s_3$ receives 1.96/unit (vs. ask 1.8), $s_5$ receives 1.98/unit (vs. ask 1.9). This satisfies individual rationality (all participants have positive utility), truthfulness (no incentive to misreport), and budget balance (auctioneer profit = 1.03 > 0). Details are moved to Appendix E.}

\noindent$\bullet$ \textbf{Overbooking rate optimization (abbreviated as OPDAuction-OverbookROpt, Algorithm 3):} To balance dynamic resource demands and market efficiency, OPDAuction-OverbookROpt systematically determines the optimal overbooking rate $\lambda^*$ through risk-aware parallel evaluation (detailed explanations are provided by Appendix A). The process begins by initializing candidate rates $\Lambda = \{1\%,(1+\Delta\lambda)\%,...,100\%\}$ with precomputed risk thresholds ($\xi^\mathsf{S}$, $\xi^\mathsf{B}$, $\xi^\mathsf{V}$). For each candidate $\lambda$, we first calculate sellers' adjusted capacities $\overline{R_m} = c_m \times (1+\lambda)$ and validate basic capacity constraints through binary search.
The algorithm then performs parallel evaluations by simultaneously: \textit{(i)} executing OPDAuction-MemberD to identify winning pairs under current $\lambda$, \textit{(ii)} determining contract terms $[p^\mathsf{b},r^\mathsf{s}]$ via OPDAuction-ContractD, and \textit{(iii)} assessing risks through $\mathcal{R}^\mathsf{SRisk}_m = f(\mathbbm{d}_m, \overline{R_m})$ and $\mathcal{R}^\mathsf{BRisk}_n = g(\mathbbm{a}_n, t_n)$. Candidates exceeding risk thresholds are automatically pruned while recording feasible solutions in $\bm{U^\#}$, $\bm{p}$, $\bm{r}$, and $\bm{X^\#}$. Meanwhile, social welfare $USW = \sum_{m,n} (bid_{m,n}-ask_m)x_{m,n}$ guides search direction prioritization.
Finally, the golden-section search refines \cite{b25-2} neighborhood solutions when multiple $\lambda$ values demonstrate comparable welfare, selecting configurations that maximize both welfare and successful matches. This coordinated evaluation of risk profiles and economic efficiency ensures balanced market participation through optimal contracts, maintaining computational efficiency. 

The above Algorithms 1-3 collectively form our designed OPDAuction, with more details shown in Appendix A. From OPDAuction, we obtain long-term contracts between proper buyers and sellers, which will be implemented during Stage II directly without any further negotiation, thus facilitating service delivery.

\begin{algorithm}[]
	{\footnotesize
		\caption{\small{OPDAuction-OverbookROpt}}
		\LinesNumbered
		
		{\bf{Input :}} 
		$\bm{\mathcal{B}},t_n,v_{m,n},bid_{m,n},\mathbbm{a}_n,\bm{\mathcal{S}},c_m,{ask}_m,\mathbbm{d}_m,\mathbbm{r}_m$
		
		{\bf{Output :}} 
		$\bm{X}^*,\bm{p}^\mathsf{\bm{b}*},\bm{r}^\mathsf{\bm{s}*},\lambda^*$
		
		\textbf{\# Phase 1: Preprocessing} \\ 
		Precompute $\xi^\mathsf{S}, \xi^\mathsf{B}, \xi^\mathsf{V}$ thresholds \\
		Initialize candidate set $\Lambda \gets \{1\%,(1+\Delta\lambda)\%,...,100\%\}$ (sorted) \\
		Precompute seller capacities $\overline{R_m} \gets c_m \times (1+\lambda), \forall \lambda \in \Lambda$
		
		\textbf{\# Phase 2: Binary Search for Optimal $\lambda$} \\
		$low \gets 1$, $high \gets 100$, $best \gets \emptyset$ \\
		\While{$low \leq high$}{
			$mid \gets \lfloor (low+high)/2 \rfloor$ \\
			$\lambda \gets \Lambda[mid]$ \\
			
			\textbf{\# Pruned Candidate Evaluation} \\
			\If{$\lambda$ violates capacity constraints}{
				Update search direction \\
				\textbf{continue}
			}
			
			$\bm{X} \gets$ OPDAuction-MemberD($\lambda$) \\
			$[p^\mathsf{b},r^\mathsf{s}] \gets$ OPDAuction-ContractD($\bm{X}$) \\
			
			\textbf{\# Parallel Risk Assessment} \\
			Compute $\mathcal{R}^\mathsf{SRisk}, \mathcal{R}^\mathsf{BRisk}, \mathcal{R}^\mathsf{VRisk}$ in parallel:
			\begin{itemize}
				\item[] Seller risks: $\mathcal{R}^\mathsf{SRisk}_m = f(\mathbbm{d}_m, \overline{R_m})$ 
				\item[] Buyer risks: $\mathcal{R}^\mathsf{BRisk}_n = g(\mathbbm{a}_n, t_n)$
			\end{itemize}
			
			\textbf{\# Early Termination Checks} \\
			\If{$\exists m \ \mathcal{R}^\mathsf{SRisk}_m > \xi^\mathsf{S}$}{
				Mark $\lambda$ as infeasible \\
				Update $high \gets mid-1$
			}
			\Else{
				Calculate $USW \gets \sum_{m,n} (bid_{m,n}-ask_m) \cdot x_{m,n}$ \\
				Update $best \gets \argmax(best, USW)$ \\
				Update $low \gets mid+1$
			}
		}
		
		\textbf{\# Phase 3: Refinement Search} \\
		\If{$best$ has multiple candidates}{
			Perform golden-section search in neighborhood \\
			Verify feasibility constraints for final selection
		}
		
		\textbf{\# Final Output} \\
		\Return optimal $\bm{X}^*, \bm{p}^\mathsf{\bm{b}*}, \bm{r}^\mathsf{\bm{s}*}, \lambda^*$ from $best$
	}
\end{algorithm}

\section{Stage II. Realtime Backup Double Auction (RBDAuction)}
Although constraints associated with risk control in Stage I (constraints (C1)-(C3)) can make the risks within acceptable thresholds, the market can still face unexpected situations due to its dynamic nature, we thus proceed to Stage II, which takes place during actual resource transactions. In this stage, members and their respective sellers fulfill their pre-signed contracts. This involves either utilizing and paying for edge services, or serving as volunteers in exchange for compensation (when a member is participated), or paying penalty when a member is absent. 
To further enhance practical social welfare, we \textit{encourage volunteers, guests, and sellers with surplus resources to participate in an additional double auction process,} referred to as the RBDAuction. Apparently, auction decisions of RBDAuction are based on the current network/market conditions, such as available resource supply and the presence of members and guests who actively attend (i.e., $\alpha_j=1$).

In RBDAuction, we first filter out a matrix $ \bm{X}^\mathsf{**} $ from the obtained $\bm{X}^*$ during Stage I that records the seller-member pairs that have successfully completed resource transactions. Due to their dynamic participation, we consider a new set of buyers $ \tilde{\bm{\mathcal{B}}}=\left\{\tilde{b}_1,\ldots,\tilde{b}_j,\ldots,\tilde{b}_{\left|\bm{\mathcal{B}}\right|}\right\} $, which includes guests and volunteers. Also, we re-denote the set of sellers $ \tilde{\bm{\mathcal{S}}}=\left\{\tilde{s}_1,\ldots,\tilde{s}_i,\ldots,\tilde{s}_{\left|\bm{\mathcal{S}}\right|}\right\} $, consisting of sellers with remaining available resources. After completing the supplementary RBDAuction in Stage II, we can obtain new matching pairs as recorded in matrix $ \tilde{\bm{X}} =\left\{{\tilde{x}}_{i,j}|\tilde{b}_j\in\tilde{\bm{\mathcal{B}}},\tilde{s}_i\in\tilde{\bm{\mathcal{S}}}\right\} $, where ${\tilde{x}}_{i,j}=1$ (or $0$) indicates a partnership (or not) between $ \tilde{s}_i $ and $ \tilde{b}_j $ during each transaction. Apparently, matrices $ \bm{X}^\mathsf{**} $, $ \tilde{\bm{X}} $, and their corresponding monetary elements (e.g., price) collectively form the trading decision of RBDAuction.

Accordingly, RBDAuction works for obtaining proper trading pairs between buyers with unmet resource demands and sellers with surplus resource supply, with the aim of maximizing the practical social welfare denoted by $\tilde{U}^\mathsf{SW}$.
Similar to Stage I, we formulate the optimization problem in Stage II as the following $\bm{\mathcal{F}_2}$:
\begin{equation}\label{key}
	\begin{aligned}
		&\bm{\mathcal{F}_2}:\underset{\tilde{x}_{i,j},\tilde{p}^\mathsf{b},\tilde{r}^\mathsf{s}}{\arg\max}~~{\tilde{U}^\mathsf{SW}}
	\end{aligned}
\end{equation}
\begin{equation}\label{key}\tag{C7}
	\begin{aligned}
		&\text{s.t.} \sum_{\tilde{b}_j\in\tilde{\bm{\mathcal{B}}}}{{\tilde{x}}_{i,j}\tilde{t}_j}\le{\tilde{R}_{i}},\ \forall\ \tilde{s}_i\in\tilde{\bm{\mathcal{S}}},
	\end{aligned}\nonumber
\end{equation}
\begin{equation}\label{key}\tag{C8}
	\begin{aligned}
		&\sum_{\tilde{s}_i\in\tilde{\bm{\mathcal{S}}}} {\tilde{x}}_{i,j}\le1,\ \forall\ \tilde{b}_j\in\tilde{\bm{\mathcal{B}}},
	\end{aligned}\nonumber
\end{equation}
\begin{equation}\label{key}\tag{C9}
	\begin{aligned}
		&{\tilde{x}}_{i,j}\in\left\{0,1\right\},\forall\ \tilde{b}_j\in\tilde{\bm{\mathcal{B}}},\forall\ \tilde{s}_i\in\tilde{\bm{\mathcal{S}}}.
	\end{aligned}\nonumber
\end{equation}
where $\tilde{t}_j$ represents the resource block demand of buyer $\tilde{b}_j$, and $\tilde{R}_{i}$ signifies the remaining idle resources of seller \( \tilde{s}_i \). The variables and parameters such as $\tilde{U}^\mathsf{SW}$, $\tilde{t}_j$, and $\tilde{R}_{i}$ mentioned above will be defined and modeled in Appendix B. Constraint (C7) restricts the remaining practical resources. Constraint (C8) shows that one buyer can only be matched to at most one seller, while constraint (C9) describes the binary nature of ${\tilde{x}}_{i,j}$. To facilitate RBDAuction, we first determine the buyers who can be engaged. Then, the winning seller-buyer determination and their pricing can be figured out by using similar algorithms to Algorithms 1-2 introduced in OPDAuction, to keep the consistency and fairness of the two stages. Thus, as constrained by limited space and to achieve a better readability, we omit their details here. Instead, we show the pseudo-code of RBDAuction as well as the detailed analysis in Appendix B.

\section {Key Auction Property Analysis}
Proofs of key properties
such as individual rationality, truthfulness, budget-balance,
and computational efficiency of the proposed two-stage double
auction are moved to Appendix C, to achieve better coherence during reading. Moreover, since we offer an unique view of integrating pre-double auction, risks analysis and overbooking, the above properties can be different from conventional auctions, also representing one of the most interesting parts of this paper.

\section{Evaluations}
This section conducts comprehensive experiments to validate the performance of our proposed two-stage double auction (referred to as ``TwoSAuction'' for notational simplicity), using MATLAB 2022b.
\subsection{Simulation Setup and Evaluation Metrics}
Most existing papers only focus on manual numerical parameter settings in experiments, such as \cite{b8}, \cite{b13}, \cite{b15}. Differently, without loss of generality, we incorporate both real-world data and numerical data to guide key parameter settings in our experiments, since this paper offers an unique view of auction design, leading to that some information are hard to get from real-world datasets, e.g., price and penalty. Accordingly, \textcolor{black}{to capture diverse market sizes,} key parameters are considered as: $\left|\bm{\mathcal{B}}\right|\in \left[50,100,150,200\right]$, $ t_n\in \left[0,10\right] $, $v_{m,n}\in \left[0,10\right]$, $ \mathbbm{a}_n\in\left[50\%,100\%\right]$, $\left|\bm{\mathcal{S}}\right|\in \left[10,15,20,25\right]$, $R_m\in \left[1,100\right]$, $ r_m\in\left[0,1\right] $, $ c_m\in\left[0,10\right] $, $\xi^\mathsf{S}=\xi^\mathsf{B}=\xi^\mathsf{V}=50\%$. 

\color{black}
Note that our simulation adopts a hybrid parameter setup: observable factors such as user mobility are directly measured, while subjective or strategic parameters (e.g., private valuations, bidding/asking prices, penalty factors) are synthetically generated within reasonable ranges reported in the literature. These economic and behavioral parameters are crucial for auction design. To model the primary source of uncertainty in edge networks, namely \textit{dynamic buyer participation}, we employ the \textit{Chicago Taxi Trips dataset}, focusing on a downtown hotspot (latitude: 41.38°N–41.40°N, longitude: 87.35°W–87.33°W). Each simulated buyer (vehicle or mobile user) is assigned a participation probability ($\mathbbm{a}n$) per time slot, derived from the empirical frequency of the corresponding vehicle’s presence, thereby capturing spatio-temporal dynamics. Other parameters, including resource demands ($t_n \in [0, 10]$), unit valuations ($v_{m,n} \in [0, 10]$), seller costs ($c_m \in [0, 10]$), and capacities ($R_m \in [1, 100]$), follow standard ranges widely used in edge computing and auction literature. While not tied to a single dataset, these values are guided by empirical studies and established modeling practices, ensuring that the simulation environment is both realistic and comparable to prior work \cite{b28}.

\color{black}
Moreover, for default clauses, a penalty factor ($\mu$) ranging from 0 to 1 is applied. For example, if a buyer fails to participate in a transaction, the matched seller receives compensation set at $\mu r_m^\mathsf{s}$. Conversely, if the seller fails to provide promised resources, buyers volunteering to forgo services due to overbooking are compensated at $\mu p_n^\mathsf{b}$. This structure ensures fairness, balancing the interests of both parties by addressing discrepancies between contractual commitments and transaction outcomes.
To offer thorough evaluations, we also employ the following key metrics:

\noindent 
$\bullet$ \textit{Social welfare (SW)}: This metric quantifies the collective utility of all participants involved in the auction, further reflecting the economic efficiency.

\noindent 
$\bullet$ \textit{Time consumed by auction decision-making (TimeADM)}: This metric measures the time required to finalize the auction decisions, including identifying the winning seller-buyer pairs while determining trading prices. It is indicative of the time efficiency and practicality of the auction process.


\noindent 
$\bullet$ \textit{Property Analysis}: This assessment verifies crucial design properties of an auction such as truthfulness and individual rationality, via simulations.

The above metrics provide a comprehensive assessment of TwoSAuction. In the following, performance is evaluated in two steps: \textit{(i)} comparing with conventional auctions to highlight improvements (Section 7.2), and \textit{(ii)} assessing the impact of overbooking by comparison with methods that do not allow it (Section 7.3).

\subsection{Performance Evaluation vs. General Auction methods}
We first introduce the following comparative auction mechanisms as general benchmark methods\footnote{\textcolor{black}{Clarification on algorithm origins: Benchmark methods evaluate TwoSAuction. CRDAuction follows conventional real-time double auctions \cite{b25}, adapted to a multi-seller, multi-buyer setting with heterogeneous valuations. SSPDAuction tests Stage~I alone, while VRAuction, CRAuction, and RSAuction use greedy heuristics prioritizing buyer value, seller cost, and resource supply, respectively, to assess simplified strategies.}}: 

\noindent
$\bullet$ \textit{Conventional real-time double auction (CRDAuction):} CRDAuction embodies a prevalent approach in existing resource trading market, aiming to optimize the practical social welfare based on real-time network/market conditions.

\noindent
$\bullet$ \textit{Single-stage pre-double auction (SSPDAuction):} SSPDAuction promotes the engagement of all sellers and buyers in long-term contracts, mitigating possible risks prior to transactions without backup plans, i.e., implement Stage I only.

\noindent
$\bullet$ \textit{Value-raising-preferred auction (VRAuction):} Buyers prioritize purchasing resources from sellers who offer higher unit values (e.g., $v_{m,n}$).

\noindent
$\bullet$ \textit{Cost-reduction-driven auction (CRAuction):} Sellers prioritize serving buyer with lower unit service costs (e.g., $c_m$).

\noindent
$\bullet$ \textit{Resource supply-promoted auction (RSAuction):} Buyers show a preference to sellers who possess more sufficient resources, seeking to ensure their resource demands (e.g., $d_m$).


\begin{figure*}[t]
	\centering
	\begin{minipage}[t]{0.329\textwidth} 
		\subfigure[SW]{\includegraphics[trim=0.7cm 0cm 0cm 0cm, clip, width=0.45\textwidth,height=0.39\textwidth]{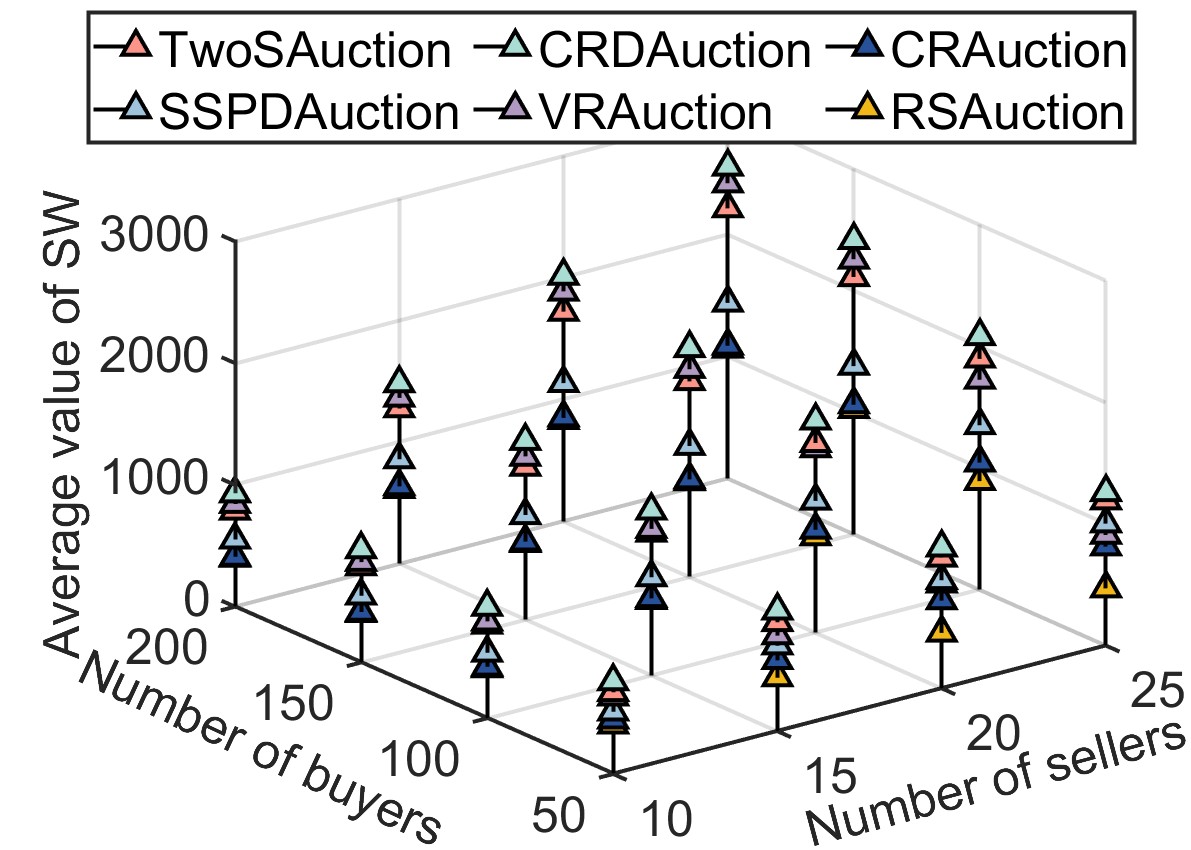}}
		\subfigure[TimeADM]{\includegraphics[trim=0.5cm 0cm 0.15cm 0cm, clip, width=0.45\textwidth,height=0.39\textwidth]{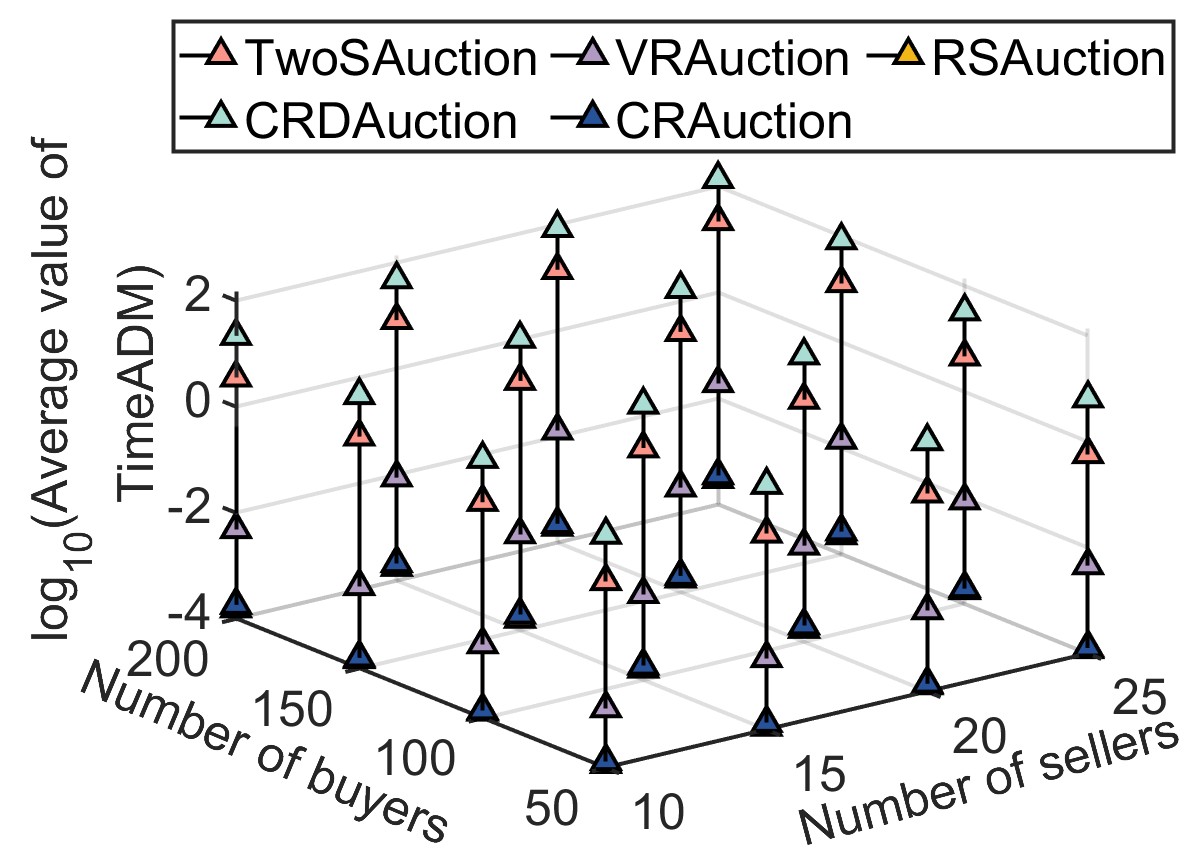}}
		\caption{Performance comparison.}
		\label{fig:main1}
	\end{minipage}
	\hfill 
	\begin{minipage}[t]{0.329\textwidth}
		\subfigure[Buyers]{\includegraphics[trim=0cm 0cm 0cm 0cm, clip, width=0.45\columnwidth,height=0.4\textwidth]{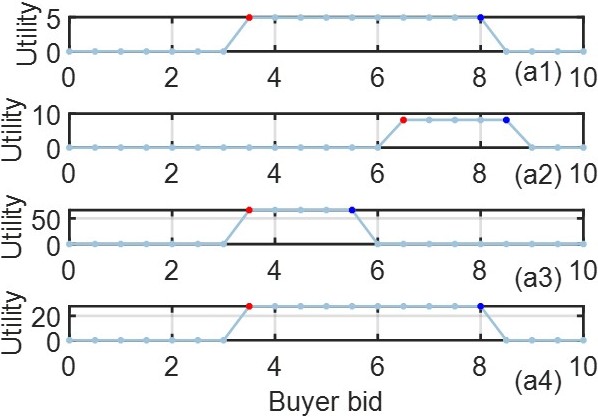}}
		\subfigure[Sellers]{\includegraphics[trim=0cm 0cm 0cm 0cm, clip, width=0.45\columnwidth,height=0.39\textwidth]{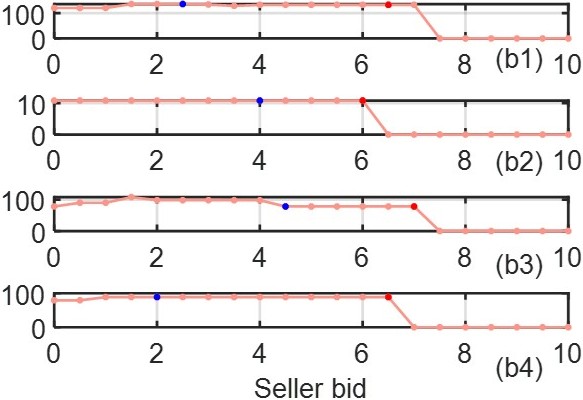}}
		\caption{Truthfulness.}
		\label{fig:main2}
	\end{minipage}
	\hfill
	\begin{minipage}[t]{0.329\textwidth}
\subfigure[Sellers]{\includegraphics[trim=0cm 0cm 0cm 0cm, clip, width=0.45\textwidth,height=0.38\textwidth]{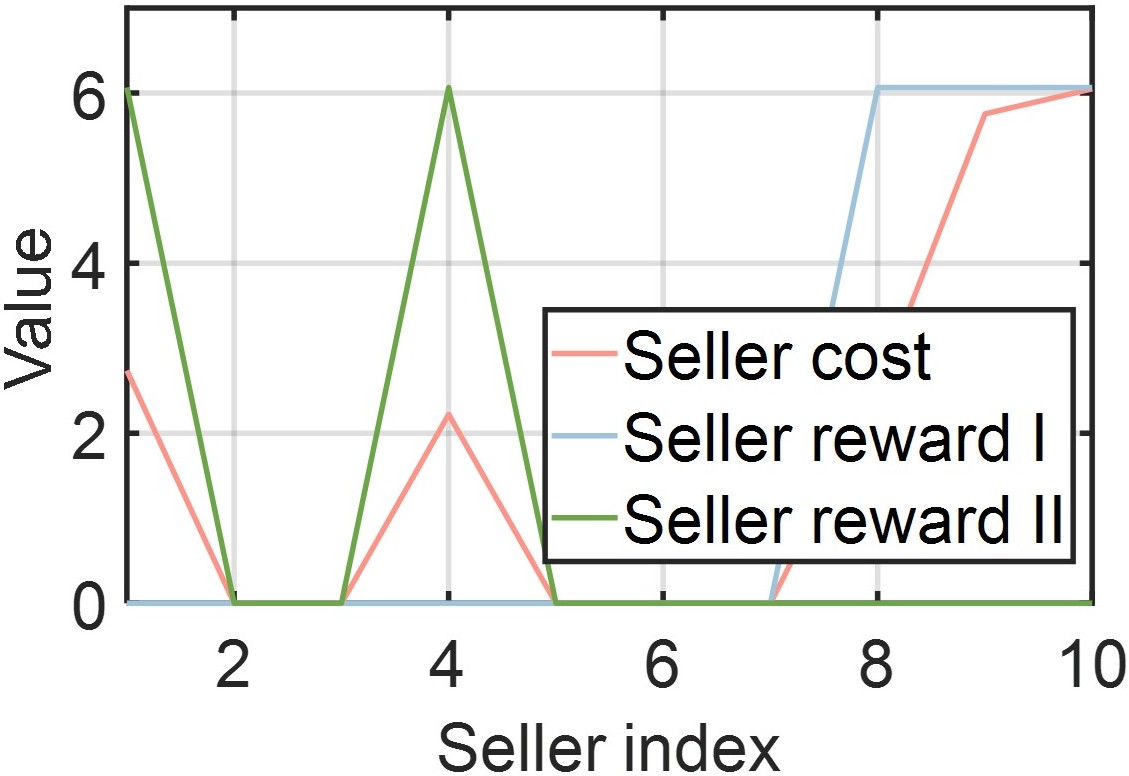}}
\subfigure[Buyers]{\includegraphics[trim=0cm 0cm 0cm 0cm, clip,width=0.45\textwidth,height=0.39\textwidth]{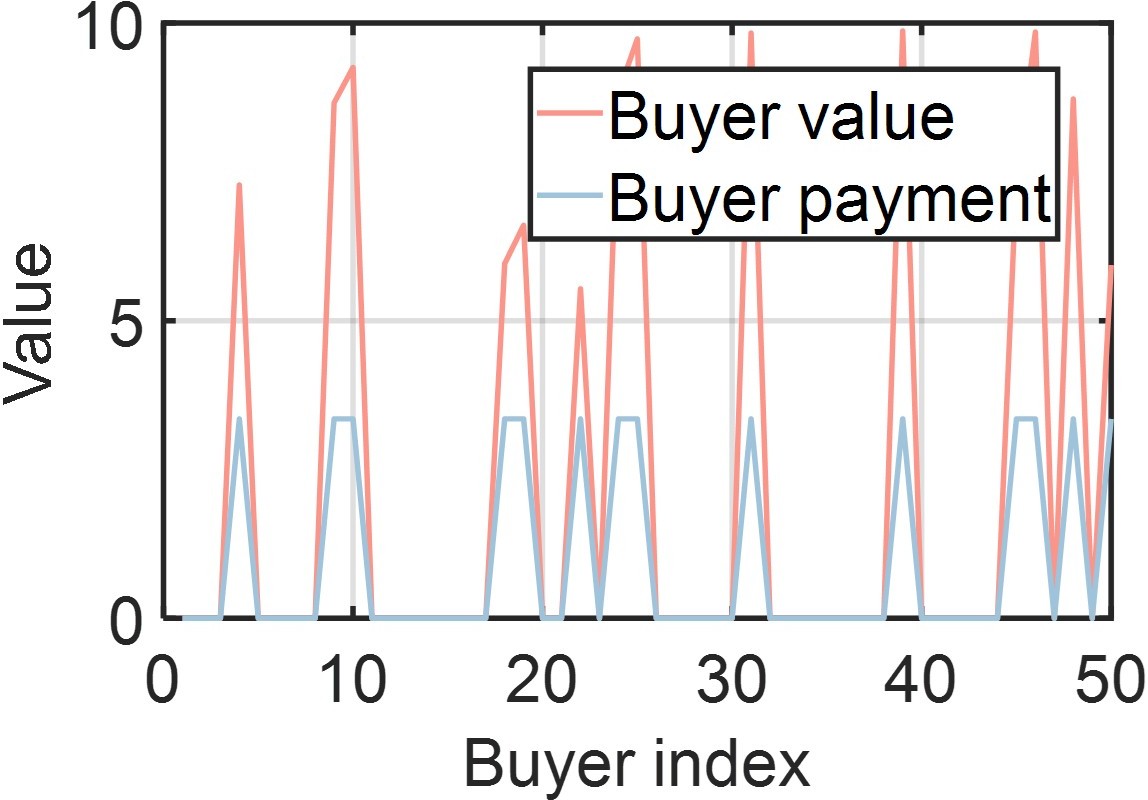}}
		\caption{Individual rationality.}
		\label{fig:main3}
	\end{minipage}
\end{figure*}

{\color{black}{Fig.~4(a) compares the average social welfare (SW) under different numbers of buyers/sellers across 300 experiments. CRDAuction achieves the highest SW, as its decisions rely solely on current network/market conditions, reflecting the SW optimum. TwoSAuction closely matches CRDAuction, though it occasionally underperforms slightly compared with VRAuction, which uses greedy heuristics to maximize buyer profits. However, VRAuction lacks bargaining, compromising fundamental auction properties such as truthfulness and individual rationality, making it impractical for real-world markets. TwoSAuction significantly outperforms SSPDAuction, CRAuction, and RSAuction in average SW. SSPDAuction lacks a backup mechanism, while CRAuction and RSAuction prioritize cost or resource supply, limiting buyer gains. Fig.~4(b) reports the TimeADM per transaction (MATLAB running time) across 300 experiments, using a logarithmic scale to highlight differences. SSPDAuction is excluded since its pre-signed contracts yield TimeADM = 0. CRDAuction incurs much higher TimeADM due to real-time decision-making, especially in large-scale markets. TwoSAuction reduces TimeADM by 76.8\% with 150 buyers and 25 sellers, thanks to OPDAuction, which minimizes bargaining during actual transactions. This advantage grows with more participants. While VRAuction, CRAuction, and RSAuction achieve lower TimeADM by avoiding bargaining, they fail to preserve essential auction properties.}}

As truthfulness and individual rationality are two \textcolor{black}{significant} auction attributes, we provide detailed analysis for both buyers and sellers across various problem sizes in Fig.~5, thereby verifying the theoretical proofs. Specifically, Fig.~5(a) contains four subplots for different problem sizes (200/100 for (a1), 100/15 for (a2), 50/25 for (a3), 50/20 for (a4); and 200/10 for (b1), 50/25 for (b2), 200/15 for (b3), 200/20 for (b4)), where a buyer is randomly selected as an example without loss of generality. The buyer’s truthful bid is marked by a red dot, and the critical transaction price (index $k_b^\ast$) is shown as a blue dot. Clearly, bids in \textcolor{black}{TwoSAuction} reflect true valuations: misreporting does not improve utilities, as low bids may fail to meet the minimum transaction price, while high bids risk transaction failures. In Fig.~5(b), sellers’ critical transaction prices (index $k_s^\ast$) are shown as blue dots, while actual asking prices are red. Unlike conventional auctions, Fig.~5(b) reveals interesting patterns. Sellers near the top of the list $\mathcal{L}_s$ have greater opportunities to contract with buyers. Since \textcolor{black}{TwoSAuction} has two stages, a seller participating in both may set different prices. In phase~3 of Algorithm~1, sellers are matched based on ask rankings, with lower asks prioritized for unmatched buyers. Thus, a seller’s position in $\mathcal{L}_s$ strongly affects its final utility. Except when the ask exceeds the critical price causing failure, seller utility shows only minor fluctuations as the ask varies. Lowering the ask to improve ranking rarely changes final contracts, which depend on resource-demand compatibility (i.e., a knapsack problem), and Algorithm~3 prevents risky mismatches. Thus, misreporting has limited effect, preserving seller truthfulness. This highlights the impact of our pre-auction design and risk management, distinguishing our work from prior studies.
\vspace{-0.1cm}
We next evaluate individual rationality in Fig.~6 with 10 sellers and 50 buyers. Fig.~6(a) compares sellers’ unit costs with their received rewards in Stage~I (Seller reward I'') and Stage~II (Seller reward II''); values remain 0 if a seller fails in either stage. Clearly, winners always receive rewards covering their costs, confirming individual rationality for sellers. Fig.~6(b) shows that buyers’ true valuations always cover their payments, verifying individual rationality for buyers. Overall, \textcolor{black}{TwoSAuction} achieves superior social welfare and time efficiency compared to representative benchmarks while maintaining key auction properties, providing a useful reference for resource provisioning in dynamic, uncertain networks.

\subsection{Performance Evaluation vs. Overbooking and Risk Control}
Since this paper stands for a first attempt to incorporate overbooking and risk management into auction design, we conduct an innovative view to explore their potential benefits. Specifically, we involve the following methods as benchmarks, where the former two aim to highlight the advantages of overbooking, while the later three show the importance of controlling  diverse risks.

\noindent
$\bullet$ \textit{TwoSAuction\_NoOB}: This auction applies our TwoSAuction without overbooking in Stage I.

\noindent
$\bullet$ \textit{SSPDAuction\_NoOB}: This method only considers a pre-double auction procedure without overbooking.

\color{black}
\noindent
$\bullet$ \textit{TwoSAuction\_noBRisk}: Remove risk constraint (C1).

\noindent
$\bullet$ \textit{TwoSAuction\_noVRisk}: Remove risk constraint (C2).

\noindent
$\bullet$ \textit{TwoSAuction\_noSRisk}: Remove risk constraint (C3).
\color{black}

\begin{figure*}[htbp]
	\centering
	\subfigure[] {\includegraphics[trim=0cm 0cm 0cm 0cm, clip, width=.16\textwidth]{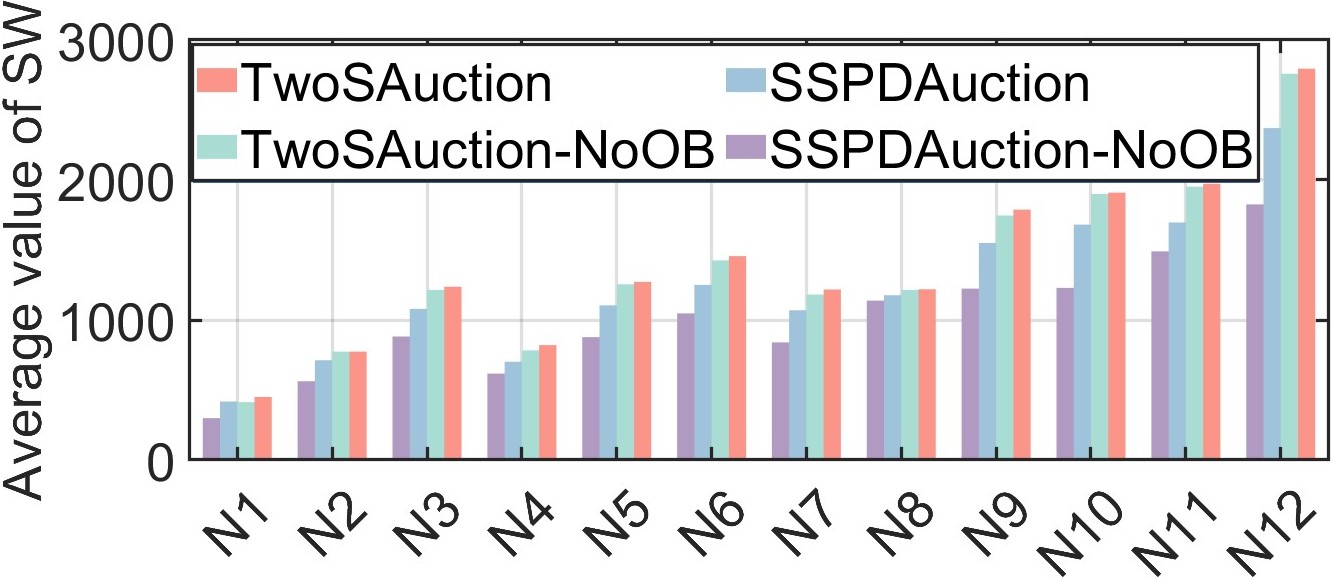}}
	\subfigure[] {\includegraphics[trim=0cm 0cm 0cm 0cm, clip, width=.16\textwidth]{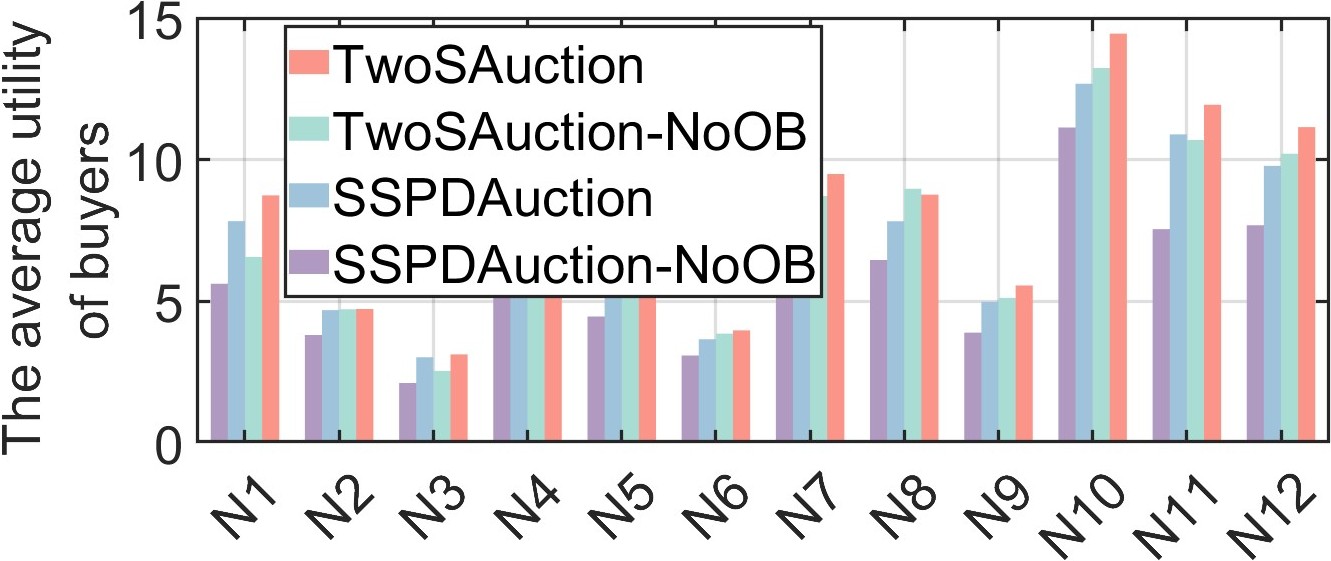}}
	\subfigure[] {\includegraphics[trim=0cm 0cm 0cm 0cm, clip, width=.16\textwidth]{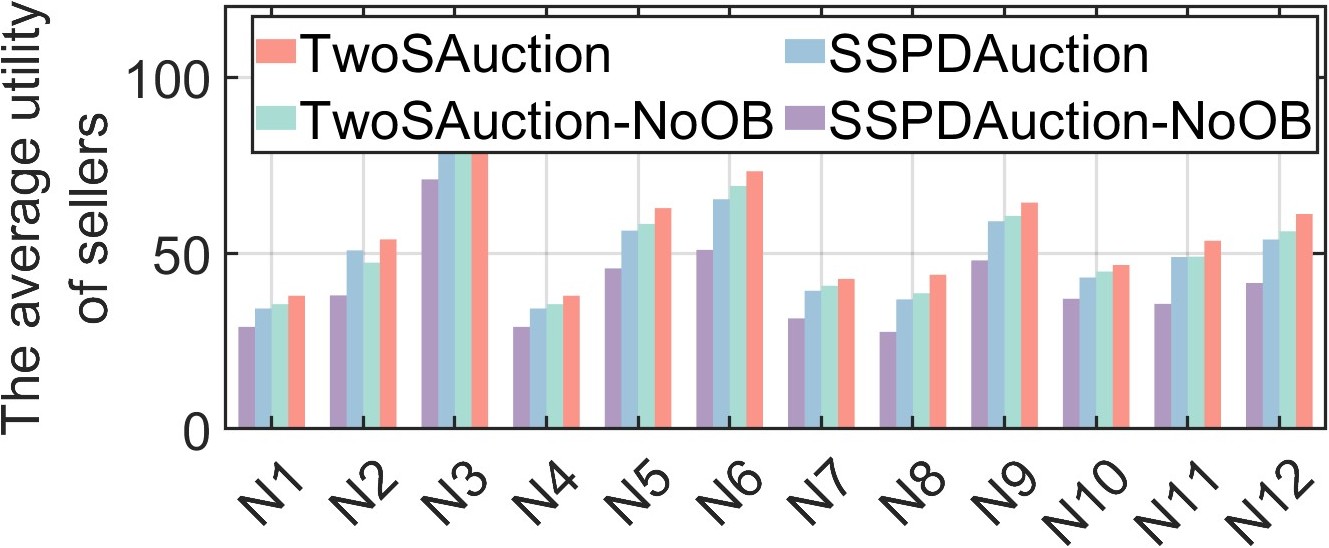}}
	\subfigure[] {\includegraphics[trim=0cm 0cm 0cm 0cm, clip, width=.16\textwidth,height=0.069\textwidth]{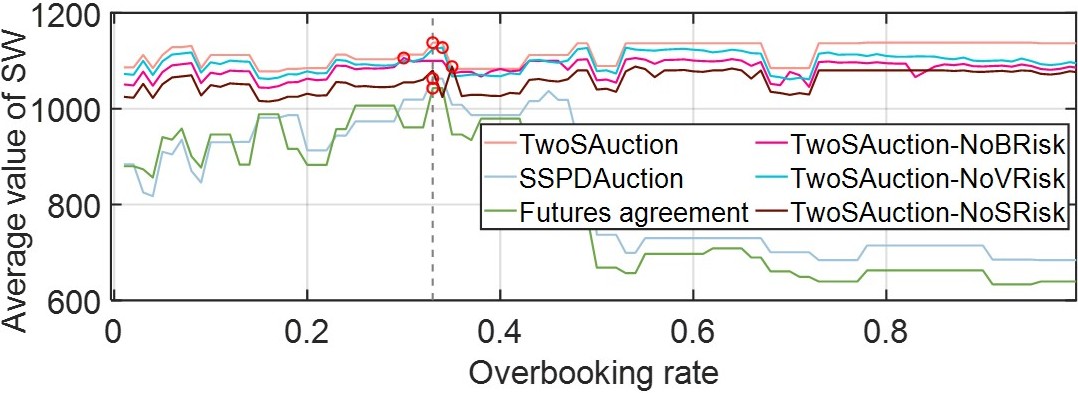}}
	\subfigure[] {\includegraphics[trim=0cm 0cm 0cm 0cm, clip, width=.16\textwidth,height=0.069\textwidth]{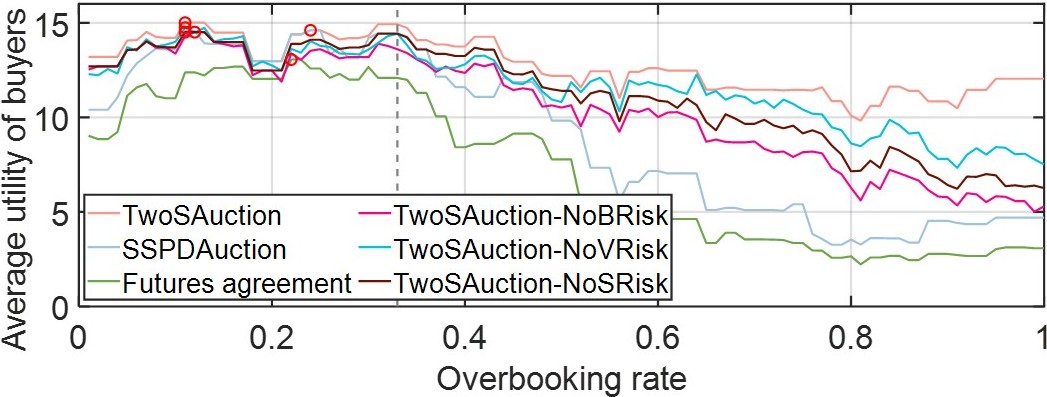}}
	\subfigure[] {\includegraphics[trim=0cm 0cm 0cm 0cm, clip, width=.16\textwidth,height=0.069\textwidth]{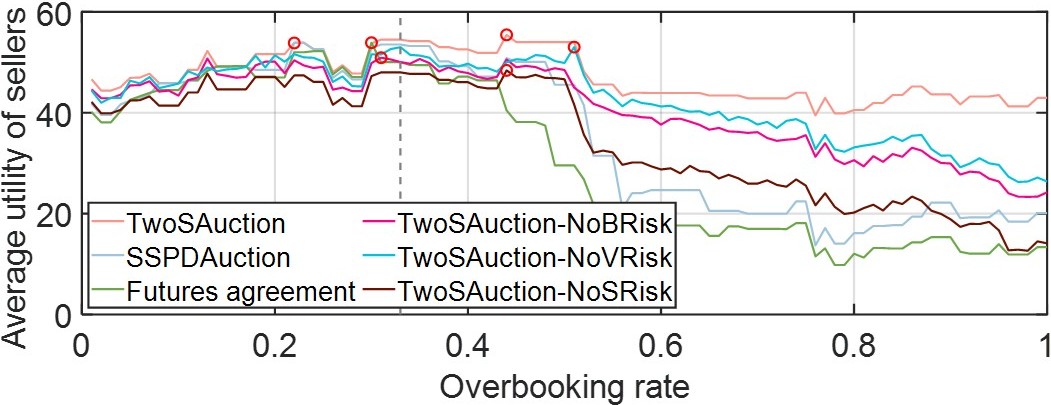}}
	\caption{\textcolor{black}{Effects of overbooking policies and user scales on social welfare, buyer utility, and seller utility, where we have 100/10 (N1), 150/10 (N2), 200/10 (N3), 100/15 (N4), 150/15 (N5), 200/15 (N6), 100/20 (N7), 150/20 (N8), 200/20 (N9), 100/25 (N10), 150/25 (N11), 200/25 (N12).}}
	\label{fig_E1}
\end{figure*}

\color{black}
To comprehensively evaluate the impact of overbooking and risk control on our proposed mechanism, we conduct a series of comparative experiments presented in Fig. 7. \textit{First, Figs. 7(a)--7(c) focus on demonstrating the fundamental advantage of incorporating an overbooking policy under varying problem scales}. As shown in Fig. 7(a), methods that leverage the RBDAuction as a backup plan consistently achieve superior social welfare (SW) compared to others, owing to its robust design as an effective fallback option. Crucially, overbooking enhances the performance of both our proposed TwoSAuction and SSPDAuction, demonstrating clear advantages over methods that do not incorporate overbooking (e.g., TwoSAuction\_NoOB). This trend holds true across different numbers of buyers and sellers, validating the generalizability of our approach. Similarly, Figs. 7(b) and 7(c) illustrate that overbooking brings tangible benefits to both buyers and sellers, respectively, by enabling more efficient resource utilization and mitigating the negative impacts of demand-supply mismatches. \textit{Then, Figs. 7(d)--7(f) delve deeper into the nuanced interplay between the overbooking rate and our integrated risk control mechanisms}. This set of experiments examines the performance under varying overbooking rates (from 0\% to 100\% in increments of 1\%) while keeping other variables constant. We present the following key insights:

\noindent $\bullet$ The critical role of the overbooking rate: As depicted in all three subfigures, there exists an optimal overbooking rate (approximately 33\%, marked by the dashed vertical line) that maximizes social welfare and participant utilities. The red circles highlight the peak values for each curve, showing that this optimal point yields the highest SW for our TwoSAuction (Fig. 7(d)) and near-optimal utilities for both buyers (Fig. 7(e)) and sellers (Fig. 7(f)). Notably, while the 33\% rate may not yield the absolute maximum utility for individual parties (e.g., sellers in Fig. 7(f)), it strikes a balance that maximizes overall system efficiency. Furthermore, the stability of our TwoSAuction's performance (e.g., relatively flat SW curve from 0.8 to 0.9 in Fig. 7(d)) underscores the effectiveness of our RBDAuction in utilizing available resources, ensuring favorable outcomes even at high overbooking rates.

\noindent $\bullet$ The indispensable function of risk control: 
\textit{(i)} In Fig. 7(d) regarding SW, removing any risk control leads to a significant degradation in performance, especially at higher overbooking rates (>50\%). For instance, {TwoSAuction-NoSRisk suffers a sharp drop in SW, indicating that without seller risk control, the system becomes unstable and inefficient.
\textit{(ii)} In Fig. 7(e) regarding buyers' utility, the removal of buyer risk control (TwoSAuction-NoBRisk) causes a dramatic plunge in buyer utility at high overbooking rates. This validates our theoretical analysis: without BRisk, buyers are incentivized to over-participate, only to face a high probability of failure and negative utility when their bids cannot be fulfilled.
\textit{(iii)} In Fig. 7(f) regarding sellers' utility, the absence of seller risk control (TwoSAuction-NoSRisk) results in a catastrophic collapse of seller utility at high overbooking rates. This occurs because sellers, unbound by SRisk, over-commit resources beyond their capacity, leading to massive default penalties and reputational damage, ultimately destroying their own profits.
\textit{(iv)} The TwoSAuction-NoVRisk curve, while showing less severe degradation, still exhibits increased volatility at high overbooking rates, confirming that VRisk plays a crucial auxiliary role in maintaining the stability of Stage II.

In summary, these experiments demonstrate that \textcolor{black}{TwoSAuction}, with its calibrated overbooking rate and integrated risk controls, provides a stable and efficient framework. Ablation studies confirm that removing any single risk control component compromises system stability and performance, validating our core design philosophy.
\color{black}

\begin{figure}[t!]
	\centering
	\subfigure[] {\includegraphics[trim=0cm 0.7cm 0cm 0cm, clip, width=.241\textwidth]{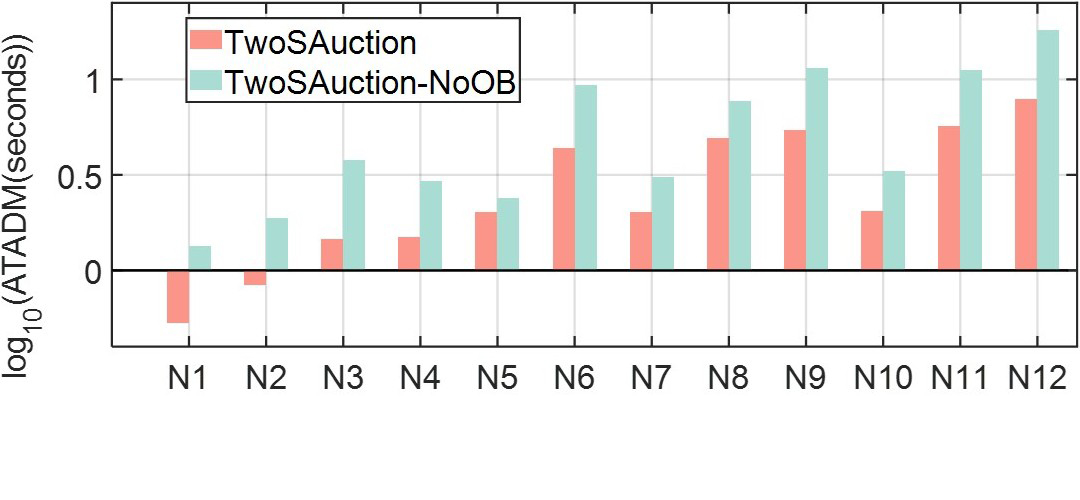}}
	\subfigure[] {\includegraphics[trim=0cm 0cm 0cm 0cm, clip, width=.241\textwidth,height=0.099\textwidth]{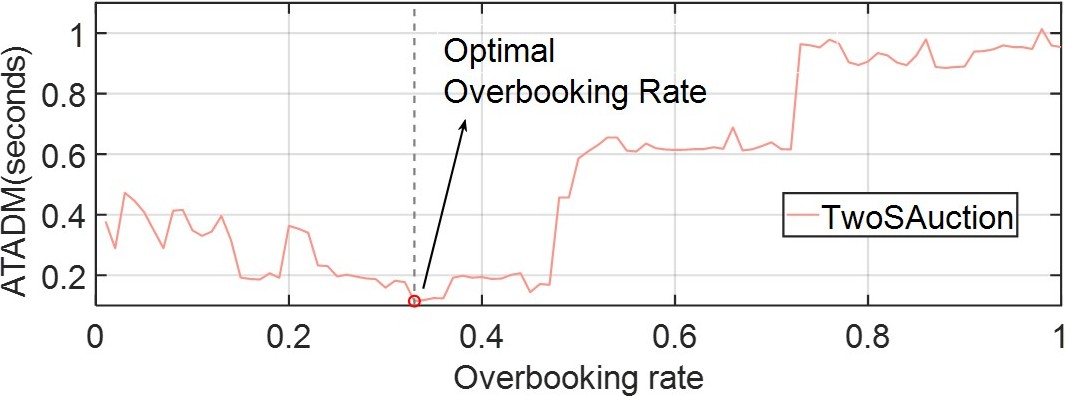}}
	\caption{(a) Analysis of time efficiency on a fixed overbooking rate 33\%. (b) Analysis of time efficiency upon having different overbooking rates.}
	\label{fig_E1}
\end{figure}


Fig. 8 illustrates the time efficiency impacted by overbooking, as reflected by average TimeADM. Given an overbooking rate (i.e., 33\%), the time consumed by our TwoSAuction is significantly lower than that of TwoSAuction-NoOB. This improvement is attributed to the increased number of pre-signed contracts enabled by overbooking, which can be directly fulfilled during each transaction, thereby enhancing time efficiency. Fig. 8(b) presents the curve of TimeADM across various overbooking rates, with the red circle on the y-axis marking a rate of 33\%. It is clear that a proper overbooking rate can significantly accelerate the auction process, promoting a time-efficient trading market.

\section{Conclusion and Future Work}
\noindent
This paper presents a novel two-stage double auction for resource scheduling in dynamic edge networks. By incorporating resource overbooking, our approach promotes long-term and real-time cooperation between edge servers and mobile devices. The OPDAuction enables long-term contracts for efficient resource provisioning under uncertainty, while the RBDAuction dynamically addresses residual demand to optimize allocation and social welfare. Experiments demonstrate that our mechanism outperforms conventional auctions while preserving key design properties. Future research may focus on refining the overbooking strategy to further mitigate the inherent risks of pre-auctions, enhancing overall robustness and reliability. Integrating machine learning techniques to predict and dynamically adapt to complex, fluctuating resource demand could improve the responsiveness and intelligence of resource scheduling. Additionally, extending the framework to multi-market and multi-resource scenarios would provide insights into its scalability, supporting effective deployment in real-world edge computing environments.


\newpage
\clearpage
\appendices
\section{Details of Algorithms}

\subsection{Member determination (Algorithm 1)}

Member determination represents a key feature of our OPDAuction, which is abbreviated as OPDAuction-MemberD for notational simplicity, shown in Algorithm 1. It aims to achieve feasible mappings between buyers and sellers, thereby supporting  the signing of long-term contracts between them.

\textcolor{black}{To bridge the gap between resource supply and demand, we first reorder buyers and sellers based on their bids and asks}\footnote{In the context of a double auction, the term ``ask'' refers to the price requested by a seller for their offered services, as commonly used in existing literature, which signifies the payment amount a seller seeks to receive. To streamline the expressions, we adopt the term ``ask'' as a noun to denote the asked price or payment requested by resource sellers, a convention also supported by existing works \cite{b25}.}\textcolor{black}{, respectively.} Given that each buyer can offer different bids for various sellers, we organize buyers in list $\mathcal{L}_b$ by following a non-ascending order of the average value of their bids (line 5), calculated by (18).
\begin{equation}\label{key}
	\begin{aligned}
		&bid^\mathsf{Avg}_{n'}=\frac{\sum_{s_{m'}\in\bm{\mathcal{S}}}{bid}_{m',n'}}{\left|\bm{\mathcal{S}}\right|},
	\end{aligned}
\end{equation}
while the list $\mathcal{L}_b$ of buyers can be expressed as 
\begin{equation}\label{key}
	\begin{aligned}
		\mathcal{L}_b\!=\!\left\{\!\left(\!k_b,b_{n'},bid^\mathsf{Avg}_{n'}\right)\middle| b_{n'}\!\in\!\bm{\mathcal{B}},\! \text{ non-ascending \! on \! } bid^\mathsf{Avg}_{n'}\!\right\}\!\!.
	\end{aligned}
\end{equation}

To capture the position of different buyers in list $\mathcal{L}_b$, we utilize a notation $k_b$ to indicate the index of each specific buyer in $ \mathcal{L}_b $. Also, let $ \mathcal{L}_b(k_b) $ refer to the buyer corresponding to index $ k_b $ in list $\mathcal{L}_b$, and  $ bid^{\mathsf{List}}_{k_b} $ denote the average value of bid of the buyer $ \mathcal{L}_b(k_b) $. Note that in $\mathcal{L}_b $, we have
\begin{equation}\label{key}
	\begin{aligned}
		&bid^{\mathsf{List}}_{1}\succ bid^{\mathsf{List}}_{2}\succ\cdots\succ bid^{\mathsf{List}}_{\left|\bm{\mathcal{B}}\right|}.
	\end{aligned}
\end{equation}

Similarly, sellers are sorted in a non-descending order according to their asked prices (line 6), for which we introduce list $ \mathcal{L}_s $ as (21).
\begin{equation}\label{key}
	\begin{aligned}
		&\mathcal{L}_s=\left\{(k_s,s_{m'},{ask}_{m'})|s_{m'}\in\bm{\mathcal{S}},\text{non-decending\ on\ } {ask}_{m'}\right\}.
	\end{aligned}
\end{equation}

Here, $ k_s $ represents the new index of sellers in list $ \mathcal{L}_s $. For better analysis, let $ \mathcal{L}_s(k_s) $ denote the seller corresponding to index $ k_s $ in list $ \mathcal{L}_s $, and $ ask^{\mathsf{List}}_{k_s} $ be the asked price of seller $ \mathcal{L}_s(k_s) $. Note that in $  \mathcal{L}_s $, we have: 
\begin{equation}\label{key}
	\begin{aligned}
		&ask^{\mathsf{List}}_{1}\prec ask^{\mathsf{List}}_{2}\prec \cdots\prec ask^{\mathsf{List}}_{\left|\bm{\mathcal{S}}\right|}.
	\end{aligned}
\end{equation}
where symbols ``$ \succ $'' and ``$ \prec $'' are used to indicate the order of elements within lists $ \mathcal{L}_b $ and $ \mathcal{L}_s $. For instance, ``$\succ$'' denotes that the element on left-hand side is positioned before that on its right-hand side in list $\mathcal{L}_b$, while ``$\prec$'' indicates a similar relationship in list $\mathcal{L}_s$.
To determine winning sellers and their members, the next crucial step is to find the key buyer and the key seller in given lists $ \mathcal{L}_b $ and $ \mathcal{L}_s $ (lines 7-18). To distinguish, we denote their corresponding indices as $ k_b^\ast $ (for the key buyer) and $ k_s^\ast $ (for the key seller), respectively. To maintain the property of budget balance, the values of $ k_b^\ast $ and $ k_s^\ast $ should satisfy the following two conditions:
\begin{equation}\label{key}
	\begin{aligned}
		&bid^{\mathsf{List}}_{k_b^\ast+1}\geq ask^{\mathsf{List}}_{k_s^\ast+1} \text{ and }bid^{\mathsf{List}}_{k_b^\ast+2}<ask^{\mathsf{List}}_{k_s^\ast+2},
	\end{aligned}
\end{equation}
\begin{equation}\label{key}
	\begin{aligned}
		&bid^{\mathsf{List}}_{k_b^\ast+1}\geq ask^{\mathsf{List}}_{k_s^\ast+1}
		\\&\text{and }(k_b^\ast+1=\left|\bm{\mathcal{B}}\right|\ \text{or} \ k_s^\ast+1=\left|\bm{\mathcal{S}}\right|),
	\end{aligned}
\end{equation}
where (23) ensures that bids $bid^{\mathsf{List}}_{1}, bid^{\mathsf{List}}_{2}, \ldots, bid^{\mathsf{List}}_{k_b^\ast+1}$ are sufficient to meet or exceed asks $ask^{\mathsf{List}}_{1}, ask^{\mathsf{List}}_{2}, \ldots, ask^{\mathsf{List}}_{k_s^\ast+1}$, while $ask^{\mathsf{List}}_{k_s^\ast+2}$ exceeds $bid^{\mathsf{List}}_{k_b^\ast+2}$. This implies that the revenue of the winning seller will meet or exceed their asking prices. When $ \mathcal{L}_b\left(k_b^\ast+1\right)$ refers to the last buyer in $ \mathcal{L}_b$, or $ \mathcal{L}_s\left(k_s^\ast+1\right)$ is the last seller in $ \mathcal{L}_s$, this indicates a continuous pricing structure without any pivot points. Consequently, we define the index of the last buyer/seller in these lists as the pivotal index, detailed by condition (24).

Recall that $\bm{\mathcal{F}_1}$ aims to maximize the expected value of social welfare, which involves aggregating the difference between profits gained by members and the costs incurred by successful sellers. An equivalent alternative consideration is to maximize the number of successful seller-buyer pairs, i.e., determining as many members for sellers as possible while keeping risks within acceptable limits, with the aim of effectively increasing the expected social welfare. To achieve such a goal, our proposed OPDAuction-MemberD given in Algorithm 1 is structured around three key phases. First, we start with generating lists $ \mathcal{L}_b $ and $\mathcal{L}_s$ (Phase 1, lines 4-5). According to the given lists, in Phase 2, we consider buyers from indexes $ k_b=(\left|\bm{\mathcal{B}}\right|-1)$ to $ k_b=1$ (line 8). Suppose that $n'$ and $m'$ denote two constant indices, if $ \mathcal{L}_b\left(n'\right) $ and $ \mathcal{L}_s\left(m'\right) $ satisfy conditions (19) and (20), OPDAuction-MemberD assesses the resource capacity of the first $ m' $ sellers and determines the maximum number of buyers that can be accommodated by these sellers (lines 12-18). Notably, the resource capacity does not necessarily have to match the number of winning buyers, as this step may not guarantee the property of truthfulness. When OPDAuction-MemberD finds better values of $ k_b^\ast $ and $ k_s^\ast $ that enable a larger number of winning buyer-seller pairs, it updates $ MaxNumB $, $ k_b^\ast $, and $ k_s^\ast $ accordingly, where $ MaxNumB $ denotes the number of buyers that selected sellers can serve (lines 15-17).\vfill

When coming to buyer-seller matching (Phase 3, lines 18-32), our proposed OPDAuction-MemberD considers only the top $ k_b^\ast $ buyers in $ \mathcal{L}_b $ and the top $ k_s^\ast $ sellers in $ \mathcal{L}_s $ (lines 21-23). Such a consideration can support the property of budget-balance, ensuring that the auctioneer will not incur a deficit. Then, we focus on selecting the optimal group of members for each seller to maximize the expectation of social welfare. This can be formulated as a 0-1 knapsack problem and effectively solved by using dynamic programming \cite{b26} (lines 24-27). Specifically, we start with creating an empty matrix $ U $ to record the difference between buyers' bids and sellers' asks. Then, according to $ \mathcal{L}_b $ and $ \mathcal{L}_s $, we methodically compile the essential elements required for a 0-1 knapsack problem (01KP), including: \textit{(i)} for each seller, we denote $ h_m $ as the amount of available RBs after overbooking, serving as the capacity of the knapsack (line 20); \textit{(ii)} from matrix $ U $, a set of buyers can be selected, with the optimal selection among them determined by solving 01KP. This is aimed at maximizing the profit for seller $ \mathcal{L}_s\left(m'\right) $; \textit{(iii)} we use the respective number of RBs required by each buyer as the weight array $ v $ in the 01KP (line 25); and \textit{iv)} the bids from buyers for $ \mathcal{L}_s\left(m'\right) $ are considered as the values $ w$ of items in 01KP, which is then resolved through dynamic programming (line 26).

For buyers who have not been selected in this round for $ \mathcal{L}_s\left(m'\right) $, their bids to sellers in $ \mathcal{L}_s\left(m'\right) $ are set to zero. This adjustment signifies that their maximum bid in their respective bid lists is disregarded, and the second-highest value in the original bid list is promoted to become the new maximum one. Such an operation ensures that these non-selected buyers are still able to participate in auctions conducted by other sellers, maintaining their engagement in the auction process without being prematurely excluded. Then, to support individual rationality, the selected buyer $ \mathcal{L}_b\left(n'\right) $ compares its bid (${bid}_{m',n'}$) for seller $ \mathcal{L}_s\left(m'\right) $ with the mean bid ($ {Mbid}_{n'} $) of its bid array. If $ {bid}_{m',n'} $ is greater than $ bid^{\mathsf{List}}_{n'} $, the matching between $ b_{n'} $ and $ \mathcal{L}_s\left(m'\right) $ is considered to be successful. Subsequently, the seller deducts the volume of the buyer's task from their available resources. Once the seller's resources are fully allocated, they conclude their participation in the auction. This ensures that transactions are mutually beneficial and adhere to the principles of individual rationality, preventing sellers from accepting bids that are too low, and buyers from paying more than their average bids.

The process described above continues iteratively until either all buyers have successfully reached agreements with sellers or all available resources have been allocated. This enables the auction to adapt to fluctuations in resource supply and demand, thereby maximizing the number of successful matches between buyers and sellers within the constraints of available resources.

\subsection{Long-term contract design (Algorithm 2)}
For subproblem $\bm{\mathcal{F}_{1b}}$, since each buyer $ b_{n'} $ has already identified its interested seller $ \mathcal{L}_s\left(m'\right) $ through Algorithm 1, we develop OPDAuction-ContractD by borrowing the idea of binary search algorithm to finalize the payments made by winning buyers (members), as detailed in Algorithm 2.
\textcolor{black}{This approach streamlines the negotiation process by efficiently determining contract terms, especially payment-related components, being mutually acceptable to both parties.}
At the beginning of Algorithm 2, we replace the value $ bid^{\mathsf{List}}_{n'} $ used in Algorithm 1 for participation in the double auction sorting with ${bid}_{m',n'}$ (lines 5-8). This adjustment ensures that the negotiation for long-term contracts reflects the genuine valuation for $b_{n'}$ of service provided by $s_{m'}$, facilitating a more accurate and fair contract agreement.  Subsequently, during the pricing process for \( b_{n'} \) in Algorithm 2, it consistently uses the actual bids \( bid_{m',n'} \) as the sorting criterion for the double auction. This step ensures that the buyer's payment can be calculated according to its true bid for matching the seller, thereby preparing to guarantee the buyer's truthfulness.

Note that our designed OPDAuction-ContractD only considers the winning buyers (lines 6-15). For each winner with index \( b_{n'} \in \mathcal{L}_b \), the lower bound is set as $ bid^{\mathsf{List}}_{k_b+1} $ while the upper bound is set by $ bid^{\mathsf{List}}_{n} $ (line 8). This is due to a key criterion: the final price paid by the winning buyer should not be lower than the seller's asking price nor higher than the buyer's initial bid. Then, our OPDAuction-ContractD leverages the binary search algorithm \cite{b25} to identify the minimum acceptable price within the range of lower- and upper-bound prices that can make a buyer to win (lines 9-15). This process efficiently narrows down the optimal price point that satisfies both the buyer's and seller's constraints. Afterwards, OPDAuction-ContractD also addresses the pricing process for winning sellers (lines 17-19), which parallels the methodology applied to buyers. Nonetheless, when the pricing determination condition is satisfied, the algorithm sets the lower bound $ low = {ask}_{m'} $. This operation ensures that sellers receive at least their asking price while simultaneously minimizing the costs for buyers. Ultimately, OPDAuction-ContractD concludes by returning the payment vectors for both buyers and sellers (line 20), summarizing the financial transactions to be made following the auction outcomes. Then, we use a similar method to set prices for sellers (lines 21-32), which will not be reiterated here.

\subsection{Overbooking rate optimization (Algorithm 3)}
 Strategic overbooking serves as a critical mechanism to address dynamic resource demand-supply imbalances while maintaining market efficiency. Algorithm 3 systematically identifies the optimal overbooking rate $\lambda^*$ through a three-phase process that integrates binary search with parallel risk evaluation.  
	
	The algorithm initiates by precomputing risk thresholds ($\xi^\mathsf{S}$, $\xi^\mathsf{B}$, $\xi^\mathsf{V}$) and initializing a sorted candidate set $\Lambda = \{1\%,(1+\Delta\lambda)\%,...,100\%\}$ (Phase 1). For each candidate $\lambda$, it calculates adjusted seller capacities $\overline{R_m} = c_m \times (1+\lambda)$ and verifies basic capacity constraints through a binary search paradigm (Phase 2). 
\textcolor{black}{The key innovation is an efficient evaluation step that simultaneously performs three critical operations, while discarding infeasible overbooking rates early to reduce computation}: \textit{(i)} Invoking OPDAuction-MemberD to determine winning buyer-seller pairs under current $\lambda$, \textit{(ii)} Deriving contract terms $[p^\mathsf{b},r^\mathsf{s}]$ via OPDAuction-ContractD, and \textit{(iii)} Conducting parallel risk assessments using seller risk function $\mathcal{R}^\mathsf{SRisk}_m = f(\mathbbm{d}_m, \overline{R_m})$ and buyer risk function $\mathcal{R}^\mathsf{BRisk}_n = g(\mathbbm{a}_n, t_n)$.  
	
	Early termination mechanisms automatically discard $\lambda$ values exceeding predefined risk thresholds, while maintaining records of feasible candidates in matrices $\bm{U^\#}$, $\bm{p}$, $\bm{r}$, and $\bm{X^\#}$. The social welfare metric $USW = \sum_{m,n} (bid_{m,n}-ask_m) \cdot x_{m,n}$ guides the binary search direction, prioritizing candidates with higher welfare values. Phase 3 employs golden-section search for neighborhood refinement when multiple $\lambda$ values yield comparable welfare, ultimately selecting the configuration maximizing both welfare and successful buyer-seller matches. The algorithm outputs optimal contracts $\mathbb{C}_{m,n}^\mathsf{b\leftrightarrow s}$ through coordinated evaluation of risk profiles and economic efficiency, ensuring balanced market participation under dynamic edge network conditions.

\section{Details of RBDAuction}

\subsection{Basic Modeling}
\subsubsection{Utility of buyers (regarding $\tilde{\bm{\mathcal{B}}}$)}
During the second stage, each buyer $ \tilde{b}_j\in\tilde{\bm{\mathcal{B}}} $ can be denoted by a 3-tuple $ \tilde{b}_j=\left\{\tilde{t}_j,\tilde{v}_{i,j},\tilde{bid}_{i,j}\right\} $.
Utility of $\tilde{b}_j$ consists of the following two aspects: \textit{(i)} the amount of resources acquired from $ \tilde{s}_i $, and \textit{(ii)} the unit net profit that $ \tilde{b}_j $ can obtain from enjoying computing service from $ \tilde{s}_i $. We accordingly calculate the utility $ \tilde{U}^\mathsf{B}_{j} $ of each buyer $ \tilde{b}_j\in\tilde{\bm{\mathcal{B}}} $ as:
\begin{equation}\label{key}
	\begin{aligned}
		&\tilde{U}^\mathsf{B}_{j}\left(\tilde{t}_j,\tilde{p}_j^\mathsf{b} \right)=\sum_{\tilde{s}_i\in \tilde{\bm{\mathcal{S}}}}{\tilde{x}}_{i,j}\tilde{t}_j\left(\tilde{v}_{i,j}-\tilde{p}_j^\mathsf{b}\right)
	\end{aligned}.
\end{equation}

Note that in Stage II, the utility of the members who have successfully purchased resources can be directly determined by the pre-signed contract during Stage I. The guests included $\tilde{\bm{\mathcal{B}}}$ can have their utility calculated by (25). Besides, the utility of volunteers consists of two parts: the penalty a volunteer may receive from the contractual seller, and the utility it obtains from participating in RBDAuction after joining $\tilde{\bm{\mathcal{B}}}$ by (25).

\subsubsection{Utility of sellers (regarding $\tilde{\bm{\mathcal{S}}}$)}
Correspondingly, each seller $ \tilde{s}_i\in\tilde{\bm{\mathcal{S}}} $ can be denoted by a 3-tuple $ \tilde{s}_i=\left\{\tilde{c}_i,\tilde{ask}_i,\tilde{R}_{i}\right\} $, where $ \tilde{R}_{i} $ signifies the remaining idle resources of seller $ \tilde{s}_i $.
Then, we can calculate the utility of each seller in additional auction as benefited by its remaining resources as
\begin{equation}\label{key}
	\begin{aligned}
		&\tilde{U}^\mathsf{S}_{i}\left(\tilde{r}_i^\mathsf{s},\tilde{c}_i \right)=\sum_{\tilde{b}_j\in \tilde{\bm{\mathcal{B}}}}{\tilde{x}}_{i,j}\tilde{t}_j\left(\tilde{r}_i^\mathsf{s}-\tilde{c}_i\right)
	\end{aligned}.
\end{equation}

Apparently, the overall utility of a seller depends on two factors: \textit{(i)} the income it receives from members who have successfully enjoyed computing services, and the compensation paid to volunteers or the utility calculated by (26).

Moreover, in Stage II, the auctioneer is also responsible for coordinating the implementation of pre-signed long-term contracts. Meanwhile, for buyers in $\tilde{\bm{\mathcal{B}}}$ and sellers in $\tilde{\bm{\mathcal{S}}}$, it also helps determine winning buyer-seller pairs as well as their prices.
Accordingly, the utility of the auctioneer in Stage II can be defined by the difference between the payments from buyers and the rewards to sellers,
\begin{equation}\label{key}
	\begin{aligned}
		&\tilde{U}^\mathsf{P}\left(\tilde{t}_j,\tilde{p}_j^\mathsf{b},\tilde{r}_i^\mathsf{s} \right)=\sum_{\tilde{b}_j\in \tilde{\bm{\mathcal{B}}}}\sum_{\tilde{s}_i\in \tilde{\bm{\mathcal{S}}}}{\tilde{x}}_{i,j}\tilde{t}_j\left(\tilde{p}_j^\mathsf{b}-\tilde{r}_i^\mathsf{s}\right)
	\end{aligned}.
\end{equation}

And the practical social welfare defined by
\begin{equation}\label{key}
	\begin{aligned}
		&\tilde{U}^\mathsf{SW}=\sum_{\tilde{b}_j\in\tilde{\bm{\mathcal{B}}}} \tilde{U}^\mathsf{B}_{j}+\sum_{\tilde{s}_i\in\tilde{\bm{\mathcal{S}}}} \tilde{U}^\mathsf{S}_{i}+\tilde{U}^\mathsf{P}
		\\&=\sum_{\tilde{b}_j\in \tilde{\bm{\mathcal{B}}}}\sum_{\tilde{s}_i\in \tilde{\bm{\mathcal{S}}}}{\tilde{x}}_{i,j}\tilde{t}_j\left(\tilde{v}_{i,j}-\tilde{c}_i\right)
	\end{aligned}.
\end{equation}

\subsection{Solution design}
\begin{algorithm}[]
	{\small  
		\caption{\small{Realtime Backup Double Auction}}
		\LinesNumbered 
		
		{\bf{Input :}} 
		$ \tilde{\bm{\mathcal{B}}},\tilde{t}_j,\tilde{v}_{i,j},\tilde{bid}_{i,j},\tilde{\mathbbm{a}}_j,\tilde{\bm{\mathcal{S}}},\tilde{c}_i,\tilde{ask}_i,\tilde{\mathbbm{d}}_{i},\tilde{\mathbbm{r}}_{i}, \bm{X}^*, \mathbb{B}, \mathbb{R}$
		
		{\bf{Output :}} 
		$ \tilde{\bm{X}},\tilde{\bm{p}}^\mathsf{\bm{b}},\tilde{\bm{r}}^\mathsf{\bm{s}} $
		
		{\bf{Initialization :}} 
		$ \tilde{\bm{X}}\leftarrow \emptyset,\bm{p_1}\leftarrow \emptyset,\bm{r_1}\leftarrow \emptyset,\bm{volunteer}\leftarrow \emptyset $
		
		\textbf{\# Phase 1: Extracting users complying with the agreements}\\ 
		
		\For{
			each $ i =\left|\bm{\mathcal{S}}\right|, ..., 1 $
		}{
			\For{
				each $ j = \left|\bm{\mathcal{B}}\right|, ..., 1 $
			}{
				\If{$ x_{i,j}=1$ and $\mathbb{B}(j)=1 $
				}{
					$ v\ \leftarrow \ \left[v,t_j\right],w\leftarrow \left[w,\tilde{bid}_{i,j}\right],R1\ \leftarrow \ R1+t_j $
				}
				\If{$R1 > \mathbb{R}(i)$
				}
				{$ c \leftarrow \mathbb{R}\left(i\right) $
					
					$ \bm{t}\leftarrow KP\left(c,v,w\right) $
				}
				\If{$ \bm{t}\left(j\right)=1 $
				}{
					$ x_{i,j}^\mathsf{**}\leftarrow1,\bm{p_1}\leftarrow \left[\bm{p_1},p_j^\mathsf{b}\right],\bm{r_1}\leftarrow [\bm{r_1},r_j^\mathsf{s}] $
				}
				\Else{$ \bm{volunteer}\ \leftarrow \ [\bm{volunteer},j] $
					
					$ x_{i,j}^\mathsf{**}\leftarrow 0 $
				}	
			}
		}
		\textbf{\# Phase 2: Updating users participating in stage II auction}\\ 
		\For{
			each $ i =\left|\bm{\mathcal{S}}\right|, ..., 1 $
		}{
			\For{
				each $ j = \left|\bm{\mathcal{B}}\right|, ..., 1 $
			}{
				\If{$ x_{i,j}^\mathsf{**}=0$ and $\mathbb{B}(j)=1 $
				}{
					$ \tilde{\bm{\mathcal{B}}}\leftarrow[\tilde{\bm{\mathcal{B}}},b_j] $
				}
				\If{$ x_{i,j}^\mathsf{**}=1 $
				}{
					$ \mathbb{R}\left(i\right)\leftarrow\ \mathbb{R}\left(i\right)-t_j $}	
			}
		}
		\textbf{\# Phase 3: Stage II Auction}\\ 
		$\tilde{\bm{X}}\leftarrow$\textbf{Algorithm 1}$\left(\tilde{\bm{\mathcal{B}}}, \tilde{\bm{\mathcal{S}}}\right)$\\ 
		$\tilde{\bm{p}}^\mathsf{\bm{b}},\tilde{\bm{r}}^\mathsf{\bm{s}}\leftarrow$\textbf{Algorithm 2}$\left(\tilde{\bm{\mathcal{B}}}, \tilde{\bm{\mathcal{S}}}\right)$\\ 
		\Return{$ \tilde{\bm{X}},\tilde{\bm{p}}^\mathsf{\bm{b}},\tilde{\bm{r}}^\mathsf{\bm{s}} $.}
		
	}				
\end{algorithm}

Details of our proposed RBDAuction are given by Algorithm 4, which first identifies the buyer-seller pairs complying with the long-term agreements based on the actual attendance of buyers ($ \mathbb{B} $) and the actual available idle resources of sellers ($ \mathbb{R} $). A dynamic programming knapsack problem is utilized to determine the members that can practically obtain services for each seller, as specified in pre-signed long-term contracts, while the others remain as volunteers due to limited resource supply (lines 5-16, Algorithm 4). Next, volunteers and guests can be engaged in the current auction, described by $\bm{\mathcal{B}}^\prime$ and $\bm{\mathcal{S}}^\prime$, as shown by lines 18-23, Algorithm 4. Then, we design algorithms similar to Algorithms 1-2 to select winning seller-buyer pairs, thus yielding a final solution for $ \tilde{\bm{X}} $.

\section{Property Analysis}
\color{black}
\subsection{Individual rationality}
 \begin{thm} All the buyers and sellers in our proposed two-stage double auction are individual rational. \end{thm}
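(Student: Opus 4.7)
The plan is to decompose the claim along two orthogonal axes, by stage (Stage I/Stage II) and by role (buyer/seller), then verify in each of the four resulting cases that the price constructed by our algorithms satisfies the bid/ask bound required by Definition 3: a winning buyer pays at most its bid and a winning seller is paid at least its ask. Throughout, I would treat the pre-signed contract in Stage I as the primitive object whose individual rationality is inherited by any member honoring the contract in Stage II, so the core work is an IR argument for Stage I plus a direct reuse of the same argument for the residual buyers/sellers in RBDAuction.

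For Stage I, I would read off the bounds directly from Algorithm 2 (OPDAuction-ContractD). The buyer branch runs a binary search for $p_{n'}^{\mathsf{b}}$ within the interval $[bid^{\mathsf{List}}_{k_b^\ast+1},\ bid^{\mathsf{List}}_{n'}]$, where the replacement in lines 5--8 identifies $bid^{\mathsf{List}}_{n'}$ with the true per-seller bid $bid_{m',n'}$ of the winning pair. The search output therefore satisfies $p_{n'}^{\mathsf{b}} \leq bid_{m',n'}$, which is exactly the buyer-side condition of Definition 3. The seller branch is symmetric: $r_{m'}^{\mathsf{s}}$ is confined to the interval bounded below by $ask_{m'}$, yielding $r_{m'}^{\mathsf{s}} \geq ask_{m'}$. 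Existence/non-emptiness of both intervals follows from the key-index conditions (23)--(24) in Phase 2 of Algorithm 1, which guarantee $bid^{\mathsf{List}}_{k_b^\ast+1} \geq ask^{\mathsf{List}}_{k_s^\ast+1}$, so the search always terminates at a feasible price. I would also note that the additional filter $bid_{m',n'} \geq bid^{\mathsf{Avg}}_{n'}$ in Phase 3 of Algorithm 1 only sharpens the bound and cannot violate it.

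For Stage II, I would split $\tilde{\bm{\mathcal{B}}}$ and $\tilde{\bm{\mathcal{S}}}$ according to Algorithm 4: members whose contracts are honored carry the Stage I prices $(p_n^{\mathsf{b}}, r_m^{\mathsf{s}})$ unchanged, so their IR is inherited from the Stage I argument above; volunteers and guests are pushed into a fresh auction that invokes Algorithms 1 and 2 (lines 26--27 of Algorithm 4) on the residual sets. Because that invocation is identical in structure to Stage I, the same binary-search bound gives $\tilde{p}_j^{\mathsf{b}} \leq \tilde{bid}_{i,j}$ and $\tilde{r}_i^{\mathsf{s}} \geq \tilde{ask}_i$, so RBDAuction is individually rational by the very same mechanism.

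The main obstacle, and what I would spend the most care on, is reconciling the bid/ask-level statement of Definition 3 with the richer utility expressions (1) and (6) that carry penalties and compensations. Specifically, for a member who turns into a volunteer, no payment at rate $p_n^{\mathsf{b}}$ is actually exchanged; the transfer is instead the compensation $q_{m,n}^{\mathsf{s}\to\mathsf{b}}$, set to $\mu p_n^{\mathsf{b}}$ in Algorithm 2 with $\mu\in[0,1]$. I would argue that Definition 3 constrains only the service price channel, and since $\mu p_n^{\mathsf{b}} \geq 0$ augments (rather than depletes) the buyer, the bound is preserved; symmetrically, the absentee penalty $q_{m,n}^{\mathsf{b}\to\mathsf{s}} = \mu r_m^{\mathsf{s}}$ only adds to the seller's income and respects its ask. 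A similar reconciliation must be stated for the re-entry of volunteers into RBDAuction, where their compensation from Stage I and their (possibly new) payment in Stage II are additive but independent components, each dominated by the per-stage IR bound. Once these accounting points are explicit, the theorem follows by combining the four cases.
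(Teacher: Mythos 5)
Your proposal is correct and follows the same backbone as the paper's proof: decompose by stage and by role, read the buyer bound $p_{n'}^{\mathsf{b}} \leq bid_{m',n'}$ and the seller bound $r_{m'}^{\mathsf{s}} \geq ask_{m'}$ directly off the binary searches in Algorithm~2, and let Stage~II inherit the argument because RBDAuction re-invokes the same matching and pricing routines on the residual sets. You add two useful touches the paper leaves implicit, namely the non-emptiness of the search intervals via the key-index conditions (23)--(24) and the observation that the $bid_{m',n'} \geq bid^{\mathsf{Avg}}_{n'}$ filter only tightens the bound. Where the two arguments genuinely diverge is in the reconciliation of Definition~3 with the full utility expressions (1) and (6): the paper handles volunteers and absences by invoking the risk constraints (C1)--(C3) to argue that the \emph{expected} utility stays non-negative, whereas you argue channel by channel that each transfer ($q^{\mathsf{s\to b}}_{m,n} = \mu p_n^{\mathsf{b}}$, $q^{\mathsf{b\to s}}_{m,n} = \mu r_m^{\mathsf{s}}$) respects the literal bid/ask bounds of Definition~3. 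Your accounting is cleaner for the definition as stated, but note that you only discuss the absentee penalty from the seller's side; an absent buyer pays $\mu r_m^{\mathsf{s}} t_n$ and receives nothing, so its realized utility in that event is negative, and closing that case requires either the paper's expected-utility/risk-constraint argument or an explicit appeal to the fact that $\mu r_m^{\mathsf{s}} \leq p_n^{\mathsf{b}} \leq bid_{m,n}$ keeps even this expense within the bid. Adding one sentence to that effect would make your version fully airtight.
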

\begin{proof}
			The individual rationality of our mechanism is guaranteed by its fundamental design principles in both stages.

\noindent
\textbf{\textit{(i)} Stage I (OPDAuction):}
The core of individual rationality in Stage I is embedded in algorithms for member determination (Algorithm 1) and contract design (Algorithm 2).

$\bullet$ \textit{Buyer Side:} In Algorithm 2 (lines 9-15), the payment for a winning buyer $b_{n'}$ is determined via a binary search between its true bid $bid_{m',n'}$ (which replaces $bid^{\mathsf{List}}_{n'}$ for contract negotiation) and the critical price $bid^{\mathsf{List}}_{k_b^*+1}$. This ensures that the final payment $p_{n'}^\mathsf{b}$ satisfies $p_{n'}^\mathsf{b} \leq bid_{m',n'}$. Therefore, the buyer's utility, as defined in Eq. (2), remains non-negative for the transaction itself (before considering the probabilistic risk of becoming a volunteer).

$\bullet$ \textit{Seller Side:} Similarly, Algorithm 2 (lines 22-31) sets the reward for a winning seller $s_{m'}$ via a binary search between its true ask $ask_{m'}$ and the critical price $ask^{\mathsf{List}}_{k_s^*+1}$. This ensures $r_{m'}^\mathsf{s} \geq ask_{m'} \geq c_m$, guaranteeing that the seller's reward covers its cost, leading to a non-negative utility as per Eq. (7).

$\bullet$ \textit{Risk Management:} Crucially, our mechanism acknowledges that the pre-auction nature introduces risk (e.g., a buyer becoming a volunteer or a seller facing overbooking). To ensure individual rationality holds in expectation, we incorporate explicit risk constraints ((C1)-(C3)) in the optimization problem $\bm{\mathcal{F}_1}$. These constraints bound the probability of a participant receiving an unsatisfactory utility. By limiting these risks to acceptable thresholds ($\xi^M, \xi^V, \xi^S$), we ensure that the \textit{expected utility} for all participants, as calculated in Eqs. (2) and (7), remains non-negative. This is a key enhancement over a naive pre-auction, where participants might be forced into contracts with high risk of negative outcomes.

\noindent
\textbf{\textit{(ii)} Stage II (RBDAuction):}
The real-time backup auction inherits and reinforces individual rationality from Stage I.

$\bullet$ The RBDAuction uses algorithms analogous to Algorithms 1 and 2. Therefore, the same logic applies: winning buyers pay no more than their bids (Eq. 25), and winning sellers receive no less than their asks (Eq. 26), which are set to cover their costs.

$\bullet$ For volunteers, their utility is protected by the compensation $q_{m,n}^\mathsf{s\to b}$ received from their contractual seller, as defined in Eq. (1). This compensation, coupled with the opportunity to compete in the RBDAuction, ensures that their overall utility is managed and kept within acceptable bounds, as enforced by the VRisk constraint (C2) in Stage I.

In summary, individual rationality is not an afterthought but a foundational property built into the core algorithms of both stages. The price determination mechanisms (Algorithms 1 and 2) directly enforce it for the immediate transaction, while the risk management framework (constraints (C1)-(C3)) ensures it holds in expectation over the uncertain outcomes of the pre-auction. This comprehensive approach guarantees that participation in our two-stage auction is a rational choice for all market participants .
\end{proof}
\subsection{Truthfulness}
Recall previous discussions, different from conventional auctions, our proposed double auction consists of two stages. We subsequently analyze the truthfulness regarding both Stage I and Stage II, followed by a comprehensive evaluation of the entire auction procedure. In what follows, we first focus on assessing the truthfulness of buyers.
 \begin{thm} All the buyers in our proposed two-stage double auction are truthful. \end{thm}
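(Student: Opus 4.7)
The plan is to invoke the standard critical-price characterization of truthfulness for single-parameter mechanisms: an allocation rule coupled with a payment rule is truthful if and only if the allocation is monotone in the reported bid and the payment of any winner equals the infimum bid at which it remains a winner. I would apply this template separately to Stage~I and Stage~II, then compose the two arguments. For each buyer $b_n$ matched with seller $s_{m'}$, the binary search inside Algorithm~2 (lines~9--15) is by construction returning exactly this critical value $p_n^\mathsf{b}$, which is the key structural reason the payment rule is well-suited to a truthfulness proof.

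First I would fix an arbitrary buyer $b_n$, hold the reports of all other participants constant, and perform a case analysis on Stage~I deviations. The four cases to check are: (i)~$b_n$ wins truthfully and over-reports $bid'_{m',n} > v_{m',n}$, where I would argue that since $bid'_{m',n}$ still exceeds the critical threshold $p_n^\mathsf{b}$, Algorithm~1 produces the same match and Algorithm~2 returns the same payment; (ii)~$b_n$ wins truthfully and under-reports, where either the binary search pushes the payment below $v_{m',n}$ without changing membership (no gain) or the buyer falls out of the top-$k_b^\ast$ list in the re-run of Algorithm~1 and loses; (iii)~$b_n$ loses truthfully and over-reports enough to win, in which case the critical price necessarily exceeds $v_{m',n}$, making the per-RB profit $v_{m',n} - p_n^\mathsf{b}$ non-positive; (iv)~$b_n$ loses truthfully and under-reports, which clearly does not help. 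In each case the expected utility $\overline{U_n^\mathsf{B}}$ in (2) cannot increase, and the auxiliary volunteer-compensation and penalty terms only strengthen the disincentive because they scale with the same assignment indicator $x_{m,n}$.

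The main obstacle is that each buyer submits a \emph{vector} of bids, and the sort on $\mathcal{L}_b$ uses the average $bid_n^\mathsf{Avg}$ rather than a per-seller bid. This opens the door to cross-bid manipulations: inflating bids to sellers one does not want could in principle raise one's list position and improve matching opportunities with the desired seller. I would close this loophole by showing two things. First, even after a manipulated average elevates $b_n$'s position in $\mathcal{L}_b$, Algorithm~1's Phase~3 runs the per-seller knapsack with values $w = bid_{m,n}$, and Algorithm~2 immediately restores the per-seller bid before pricing (lines~5--8); hence the contract with a specific $s_{m'}$ depends only on $bid_{m',n}$ once $b_n$ is in the candidate set. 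Second, if inflation reroutes $b_n$ to a different seller $s_{m''}$ for which $v_{m'',n}$ is less than the resulting payment, the deviation is strictly harmful by case~(iii) above, so no cross-seller manipulation is profitable.

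Finally, I would lift the argument to Stage~II by observing that RBDAuction invokes structural copies of Algorithms~1--2 on the residual sets $\tilde{\bm{\mathcal{B}}}$ and $\tilde{\bm{\mathcal{S}}}$, so the same critical-price reasoning applies verbatim to guests and volunteers. Composition is clean because Stage~I contracts are irrevocable: a buyer cannot rewrite its Stage~I report after observing Stage~II realizations, so the two stages decompose into independent single-stage truthfulness problems. I expect the bookkeeping around the average-bid sorting to be the most delicate step, while the risk-aware overbooking rate $\lambda^\ast$ chosen by Algorithm~3 does not interfere because it is determined by market-level statistics rather than by any individual $bid_{m,n}$.
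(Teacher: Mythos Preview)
Your plan is the standard Myerson critical-price route, and it is cleaner than what the paper does. Unfortunately it proves more than the mechanism actually delivers, and the paper's own proof concedes this. The gap is precisely in your case~(iii): you assert that when a losing buyer over-reports and wins, ``the critical price necessarily exceeds $v_{m',n}$, making the per-RB profit $v_{m',n}-p_n^\mathsf{b}$ non-positive.'' This is false for the allocation/payment rule of Algorithms~1--2. A buyer whose \emph{average} bid falls below the threshold $bid^{\mathsf{List}}_{k_b^\ast}$---and who is therefore excluded from Phase~3 truthfully---may still have a high true valuation $v_{m',n}$ for one particular seller $s_{m'}$. By inflating bids to other sellers (raising the average), such a buyer can enter the top-$k_b^\ast$ candidate set, be matched to $s_{m'}$ in the knapsack, and pay a price $p_{n'}^{\mathsf{b}'}$ whose lower bound is $bid^{\mathsf{List}}_{k_b^\ast+1}$, an \emph{average} that can sit strictly below $v_{m',n}$. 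The paper states this explicitly: ``we will have ${\overline{U_{n'}^\mathsf{B}}}'=bid_{m',n'}-p_{n'}^{\mathsf{b}'}>0=\overline{U_{n'}^\mathsf{B}}$. Regarding this, the non-member $b_{n'}$ will obtain an expected utility that originally does not belong to it through false quotations.'' It then argues the case is ``exceedingly rare'' and appeals to risk constraints and the sealed-bid setting to call the mechanism truthful in practice.

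Your attempt to close the loophole does not go through. You argue that once $b_n$ is in the candidate set, ``the contract with a specific $s_{m'}$ depends only on $bid_{m',n}$''; but entry into the candidate set is itself controlled by the average, and the binary search in Algorithm~2 lower-bounds the payment by the $(k_b^\ast{+}1)$-th \emph{average} bid, not by anything tied to $v_{m',n}$. Your second defence---that rerouting to an undesired $s_{m''}$ is harmful---does not cover the problematic case where the manipulation routes the buyer to the \emph{desired} $s_{m'}$. More fundamentally, the Myerson template requires monotonicity of the allocation in a scalar bid, whereas here the allocation depends on a bid vector through averaging, sorting, index determination in Phase~2, a knapsack with bid-dependent values, and finally a filter $bid_{m',n'}\geq bid^\mathsf{Avg}_{n'}$; you have not verified monotonicity for any of these steps, and the paper does not claim it. The honest statement the paper proves is an approximate, case-by-case one with an acknowledged exception, not the exact truthfulness your argument is aimed at.
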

\begin{proof}
		Regarding Stage I, our focus firstly drops on each member denoted by $ b_{n'} $. For analytical simplicity, we suppose that a member indicates a winning buyer who has successfully signed a long-term contract with a certain seller. Accordingly, we consider the following cases by testing the impacts of possible misreporting behavior of a member.

\noindent
\textbf{Case 1.} Let the fake bid of $b_{n'}$ be $ bid^{\mathsf{List}'}_{n'}$, we first discuss the case where $ bid^{\mathsf{List}'}_{n'}\ge bid^{\mathsf{List}}_{n'} $. Considering the sorting mechanism in our proposed OPDAuction (e.g., $\mathcal{L}_b $), $b_{n'}$ will persist as a winner (member). Furthermore, given the unchanged positions of other buyers in the sort of bids, the payment for $b_{n'}$ will also remain unaffected. Thus, it has no motivation to raise its bid than its true value.

\noindent
\textbf{Case 2.}
When considering the value of $b_{n'}$'s bid falling below its true bid, i.e., $ bid^{\mathsf{List}'}_{n'}<bid^{\mathsf{List}}_{n'} $, we discuss the following two sub-cases.

$\bullet$ \textit{Case 2.1.} Buyer $ b_{n'} $ still wins in OPDAuction, and we have the payment incurred by its misreported bid denoted by $ p_{n'}^\mathsf{b'} $, and $ p_{n'}^\mathsf{b'} = p_{n'}^\mathsf{b} $. Moreover, the expected utility of this buyer obtained by untruthful bid is represented by $ {\overline{U_{n'}^\mathsf{B}}}'$, and we have $ {\overline{U_{n'}^\mathsf{B}}}'=\overline{U_{n'}^\mathsf{B}} $.

$\bullet$ \textit{Case 2.2.} Buyer $ b_{n'} $ loses in OPDAuction, and its expected utility becomes zero. Thus, we have $ {\overline{U_{n'}^\mathsf{B}}}'\le\overline{U_{n'}^\mathsf{B}} $. Apparently, the expected utility of $b_{n'}$ in Stage I by misreporting its bid will not be higher than that with its true bid.

We next consider $b_{n'}$ as a non-member, indicating that this buyer loses in our OPDAuction, namely, unable to sign a long-term contract with a seller.
Generally, non-members can be categorized into two groups. One group includes non-members whose true bids have already exceeded the crucial price, $ bid^{\mathsf{List}}_{k_b^\ast} $, yet they fail in Algorithm 1. Irrespective of their attempts to falsify their bids, these non-members cannot emerge victorious in the buyer competition of Algorithm 1. Consequently, they are destined never to secure long-term contracts, leading to $ {\overline{U_{n'}^\mathsf{B}}}'= \overline{U_{n'}^\mathsf{B}} = 0 $ indefinitely. For these non-members, our mechanism remains truthful.

The second group comprises non-members whose true bids fall below the crucial price, namely, $ bid^{\mathsf{List}}_{k_b^\ast} $. Originally ineligible for participation in phase 3 of Algorithm 1, if they fabricate their bids, two scenarios can arise.

\textit{\textit{(a)}} When the bid value falls below the true one, i.e., $ bid^{\mathsf{List}'}_{n'} < bid^{\mathsf{List}}_{n'} $, clearly, its bid remains below the critical bid $ bid^{\mathsf{List}}_{k_b^\ast} $, thus it will still be the unsuccessful bidder. Therefore, we have $ {\overline{U_{n'}^\mathsf{B}}}'= \overline{U_{n'}^\mathsf{B}} = 0 $.

\textit{\textit{(b)}} When the bid value stays larger than the true one, i.e., $ bid^{\mathsf{List}'}_{n'} > bid^{\mathsf{List}}_{n'} $, we analyze the following two cases.

$\bullet$ $ b_{n'} $ is still a loser and its utility stays unchanged, i.e., $ {\overline{U_{n'}^\mathsf{B}}}'= \overline{U_{n'}^\mathsf{B}} = 0 $.

$\bullet$ $ b_{n'} $ becomes a winner. As per our pricing algorithm (Algorithm 2, concerning the $ {bid}_{m',n'} $ corresponding to the seller matched with $ b_{n'} $), we have $ {bid}_{m',n'}\geq p_{n'}^\mathsf{b'} \geq p_{k_b^\ast}^\mathsf{b}> bid^{\mathsf{List}}_{n'} $. Thus, we will have $ {\overline{U_{n'}^\mathsf{B}}}'= {bid}_{m',n'} - p_{n'}^\mathsf{b'} > 0 = \overline{U_{n'}^\mathsf{B}} $. Regarding this, the non-member $ b_{n'} $ will obtain an expected utility that originally does not belong to it through false quotations. Our mechanism appears less genuine when facing such an occurrence, but this is exceedingly rare, to the extent that it did not even manifest in our extensive simulations. This minor imperfection is a direct consequence of our pre-auction design, which prioritizes market stability and efficiency over absolute theoretical truthfulness. In a purely real-time auction, such a scenario might be more easily eliminated, but at the cost of the time efficiency and robustness that our two-stage design provides.

During Stage II, participated buyers can be categorized into: \textit{\textit{(i)}} members who have successfully enjoyed computing services as stipulated by long-term contracts, \textit{\textit{(ii)}} members who fail to obtain required resources (i.e., volunteers), and \textit{\textit{(iii)}} guests without long-term contracts. Among these, volunteers and guests can form a new buyer coalition. Hereafter, we focus on testing whether our proposed RBDAuction ensures truthfulness for them.

Since the auction process in the second stage is similar to the first stage, only one circumstance may challenge the truthfulness: when a losing buyer $ \tilde{b}_j $, with its truthful bid below the threshold bid, wins the auction by falsely reporting a higher bid. Similar to our previous discussions, this case is extremely rare during the whole auction process.

Upon integrating both the two stages, our mechanism, in essence, maintains truthfulness for buyers overall, exhibiting only slight imperfections in highly exceptional circumstances due to the uncertainties inherent in the pre-auction and overbooking process. This represents our unique consideration, differing from other existing works that often assume static or purely real-time environments . In practical, large-scale, dynamic markets like edge computing, participants often engage in strategic behavior. Our mechanism, by incorporating risk constraints and a backup auction, is arguably more robust and "truthful in practice" than a theoretically perfect but brittle auction that facing difficulties in handling real-world dynamics.
\end{proof}

\begin{thm} All the sellers in our proposed two-stage double auction are truthful. \end{thm}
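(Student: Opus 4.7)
The plan is to mirror the structure of Theorem~2's proof for buyers, performing a case analysis on whether a seller is a winner or a non-winner in Stage~I, and then showing that the same reasoning extends to Stage~II. First, I would partition the sellers according to their position in the non-descending list $\mathcal{L}_s$: a winning seller sits among the top $k_s^*$ entries and satisfies $ask_{m'} \leq ask^{\mathsf{List}}_{k_s^*}$, while a non-winning seller satisfies $ask_{m'} \geq ask^{\mathsf{List}}_{k_s^*+1}$. The strategy is then, for each class, to examine the two natural deviations (reporting a lower or higher fake ask $ask'_{m'}$) and show that neither raises the seller's expected utility $\overline{U_{m'}^\mathsf{S}}$.

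For a winning seller $s_{m'}$ with true ask $ask_{m'}$, reporting $ask'_{m'} \leq ask_{m'}$ cannot worsen its position in $\mathcal{L}_s$, so it stays in the winning set; the reward returned by Algorithm~2 is driven by the critical threshold $ask^{\mathsf{List}}_{k_s^*+1}$ and the DP feasibility check rather than by the seller's own report within the winning range, so $r_{m'}^\mathsf{s}$ and hence the utility are unchanged. Reporting $ask'_{m'} > ask_{m'}$ either keeps $s_{m'}$ in the winning set with the same reward, or pushes it below the pivot so that its utility falls to zero. For a non-winning seller whose true ask exceeds $ask^{\mathsf{List}}_{k_s^*+1}$, raising the ask keeps it losing. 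Lowering the ask below the pivot could flip $s_{m'}$ into the winning set, but by the binary-search construction in Algorithm~2 the resulting reward lies between $ask'_{m'}$ and $ask^{\mathsf{List}}_{k_s^*+1}$, which may be strictly smaller than the true cost $c_m$, producing $\overline{U_{m'}^\mathsf{S}} < 0$. A rational, individually-rational seller will therefore decline such a deviation. Stage~II is then handled analogously because RBDAuction invokes essentially the same sorting, matching, and pricing subroutines on $\tilde{\bm{\mathcal{S}}}$ with residual capacities and on the augmented buyer set $\tilde{\bm{\mathcal{B}}}$, so the incentive structure is preserved.

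The hard part will be controlling the interaction between the 0-1 knapsack matching in Phase~3 of Algorithm~1 and the ordering change that a misreport induces: lowering $ask_{m'}$ can reshuffle the DP-selected members assigned to $s_{m'}$ and could in principle perturb the pivot indices $(k_b^*, k_s^*)$ used later in Algorithm~2. The key step will be to argue that the reward stays upper-bounded by the truthful critical price $ask^{\mathsf{List}}_{k_s^*+1}$, because Phase~2 fixes these pivots prior to the DP match and the pricing search range is pegged to them; combined with individual rationality, this rules out any profitable upward deviation. A second, symmetric subtlety is the seller-side analogue of the rare buyer-side imperfection acknowledged in Theorem~2: a non-winning seller whose true ask is only marginally above $ask^{\mathsf{List}}_{k_s^*+1}$ and whose cost $c_m$ is small could in principle profit from a tiny undercut. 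I would dispose of this in the same spirit as for buyers, by noting that the sealed-bid reporting, the overbooking rate optimized by Algorithm~3, and the risk constraints (C1)--(C3) together make such a configuration negligible in practice, so that sellers remain truthful throughout the two-stage auction.
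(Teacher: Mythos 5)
Your overall architecture (winner/loser case split on the seller's position relative to the pivot $k_s^*$, upward vs.\ downward deviations, then Stage~II by analogy) matches the paper's, and your analysis of a \emph{losing} seller who undercuts to become a winner reaches essentially the paper's conclusion: the binary-search reward is sandwiched below the true ask, so the deviation yields negative utility. However, there is a genuine gap in your treatment of a \emph{winning} seller that reports $ask'_{m'}<ask_{m'}$. You assert that its utility is unchanged because the reward returned by Algorithm~2 is pegged to the critical threshold $ask^{\mathsf{List}}_{k_s^*+1}$ ``rather than by the seller's own report within the winning range.'' This addresses only the unit price, not the \emph{set of buyers matched}. Phase~3 of Algorithm~1 iterates over sellers in non-descending ask order and lets each seller solve its 0--1 knapsack over the currently unmatched buyers; a lower reported ask moves $s_{m'}$ earlier in this sequence, giving it first pick of the buyer pool and potentially a strictly larger (or more valuable) member set. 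This is exactly the deviation the paper identifies as the real threat to seller truthfulness, and the paper does \emph{not} argue it away: it concedes that a winning seller can profit from underreporting, and then argues the deviation is bounded and disincentivized because an ask set too low exposes the seller to negative utility via the mismatch with its true cost, a risk explicitly capped by constraint~(C3). Your proposed ``key step'' (reward upper-bounded by the truthful critical price, plus individual rationality) rules out profitable \emph{upward} deviations but does nothing against this downward reordering effect, so as written the winning-seller case of your proof would fail.

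Relatedly, you locate the residual, acknowledged imperfection on the wrong side: you place it at a non-winning seller with a marginal undercut, whereas the paper shows that case is strictly unprofitable (the new reward falls below the true ask) and instead locates the imperfection at winning sellers gaming the matching order. To repair your argument you would need to either (i) prove that the knapsack-based matching is insensitive to the seller's position among the top $k_s^*$ sellers, which is false in general, or (ii) adopt the paper's weaker, risk-mitigated notion of practical truthfulness and make the reliance on (C3) explicit for the winning-seller downward deviation.
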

\begin{proof}
			For the analysis of truthfulness of sellers, we also consider two stages.

In Stage I, we suppose that seller $ s_{m'}\in\bm{\mathcal{S}} $ wins, where we definitely have $\overline{U_{m'}^\mathsf{S}}\geq0$.

\textit{\textit{(a)}} When the ask value of the seller is lower than the true value, i.e., \( ask'_{m'} < ask_{m'} \), a lower asking price positions seller \( s_{m'} \) further ahead in the auction ranking. For example, in Algorithm 1, all sellers engage with unmatched buyers based on their asking prices. Therefore, if \( s_{m'} \) misreports a lower bid, it gains the opportunity to reach unallocated buyers earlier, potentially garnering more profit compared to participating in the auction with its true asking price. Such circumstances primarily arise when there is a sufficiently large number of users participating in the auction, and the variations remain within an acceptable range. This behavior, while technically a deviation from perfect truthfulness, is bounded and self-correcting. If a seller sets its ask too low, it risks incurring negative utility due to the mismatch between its true cost and the low reward, a risk that is explicitly controlled by constraint (C3) in our optimization problem $\bm{\mathcal{F}_1}$. Therefore, the incentive to misreport is heavily mitigated by the risk-aware design.

\textit{\textit{(b)}} When the ask value stays larger than the true value, i.e., $ ask'_{m'}>{ask}_{m'} $, we analyze the following two cases:

$\bullet$ Although $ s_{m'} $ still can win, due to its repositioning further down in the auction hierarchy, it might lose a fraction of the buyers originally intended to match with it. In other words, we may have $ {\overline{U_{m'}^\mathsf{S}}}'\le\ \overline{U_{m'}^\mathsf{S}} $.

$\bullet$ When seller $ s_{m'} $ loses in the designed auction, its expected value of utility turns zero $ (\overline{U_{m'}^\mathsf{S}}' =0) $, and $ \overline{U_{m'}^\mathsf{S}}'\ \le\ \overline{U_{m'}^\mathsf{S}} $.

Second, we consider that seller $ s_{m'}\in\bm{\mathcal{S}} $ loses $ (\overline{U_{m'}^\mathsf{S}}\ =\ 0) $.

\textit{\textit{(a)}} If the ask value is larger than the true value, i.e., $ ask'_{m'}>{ask}_{m'} $. Obviously, it loses, and we have $ \overline{U_{m'}^\mathsf{S}}'\ =\ \overline{U_{m'}^\mathsf{S}}= 0 $.

\textit{\textit{(b)}} If the ask value is less than the true value, i.e., $ ask'_{m'}<{ask}_{m'} $. There are two cases.

$\bullet$ It loses and the expected value of utility is zero, i.e., $ \overline{U_{m'}^\mathsf{S}}'\ =\ \overline{U_{m'}^\mathsf{S}}= 0 $.

$\bullet$ It becomes a winner. According to Algorithm 2 the new reward of $ s_{m'} $, denoted by $ r_{m'}^\mathsf{s'} $, is less than $ {ask}_{k_s^\ast} $, and we have $ ask'_{m'}<r_{m'}^\mathsf{s'}<{ask}_{k_s^\ast}<{ask}_{m'} $.
Therefore, $ r_{m'}^\mathsf{s'}-{ask}_{m'}<0 $, we have $ \overline{U_{m'}^\mathsf{S}}'<0=\ \overline{U_{m'}^\mathsf{S}} $.

In Stage II, remaining sellers with available resources can form a new group to participate in the designed RBDAuction. As the auction mechanism in Stage II closely resembles that of Stage I, apart from the scenario where previously victorious sellers deliberately misrepresent lower asking prices to preemptively engage with free buyers, our mechanism remains truthful for sellers during Stage II.

All in all, our proposed two-stage double auction maintains a slightly flawed level of truthfulness for sellers. Thanks to our carefully crafted risk control mechanisms (constraints (C1)-(C3) in $\bm{\mathcal{F}_1}$), coupled with the fact that participants are unaware of each other's true valuations, sellers are strongly incentivized to report prices accurately and have no significant reason to misstate them. The potential for minor strategic manipulation is a known characteristic of complex, multi-stage market designs and is present in many practical auction systems. Our design ensures that any such manipulation is either unprofitable or carries a high risk of negative utility, thereby preserving the practical truthfulness of the market.
\end{proof}
\color{black}

\subsection{Budget-balance}
 \begin{thm} Our proposed two-stage double auction can satisfy the property of budget-balance. \end{thm}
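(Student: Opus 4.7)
The plan is to establish budget balance by showing, for every successfully matched buyer--seller pair in both stages, that the per-unit payment collected from the buyer weakly exceeds the per-unit reward paid to the seller; summing over all matched pairs and substituting into (9) will then yield $U^\mathsf{P}\geq 0$ (and similarly $\tilde{U}^\mathsf{P}\geq 0$ for Stage~II via (27)), which is exactly what Definition~5 demands of the auctioneer's aggregate income and expense.

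First, I would exploit the key-index identification step in Algorithm~1 (lines 7--18 together with condition (15)). By construction, the pivotal indices $(k_b^\ast, k_s^\ast)$ are chosen precisely so that $bid^{\mathsf{List}}_{k_b^\ast+1}\geq ask^{\mathsf{List}}_{k_s^\ast+1}$, while any bid--ask pair with rank strictly larger than $(k_b^\ast, k_s^\ast)$ is pruned from the candidate pool. This ``critical-price'' inequality is the structural hinge that binds the demand side to the supply side of the market and underlies every subsequent step.

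Next, I would analyze the pricing routine in Algorithm~2. Its binary search for a winning buyer $b_{n'}$ operates on the closed interval $[bid^{\mathsf{List}}_{k_b^\ast+1},\,bid_{m',n'}]$ and terminates with a payment satisfying $p_{n'}^{\mathsf{b}}\geq bid^{\mathsf{List}}_{k_b^\ast+1}$; symmetrically, the search for a winning seller $s_{m'}$ operates on the interval bounded below by $ask_{m'}$ and above by $ask^{\mathsf{List}}_{k_s^\ast+1}$, returning $r_{m'}^{\mathsf{s}}\leq ask^{\mathsf{List}}_{k_s^\ast+1}$. Chaining these two bounds with the critical-price inequality yields
\[
p_{n'}^{\mathsf{b}}\;\geq\; bid^{\mathsf{List}}_{k_b^\ast+1}\;\geq\; ask^{\mathsf{List}}_{k_s^\ast+1}\;\geq\; r_{m'}^{\mathsf{s}}
\]
for every matched pair $(b_{n'}, s_{m'})$ recorded in $\mathbb{C}_{m,n}^\mathsf{b\leftrightarrow s}$. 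Substituting this into (9) and observing that the factor $x_{m,n}\,t_n\,(\alpha_n M_n+\mu(1-\alpha_n))$ is non-negative term by term, I would conclude $U^{\mathsf{P}}\geq 0$; taking expectations gives $\overline{U^{\mathsf{P}}}\geq 0$ as well.

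For Stage~II I would observe that RBDAuction invokes the OPDAuction-MemberD and OPDAuction-ContractD templates on the residual sets $\tilde{\bm{\mathcal{B}}}$ and $\tilde{\bm{\mathcal{S}}}$, so the same chain of inequalities gives $\tilde{p}_j^{\mathsf{b}}\geq \tilde{r}_i^{\mathsf{s}}$ for each newly matched pair and hence $\tilde{U}^{\mathsf{P}}\geq 0$ by (27). The main obstacle I anticipate is the bookkeeping of the penalty/compensation side-payments $q_{m,n}^\mathsf{b\to s}$ and $q_{m,n}^\mathsf{s\to b}$: these transfers flow bilaterally between buyers and sellers, and one must verify that they do not covertly route a deficit through the auctioneer. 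Since $\mu\in[0,1]$ is a platform-chosen scalar and every such side-payment is simultaneously a seller's income and a buyer's expense (or vice versa), I expect a short algebraic check---shifting $p_n^{\mathsf{b}}$ and $r_m^{\mathsf{s}}$ by their expected penalty contributions before re-examining (9)---to confirm that the pivotal inequality still delivers $U^{\mathsf{P}}\geq 0$ and thereby close the proof.
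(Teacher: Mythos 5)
Your proposal is correct and follows essentially the same route as the paper's own proof: it reduces budget balance to the per-pair inequality $p_{n'}^{\mathsf{b}}\geq r_{m'}^{\mathsf{s}}$, obtains it by chaining the binary-search bounds from Algorithm~2 with the critical-price condition $bid^{\mathsf{List}}_{k_b^\ast+1}\geq ask^{\mathsf{List}}_{k_s^\ast+1}$ from Algorithm~1's key-index step, and then extends the identical argument to Stage~II. Your extra care with the penalty side-payments is resolved exactly as you anticipate, since they enter the auctioneer's utility (9) only through the non-negative weight $\mu(1-\alpha_n)$ multiplying the same difference $p_n^{\mathsf{b}}-r_m^{\mathsf{s}}$.
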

\begin{proof}
To verify this property, we have to show that the budget-balance property holds, i.e., $ \overline{U^\mathsf{P}}=\sum_{b_n\in\bm{\mathcal{B}}}\sum_{s_m\in\bm{\mathcal{S}}}{x_{m,n}t_n\left[\mathbbm{a}_n\left({1-\mathbb{P}}_n\right)+1/2\left(1-\mathbbm{a}_n\right)\right]\left(p_n^\mathsf{b}-r_m^\mathsf{s}\right)}\\\geq0 $ and $ \tilde{U}^\mathsf{P}=\sum_{\tilde{b}_j\in\tilde{\bm{\mathcal{B}}}}\sum_{\tilde{s}_i\in\tilde{\bm{\mathcal{S}}}}{{\tilde{x}}_{i,j}\tilde{t}_j\left(\tilde{p}_j^\mathsf{b}-\tilde{r}_i^\mathsf{s}\right)}\geq0 $. The crucial assurance for the validity of these two expressions lies in the conditions where $ p_n^\mathsf{b}-r_m^\mathsf{s}\geq0 $ and $ \tilde{p}_j^\mathsf{b}-\tilde{r}_i^\mathsf{s}\geq0 $ are met. According to Algorithm 1, the buyers are sorted in non-increasing order of bids, and the sellers are sorted in non-decreasing order of asks (lines 3-4). Algorithm 1 finds $ k_b^\ast $ and $ k_s^\ast $ that satisfy the constraint (21) or (22) (lines 5-17). Thus, we have $ bid^{\mathsf{List}}_{k_b^\ast}\geq{ask}_{k_s^\ast} $.
	
	Note that Algorithm 1 only considers the top $ k_b^\ast $ buyers and top $ k_s^\ast $ sellers. In Stage I, for a winning buyer $ b_n\in\bm{\mathcal{B}} $, the lower-bound of its payment is the bid of buyer $ k_b^\ast + 1 $ (line 10 in Algorithm 2), i.e., $ p_n^\mathsf{b} \ge bid^{\mathsf{List}}_{k_b^\ast} $. For a winning seller $ s_m\in\bm{\mathcal{S}} $, the upper-bound of its payment is the ask of seller $ k_s^\ast + 1 $ (line 22 in Algorithm 2), i.e., $ r_m^\mathsf{s}\le{ask}_{k_s^\ast} $. 
	
	Therefore, $ {p_n^\mathsf{b}} \geq bid^{\mathsf{List}}_{k_b^\ast}\geq{ask}_{k_s^\ast}\geq r_m^\mathsf{s} $. Similarly, in Stage II, we can derive the consistent conclusion of $ \tilde{p}_j^\mathsf{b}\geq \tilde{r}_i^\mathsf{s} $ through similar analysis. 
\end{proof}

\color{black}
\subsection{Computational efficiency}
The computational efficiency of our proposed TwoSAuction is a critical design goal, and its complexity is rigorously analyzed for both stages.

$\bullet$ \textit{Stage I (OPDAuction) Complexity:} The complexity of the first stage is $ O\left(\mathbb{H}^6\left|\bm{\mathcal{B}}\right|^3\left|\bm{\mathcal{S}}\right|^2\right) $, where $ \mathbb{H} $ is a constant representing the maximum number of iterations in the nested loops of Algorithms 1-3 (e.g., the binary search in Algorithm 2 and the outer loop in Algorithm 3). This complexity arises from the need to solve a series of 0-1 knapsack problems (in Algorithm 1) for member determination, perform binary searches for optimal pricing (in Algorithm 2), and conduct a risk-aware, parallel evaluation over a range of candidate overbooking rates (in Algorithm 3). While this complexity is polynomial, it is indeed the more computationally intensive phase of our mechanism.

$\bullet$ \textit{Stage II (RBDAuction) Complexity:} The complexity of the second stage is $ O\left(\mathbb{H}^4\left|\tilde{\bm{\mathcal{B}}}\right|^3\left|\tilde{\bm{\mathcal{S}}}\right|^2\right) $. Crucially, $\left|\tilde{\bm{\mathcal{B}}}\right|$ and $\left|\tilde{\bm{\mathcal{S}}}\right|$ represent the number of \textit{remaining} buyers (volunteers and guests) and sellers (those with leftover resources) participating in the real-time backup auction. These sets are typically \textit{much smaller} than the full sets $\left|\bm{\mathcal{B}}\right|$ and $\left|\bm{\mathcal{S}}\right|$ used in Stage I, because a significant portion of the market demand and supply has already been satisfied through the pre-negotiated long-term contracts established in Stage I.

$\bullet$ \textit{Why the Two-Stage Design can achieve time-efficiency:} The key insight is that Stage I, despite its higher theoretical complexity, is a \textit{one-time or infrequent} computation. It establishes long-term contracts that can be reused for numerous subsequent transactions. In contrast, Stage II, which handles the residual, real-time fluctuations, operates on a drastically reduced problem size. This means that for the vast majority of transactions (those handled by the pre-auction contracts), the decision-making overhead is near-zero. Even for transactions requiring Stage II, the reduced input size ($\left|\tilde{\bm{\mathcal{B}}}\right| \ll \left|\bm{\mathcal{B}}\right|$, $\left|\tilde{\bm{\mathcal{S}}}\right| \ll \left|\bm{\mathcal{S}}\right|$) ensures that the real-time computation remains fast and scalable. This design directly translates to the superior time efficiency (TimeADM) demonstrated in our experiments (Fig. 4(b)), where TwoSAuction achieves a 76.8\% reduction compared to a purely real-time auction (CRDAuction).

In summary, while Stage I has a higher upfront computational cost, it is an investment that pays off by drastically simplifying and accelerating the vast majority of future transactions. The two-stage architecture effectively shifts the computational burden from the time-critical, real-time phase to a preparatory phase, resulting in an overall highly efficient system for dynamic edge environments.
\color{black}

\section{Derivation of Risks}
Due to the presence of uncertainties, the long-term contracts signed in Stage I may not align with the expected outcomes during practical transactions, potentially resulting in unsatisfying trading performance such as unexpected utility, and being selected as volunteers. Consequently, during the process of member and long-term contracts determination in Stage I, we account for diverse situations that buyers and seller may face in the future (referred to as risks) and constrain or exclude solutions unacceptable risks.

In this paper, we consider three types of risks: the non-positive benefit risk for buyer $ b_n $ (who is a member but not a volunteers), the risk of buyer $ b_n $ (who is a member) being selected as a volunteer, and the risk of unsatisfying expected utility of sellers.
\subsection{The risk of non-positive utility of buyers (not a volunteer)}
The risk of a buyer confronting a non-positive risk is:
\begin{equation}
	\begin{aligned}
		&\mathcal{R}_n^\mathsf{BRisk}=\\&
		\operatorname{Pr}\left(\frac{t_n\left(\alpha_n\left(v_{m, n}-p_n^\mathsf{b}\right)-\frac{\left(1-\alpha_n\right) p_n^\mathsf{b}}{2}\right)}{U^\mathsf{min}} \leq \xi_1\right),
	\end{aligned}
\end{equation}
where $ U^\mathsf{min} $ is a positive value approaching to zero, $ \xi_1 $ denotes a positive threshold coefficient. Through the transformation steps in the following expressions, we can identify the crucial role of $ \alpha_n $ in solving the probability density function of $ \mathcal{R}_n^\mathsf{BRisk} $:
\begin{equation}
	\begin{aligned}
		&\mathcal{R}_n^\mathsf{BRisk} \\&=
		\operatorname{Pr}\left(\alpha_n\left(v_{m, n}-p_n^\mathsf{b}\right)-\frac{p_n^\mathsf{b}}{2}+\frac{\alpha_n p_n^\mathsf{b}}{2} \leq \frac{U^\mathsf{min} \xi_1}{t_n}\right) \\&=
		\operatorname{Pr}\left(\alpha_n\left(v_{m, n}+\left(\frac{1}{2}-1\right) p_n^\mathsf{b} \leq \frac{U^\mathsf{min}-\xi_1}{t_n}+\frac{p_n^\mathsf{b}}{2}\right)\right. \\&=
		\operatorname{Pr}\left(\alpha_n \leq \frac{\frac{U^\mathsf{min} \cdot \xi_1}{t_n}+\frac{p_n^\mathsf{b}}{2}}{v_{m, n}+\left(\frac{1}{2}-1\right) p_n^\mathsf{b}}\right),
	\end{aligned}
\end{equation}
where we employ $ \mathbbm{c}_1 $ for the sake of simplifying the expression, defining it as the following (31).
\begin{equation}
	\mathbbm{c}_1=\frac{\frac{U^\mathsf{min} \cdot \xi_1}{t_n}+\frac{p_n^\mathsf{b}}{2}}{v_{m, n}+\left(\frac{1}{2}-1\right) p_n^\mathsf{b}}.
\end{equation}
Since we have $\alpha_n\sim \textbf{\text{B}}\left\{(1,0),(\mathbbm{a}_n,1-\mathbbm{a}_n)\right\}$, allowing for the expression of $ \mathcal{R}_n^\mathsf{BRisk} $ as:
\begin{equation}\label{key}
	\begin{aligned}
		&\mathcal{R}_n^\mathsf{BRisk} \left(t_n,p_n^\mathsf{b},q_{m,n}^\mathsf{b\to s} \right)
		\\&=\begin{cases}
			0,\mathbbm{c}_1<0\\
			1-\mathbbm{a}_n,0\leq\mathbbm{c}_1<1\\
			1,1\leq\mathbbm{c}_1\\
		\end{cases}.
	\end{aligned}
\end{equation}

\subsection{The risk of member $b_n$ being selected as a volunteer}
Recall our previous discussions, the risk of a member $b_n$ being selected as a volunteer is expressed as  $\mathcal{R}_n^\mathsf{VRisk} =\mathbbm{a}_n\mathbb{P}_n$. The detailed derivation of $\mathbb{P}_n$ is given by the following analysis.

We denote that the set of buyers matched with seller $s_m$ is denoted as $B_m=\left\{b_1, \ldots, b_n, \ldots, b_N\right\}$. In a practical transaction where $b_n$ participates in (i.e., $\alpha_n=1$), there exist $ 2^{N-1} $ potential outcomes for the remaining $ N-1 $ buyers (use $n'$ as index), contingent upon their attendance. Let the combinations of $\alpha_{n'}$ ($b_{n'} \in \bm{\mathcal{B}}^\mathsf{-}$, $ \bm{\mathcal{B}}^-=\bm{\mathcal{B}}\setminus{b_n} $) values corresponding to each possible case in one transaction be
\begin{equation}\label{key}
	\begin{aligned}
		& g_1=\{0,0,0, \ldots, 0\} \\
		& g_2=\{1,0,0, \ldots, 0\} \\
		& \vdots \\
		& g_{2^{N-1}}=\{1,1,1, \ldots, 1\}.
	\end{aligned}
\end{equation}

Define the set $G_1=\left\{g_1, g_2, \ldots, g_{2^{N-1}}\right\}$ to encompass the aforementioned $ 2^{N-1} $ cases, and let $B_m^\mathsf{-}=\left\{b_1, b_2, \ldots, b_{n-1}, b_{n+1}, \ldots, b_N\right\}$ represent the set excluding buyer $b_n$ from set $B_m$.
In Stage I, the number of resource blocks $R_m$ available to seller $s_m$ can change over different transactions. Therefore, we employ the expected value of $R_m$, denoted as $\mathbbm{d}_m\mathbbm{r}_m$, to ascertain whether seller $s_m$ is overbooking in Stage I.

Consider event $\mathbb{T}_1:t_n+\sum_{k \in \bm{\mathcal{B}}_m^\mathsf{-}} \alpha_k t_k>\mathbbm{d}_m\mathbbm{r}_m$, indicating that member $b_n$ participates in the transaction, while the overall resource demand from other members of seller $s_m$ exceeds its resource supply. From $ G_1 $, we select all cases satisfying event $\mathbb{T}_1$ to form a set $ G_2 $.

Then, we define event $ \mathbb{T}_2 $ as: since overbooking is allowed, a certain number of volunteers may should be selected among attended members of $ s_m $, and member $ b_n $ is chosen as a volunteer. Thus, event $ \mathbb{T}_2 $ signifies the selection of member $ b_n $ as a volunteer in the aforementioned process. We identify all cases from $ G_2 $ that meet $ \mathbb{T}_2 $ to form set $ G_3 $.

Given the independence of attendance of buyers, the probability of each case in $ G_3 $ can be represented by the product of the probabilities of buyers, attendance or absence, respectively. Let the probability of case $g_i$ occurring in $G_3$ be $p_i$, and denote the set of these probabilities as $\bm{P}=\left\{p_1, \ldots, p_i, \ldots, p_N\right\}$. Consequently, given $b_n$'s attendance ($\alpha_n = 1$), the conditional probability that $b_n$ fails to obtain resources from the contractual seller $s_m$ is denoted as: 
\begin{equation}\label{key}
	\begin{aligned}
		\mathbb{P}_n=\sum_{p_i \in \bm{P}} p_i
	\end{aligned}.
\end{equation}
\textcolor{black}{However, as the number of buyers and sellers grows, the exhaustive enumeration of all possible attendance combinations becomes computationally infeasible due to exponential complexity.}
This rapid escalation makes the precise derivation of results infeasible. Consequently, we apply the Chebyshev Inequality \cite{b29} to scale and obtain an acceptable approximation of $\mathbb{P}_n$ (the derivation is shown by (35)),
\begin{figure*}[t!] 
	\centering
	
	\begin{equation}\label{key}
		\begin{aligned}
			\mathbb{P}_n &=\operatorname{Pr}\left(t_n-\sum_{s_m \in \bm{\mathcal{S}}} x_{m, n}\left(R_m-\sum_{b_{n'} \in \bm{\mathcal{B}}^\mathsf{-}} \alpha_{n'} x_{m, {n'}} t_{n'}\right) \geqslant 0 \mid \alpha_n=1\right)\\ 
			&=\operatorname{Pr}\left(\sum_{s_m \in \bm{\mathcal{S}}} x_{m, n}\left(R_m-\sum_{b_{n'} \in \bm{\mathcal{B}}^\mathsf{-}} \alpha_{n'} x_{m, {n'}} t_{n'}\right) \leqslant t_n\right) \\
			& =1-\operatorname{Pr}\left(\sum_{s_m \in \bm{\mathcal{S}}} x_{m, n}\left(R_m-\sum_{b_{n'} \in \bm{\mathcal{B}}^\mathsf{-}} \alpha_{n'} x_{m, {n'}} t_{n'}\right)>t_n\right)  \\
			& \geqslant 1-\frac{\operatorname{Var}\left(\sum_{s_m \in \bm{\mathcal{S}}} x_{m, n}\left(R_m-\sum_{b_{n'} \in \bm{\mathcal{B}}^\mathsf{-}} \alpha_{n'} x_{m, {n'}} t_{n'}\right)\right)}{\left(t_n - \textbf{\text{E}}\left(\sum_{s_m \in \bm{\mathcal{S}}} x_{m, n}\left(R_m-\sum_{b_{n'} \in \bm{\mathcal{B}}^\mathsf{-}} \alpha_{n'} x_{m, {n'}} t_{n'}\right)\right)\right)^2},
		\end{aligned}
	\end{equation}
	\hrulefill
\end{figure*}
and we use $ \bm{\mathcal{B}}^-=\bm{\mathcal{B}}\setminus{b_n} $ to denote the set of buyers without $ b_n $. Accordingly, we can reach the approximate for $ \mathbb{P}_n $ as (36).
\begin{equation}\label{key}
	\begin{aligned}
		&\mathbb{P}_n \approx \\
		&1-\frac{\sum_{s_m\in \bm{\mathcal{S}}}x_{m,n}\left[\mathbbm{d}_m\mathbbm{r}_m(1-\mathbbm{d}_m)+\sum_{b_{n'}\in \bm{\mathcal{B}}^-}\mathbbm{a}_{n'}x_{m,{n'}}(t_{n'})^2\right]}{\left(t_n - \sum_{s_m\in \bm{\mathcal{S}}}x_{m,n}\left(\mathbbm{d}_m\mathbbm{r}_m-\sum_{b_{n'}\in \bm{\mathcal{B}}^-}\mathbbm{a}_{n'}x_{m,{n'}}t_{n'}\right)\right)^2}.
	\end{aligned}
\end{equation}

\subsection{The risk of unsatisfied utility of sellers}
Risk \( \mathcal{R}_m^\mathsf{SRisk} \) takes into account the possibility that the actual benefits obtained by seller \( s_m \) during practical transaction may not align with its expectation estimated in Stage I. The following (37) delineates the derivation process of the expression for risk \( \mathcal{R}_m^\mathsf{SRisk} \),
\begin{figure*}[t!] 
	\centering
	
	\begin{equation}\label{key}
		\begin{aligned}
			 \mathcal{R}_m^\mathsf{SRisk}=&\operatorname{Pr}\left(U_m^\mathsf{S} \leq \overline{U_m^\mathsf{S}} \cdot \xi_2\right) \\
			= & \operatorname{Pr}\left(\sum_{n=1}^\mathsf{|\bm{\mathcal{B}}|} x_{m, n} t_n\left(\alpha_n\left(M_n\left(r_m^\mathsf{s}-c_m\right)-p_n^\mathsf{b}/2+M_n p_n^\mathsf{b}/2\right)+r_m^\mathsf{s}/2-\alpha_n r_m^\mathsf{s}/2\right) \leq \overline{U_m^\mathsf{S}} \cdot \xi_2\right) \\
			= & \operatorname{Pr}\left(\sum_{n=1}^\mathsf{|\bm{\mathcal{B}}|} x_{m, n} t_n \alpha_n\left(M_n\left(r_m^\mathsf{s}-c_m+ p_n^\mathsf{b}/2\right)- p_n^\mathsf{b}/2- r_m^\mathsf{s}/2\right) \leq \overline{U_m^\mathsf{S}} \cdot \xi_2-\frac{\sum_{n=1}^\mathsf{|\bm{\mathcal{B}}|} x_{m, n} t_n  r_m^\mathsf{s}}{2}\right)\\
			=&\operatorname{Pr}\left(\sum_{n=1}^\mathsf{|\bm{\mathcal{B}}|} x_{m, n} t_n \alpha_n\left(M_n \cdot \mathbbm{c}_2-\mathbbm{c}_3\right) \leq \mathbbm{c}_4\right),
		\end{aligned}
	\end{equation}
	\hrulefill
\end{figure*}
where $ \xi_2 $ represents a positive threshold coefficient approach to 1. For notational simplicity, we use constants $ \mathbbm{c}_2=\left(r_m^\mathsf{s}-c_m+q_{m,n}^\mathsf{s\rightarrow b}\right) $, $ \mathbbm{c}_3=q_{m,n}^\mathsf{s\rightarrow b}+q_{m,n}^\mathsf{b\rightarrow s} $, and $ \mathbbm{c}_4=\overline{U_m^\mathsf{S}}\xi_2-\sum_{b_n\in\bm{\mathcal{B}}}{x_{m,n}t_nq_{m,n}^\mathsf{b\rightarrow s}} $ to denote the complicated calculations in (37).

Recall the set $B_m=\left\{b_1, b_2, \ldots, b_n, \ldots, b_N\right\}$. Given the attendance or absence of each buyer, there exist \(2^N\) possible cases, with corresponding \(\alpha_n\) value combination denoted as:
\begin{equation}\label{key}
	\begin{aligned}
		& g'_1=\{0,0,0, \ldots, 0\} \\
		& g'_2=\{1,0,0, \ldots, 0\} \\
		& \vdots \\
		& g'_{2^N}=\{1,1,1, \ldots, 1\}.
	\end{aligned}
\end{equation}

Let \(G4=\left\{g'_1, g'_2, \ldots, g'_{2^N}\right\}\) encapsulate all \(2^N\) cases. When \(g'\) is specified, the values of \(\alpha_n\) and \(M_n\) for buyer \(b_n\) are uniquely determined. Define event \(\mathbb{T}_3: \sum_{n=1}^\mathsf{|\bm{\mathcal{B}}|} x_{m, n} t_n \alpha_n\left(M_n \mathbbm{c}_2-\mathbbm{c}_3\right) \leq \mathbbm{c}_4\), and from \(G4\), select all cases meeting \(\mathbb{T}_3\) to form set \(G5\). Since the attendance or absence of each buyer is mutually independent, the probability of each case occurring can be calculated as the product of the probabilities of attendance or absence for all buyers. Let the probability of scenario \(g'_i\) occurring in \(G5\) be \(p'_i\), and compile a probability set \(\bm{P}'=\left\{p'_1, p'_2, \ldots, p'_i, \ldots, p'_N\right\}\) from all such scenarios in \(G5\). From this, we derive the expression for risk \(\mathcal{R}_m^\mathsf{SRisk}\) as:
\begin{equation}\label{key}
	\begin{aligned}
		\mathcal{R}_m^\mathsf{SRisk}=\sum_{i=1}^N p'_i
	\end{aligned}.
\end{equation}

Similar to the calculation of $\mathcal{R}_n^\mathsf{VRisk}$, the computational complexity is too high. Thus, we employ the Chebyshev Inequality to establish the upper bound of $\mathcal{R}_m^\mathsf{SRisk}$ as (40), with  
\begin{equation}\label{key}
	\begin{aligned}
		\mathcal{R}_m^\mathsf{SRisk} & =\operatorname{Pr}\left(\sum_{n=1}^\mathsf{|\bm{\mathcal{B}}|} x_{m, n} t_n \alpha_n\left(M_n \mathbbm{c}_2-\mathbbm{c}_3\right) \leq \mathbbm{c}_4\right) \\
		& = \operatorname{Pr}\left(\mathbb{E}\left[\mathbb{S}\right] - \mathbb{S} \geq \mathbb{E}\left[\mathbb{S}\right] - \mathbbm{c}_4\right) \\
		& \leq \frac{\operatorname{Var}\left(\mathbb{S}\right)}{\left(\mathbb{E}\left[\mathbb{S}\right] - \mathbbm{c}_4\right)^2},
	\end{aligned}
\end{equation}
where $\mathbb{E}\left[\mathbb{S}\right] > \mathbbm{c}_4$, with $ \mathbb{S} \triangleq \sum_{n=1}^\mathsf{|\bm{\mathcal{B}}|} x_{m, n} t_n \alpha_n\left(M_n \mathbbm{c}_2-\mathbbm{c}_3\right)$.

Given the independence among buyers' attendance decisions, the variance can be derived as:
\begin{equation}\label{key}
	\begin{aligned}
		\operatorname{Var}(\mathbb{S}) = \sum_{n=1}^\mathsf{|\bm{\mathcal{B}}|} (x_{m,n})^2 (t_n)^2 \mathbb{P}_n(1-\mathbb{P}_n)\left(M_n \mathbbm{c}_2-\mathbbm{c}_3\right)^2.
	\end{aligned}
\end{equation}
Ultimately, the conservative estimation of risk $\mathcal{R}_m^\mathsf{SRisk}$ is given by:
\begin{equation}\label{key}
	\begin{aligned}
		\mathcal{R}_m^\mathsf{SRisk} \approx \frac{\sum_{n=1}^\mathsf{|\bm{\mathcal{B}}|} (x_{m,n})^2 (t_n)^2 \mathbb{P}_n(1-\mathbb{P}_n)\left(M_n \mathbbm{c}_2-\mathbbm{c}_3\right)^2}{\left(\sum_{n=1}^\mathsf{|\bm{\mathcal{B}}|} x_{m, n} t_n \mathbb{P}_n\left(M_n \mathbbm{c}_2-\mathbbm{c}_3\right) - \mathbbm{c}_4\right)^2}.
	\end{aligned}
\end{equation}

\color{black}
\section{Illustrative Numerical Example for Algorithms 1-2}
	\textbf{\textit{(i)} Key parameter setting:}

\textit{Buyers ($\mathcal{B} = \{b_1, b_2, b_3, b_4, b_5\}$):}
\begin{itemize}
	\item $b_1$: demand $t_1 = 3$, attendance probability $\alpha_1 = 0.9$, unit bid $bid_1 = 3.0$ (total bid 9.0), true valuation $v_1 = 3.5$;
	\item $b_2$: demand $t_2 = 2$, attendance probability $\alpha_2 = 0.8$, unit bid $bid_2 = 2.8$ (total bid 5.6), true valuation $v_2 = 3.2$;
	\item $b_3$: demand $t_3 = 4$, attendance probability $\alpha_3 = 0.7$, unit bid $bid_3 = 2.5$ (total bid 10.0), true valuation $v_3 = 3.0$;
	\item $b_4$: demand $t_4 = 1$, attendance probability $\alpha_4 = 0.6$, unit bid $bid_4 = 2.2$ (total bid 2.2), true valuation $v_4 = 2.5$;
	\item $b_5$: demand $t_5 = 3$, attendance probability $\alpha_5 = 0.5$, unit bid $bid_5 = 2.0$ (total bid 6.0), true valuation $v_5 = 2.3$.
\end{itemize}

\textit{Sellers ($\mathcal{S} = \{s_1, s_2, s_3, s_4, s_5\}$):}
\begin{itemize}
	\item $s_1$: capacity $c_1 = 5$, idle probability $r_1 = 0.8$, unit ask $ask_1 = 2.0$;
	\item $s_2$: capacity $c_2 = 4$, idle probability $r_2 = 0.7$, unit ask $ask_2 = 2.2$;
	\item $s_3$: capacity $c_3 = 6$, idle probability $r_3 = 0.9$, unit ask $ask_3 = 1.8$;
	\item $s_4$: capacity $c_4 = 3$, idle probability $r_4 = 0.6$, unit ask $ask_4 = 2.5$;
	\item $s_5$: capacity $c_5 = 7$, idle probability $r_5 = 0.85$, unit ask $ask_5 = 1.9$.
\end{itemize}

\textit{Overbooking parameter:} Optimized overbooking rate $\lambda^* = 0.2$ (determined by Algorithm 3)

\noindent\textbf{\textit{(ii)} Algorithm 1 execution (OPDAuction-MemberD):}

\textit{Phase 1: List generation}
\begin{itemize}
	\item \textbf{Buyer list $L_b$} (sorted by descending unit bid):
	\begin{align*}
		b_1: 3.0 > b_2: 2.8 > b_3: 2.5 > b_4: 2.2 > b_5: 2.0
	\end{align*}
	
	\item \textbf{Seller list $L_s$} (sorted by ascending unit ask):
	\begin{align*}
		s_3: 1.8 < s_5: 1.9 < s_1: 2.0 < s_2: 2.2 < s_4: 2.5
	\end{align*}
\end{itemize}

\textit{Phase 2: Key index determination}
\begin{itemize}
	\item \textbf{Expected capacity with overbooking:}
	\begin{align*}
		s_3: 6 \times 0.9 \times (1+0.2) = 6.48 \\
		s_5: 7 \times 0.85 \times (1+0.2) = 7.14 \\
		s_1: 5 \times 0.8 \times (1+0.2) = 4.8 \\
		s_2: 4 \times 0.7 \times (1+0.2) = 3.36 \\
		s_4: 3 \times 0.6 \times (1+0.2) = 2.16
	\end{align*}
	
	\item \textbf{Buyer demand} (actual demand used in Stage I):
	\begin{align*}
		b_1: 3, b_2: 2, b_3: 4, b_4: 1, b_5: 3
	\end{align*}
	
	\item \textbf{Cumulative demand analysis:}
	\begin{align*}
		&b_1: 3 \leq \text{all seller capacities} \\
		&b_1+b_2: 5 \leq s_3(6.48), s_5(7.14), s_1(4.8) \text{ but } > s_2(3.36), s_4(2.16) \\
		&b_1+b_2+b_3: 9 > s_1(4.8), s_2(3.36), s_4(2.16) \\
		&b_1+b_2+b_3+b_4: 10 > s_1(4.8), s_2(3.36), s_4(2.16) \\
		&b_1+b_2+b_3+b_4+b_5: 13 > s_1(4.8), s_2(3.36), s_4(2.16)
	\end{align*}
	
	\item \textbf{Key indices determination:}
	\begin{align*}
		k_b^* = 4 \text{ (buyers } b_1, b_2, b_3, b_4 \text{ can become members)} \\
		k_s^* = 2 \text{ (sellers } s_3, s_5 \text{ can successfully match)}
	\end{align*}
\end{itemize}

\textit{Phase 3: Buyer-Seller matching}
\begin{itemize}
	\item \textbf{For seller $s_3$ (capacity 6.48):}
	\begin{align*}
		&\text{Possible matching: } b_2(2) + b_3(4) = 6 \leq 6.48 \\
		&\text{Utility: } (2.8-1.8)\times2 + (2.5-1.8)\times4 = 2.0 + 2.8 = 4.8 \\
		&\text{Alternative matching: } b_1(3) + b_2(2) = 5 \leq 6.48 \\
		&\text{Utility: } (3.0-1.8)\times3 + (2.8-1.8)\times2 = 3.6 + 2.0 = 5.6 \\
		&\text{Alternative matching: } b_1(3) + b_4(1) = 4 \leq 6.48 \\
		&\text{Utility: } (3.0-1.8)\times3 + (2.2-1.8)\times1 = 3.6 + 0.4 = 4.0 \\
		&\textbf{Optimal matching: } b_1 + b_2, \text{ utility } 5.6
	\end{align*}
	
	\item \textbf{For seller $s_5$ (capacity 7.14):}
	\begin{align*}
		&\text{Possible matching: } b_3(4) + b_4(1) = 5 \leq 7.14 \\
		&\text{Utility: } (2.5-1.9)\times4 + (2.2-1.9)\times1 = 2.4 + 0.3 = 2.7 \\
		&\text{Alternative matching: } b_1(3) + b_2(2) + b_4(1) = 6 \leq 7.14 \\
		&\text{Utility: } (3.0-1.9)\times3 + (2.8-1.9)\times2 + (2.2-1.9)\times1 = 3.3 + 1.8 + 0.3 = 5.4 \\
		&\text{Alternative matching: } b_1(3) + b_4(1) = 4 \leq 7.14 \\
		&\text{Utility: } (3.0-1.9)\times3 + (2.2-1.9)\times1 = 3.3 + 0.3 = 3.6 \\
		&\textbf{Optimal matching: } b_3 + b_4, \text{ utility } 2.7
	\end{align*}
	
	\item \textbf{Global matching result} (resolving conflicts, maximizing total utility):
	\begin{align*}
		&b_1 \text{ matches } s_3 \text{ (3 resource blocks)} \\
		&b_2 \text{ matches } s_3 \text{ (2 resource blocks)} \\
		&b_3 \text{ matches } s_5 \text{ (4 resource blocks)} \\
		&b_4 \text{ matches } s_5 \text{ (1 resource block)} \\
		&b_5 \text{ not matched} \\
		&\textbf{Total utility: } 5.6 + 2.7 = 8.3
	\end{align*}
\end{itemize}

\noindent\textbf{\textit{(iii)} Algorithm 2 execution (OPDAuction-ContractD):}

\textit{Phase 1: Buyer's payment determination}
\begin{itemize}
	\item \textbf{Critical bid calculation:}
	\begin{align*}
		k_b^* = 4, \text{ critical unit bid } = bid_5 = 2.0
	\end{align*}
	
	\item \textbf{Binary search for final payment prices:}
	\begin{align*}
		&b_1: \\
		&\text{Initial: high = 3.0, low = 2.0}; \\
		&\text{Iteration 1: bid}_1 = (3.0+2.0)/2 \\
		&= 2.5 \rightarrow \text{still member} \rightarrow \text{high = 2.5}; \\
		&\text{Iteration 2: bid}_1 = (2.5+2.0)/2 \\
		&= 2.25 \rightarrow \text{still member} \rightarrow \text{high = 2.25}; \\
		&\text{Iteration 3: bid}_1 = (2.25+2.0)/2 \\
		&= 2.125 \rightarrow \text{still member} \rightarrow \text{high = 2.125}; \\
		&\text{Iteration 4: bid}_1 = (2.125+2.0)/2 \\
		&= 2.0625 \rightarrow \text{not member} \rightarrow \text{low = 2.0625}; \\
		&\text{Final payment price = 2.07}; \\
		\\
		&b_2: \text{ Final payment price = 2.15}; \\
		&b_3: \text{ Final payment price = 2.05}; \\
		&b_4: \text{ Final payment price = 2.02}.
	\end{align*}
\end{itemize}

\textit{Phase 2: Seller's reward determination}
\begin{itemize}
	\item \textbf{Critical ask calculation:}
	\begin{align*}
		k_s^* = 2, \text{ critical unit ask } = ask_3 = 2.0
	\end{align*}
	
	\item \textbf{Binary search for final reward prices:}
	\begin{align*}
		&s_3: \\
		&\text{Initial: low = 1.8, high = 2.0}; \\
		&\text{Iteration 1: ask}_3 = (1.8+2.0)/2 \\
		&= 1.9 \rightarrow \text{still matched} \rightarrow \text{low = 1.9}; \\
		&\text{Iteration 2: ask}_3 = (1.9+2.0)/2 \\
		&= 1.95 \rightarrow \text{still matched} \rightarrow \text{low = 1.95}; \\
		&\text{Iteration 3: ask}_3 = (1.95+2.0)/2 \\
		&= 1.975 \rightarrow \text{not matched} \rightarrow \text{high = 1.975}; \\
		&\text{Iteration 4: ask}_3 = (1.95+1.975)/2 \\
		&= 1.9625 \rightarrow \text{still matched} \rightarrow \text{low = 1.9625}; \\
		&\text{Final unit reward = 1.96}; \\
		\\
		&s_5: \text{ Final unit reward = 1.98}.
	\end{align*}
\end{itemize}

\noindent\textbf{\textit{(iv)} Key property verification:}

\textit{Individual Rationality:}
\begin{itemize}
	\item $b_1$: $3.5 > 2.07 \rightarrow$ utility = $1.43 > 0$
	\item $b_2$: $3.2 > 2.15 \rightarrow$ utility = $1.05 > 0$
	\item $b_3$: $3.0 > 2.05 \rightarrow$ utility = $0.95 > 0$
	\item $b_4$: $2.5 > 2.02 \rightarrow$ utility = $0.48 > 0$
	\item $s_3$: $1.96 > 1.8 \rightarrow$ profit = $(1.96-1.8)\times5 = 0.80 > 0$
	\item $s_5$: $1.98 > 1.9 \rightarrow$ profit = $(1.98-1.9)\times5 = 0.40 > 0$
\end{itemize}

\textit{Truthfulness:}
\begin{itemize}
	\item If $b_1$ misreports bid $< 2.07$, cannot be member, utility = $0 < 1.43$
	\item If $b_2$ misreports bid $< 2.15$, cannot be member, utility = $0 < 1.05$
	\item If $b_3$ misreports bid $< 2.05$, cannot be member, utility = $0 < 0.95$
	\item If $b_4$ misreports bid $< 2.02$, cannot be member, utility = $0 < 0.48$
	\item If $s_3$ misreports ask $> 1.96$, cannot match, profit = $0 < 0.80$
	\item If $s_5$ misreports ask $> 1.98$, cannot match, profit = $0 < 0.40$
\end{itemize}

\textit{Budget Balance:}
\begin{itemize}
	\item Total buyer payment = $(2.07\times3) + (2.15\times2) + (2.05\times4) + (2.02\times1) = 20.73$
	\item Total seller reward = $(1.96\times5) + (1.98\times5) = 19.70$
	\item Auctioneer profit = $20.73 - 19.70 = 1.03 > 0$
	
	This illustrative example comprehensively demonstrates all five key stages of Algorithms 1-2 as required by the reviewer. It specifically addresses the core aspects mentioned in the comment:
	
	\item \textbf{Seller uncertainty}: Each seller's resource blocks have probabilistic availability, so sellers trade based on expected capacity considering overbooking
	\item \textbf{Buyer uncertainty}: Buyers have probabilistic attendance in Stage II, but trade based on their actual demand quantities in Stage I
	\item \textbf{Algorithmic completeness}: The example thoroughly demonstrates all five key stages of the algorithms
	\item \textbf{Parameter realism}: All parameters are set within realistic ranges observed in edge computing environments
\end{itemize}

\section{Overbooking in Edge vs. Cloud Computing: Motivations and Technical Challenges}
\label{ob.vs}
	While the concept of overbooking is indeed borrowed from industries like airlines and cloud computing, its application and the associated challenges in edge computing are fundamentally different. In cloud computing, overbooking typically occurs in massive, centralized data centers where resources are abundant and globally schedulable. In contrast, edge computing operates in a distributed environment with geographically dispersed, resource-constrained nodes that serve users with stringent low-latency requirements. The core motivation for overbooking in our work is to mitigate the risk of resource underutilization caused by the high uncertainty and volatility of buyer (mobile device) participation in edge networks, while also improving the time efficiency of the market. Without overbooking, a conservative allocation strategy would leave many edge server resources idle when buyers are absent, leading to poor social welfare. The critical technical challenges in edge computing stem from its unique architectural and operational constraints,
	which are not present in cloud computing. These differences are summarized in Table 2, which highlights the unique challenges we address:

\begin{table*}[htbp]
	\color{black}
	\centering
	\caption{Key Differences in Overbooking Between Cloud Computing and Edge Computing}
	\label{tab:overbooking_comparison}
	\begin{tabular}{|l|p{6cm}|p{6cm}|}
		\hline
		\textbf{Dimension} & \textbf{Cloud Computing} & \textbf{Edge Computing (Our Work)} \\
		\hline
		Resource Scale & Ultra-large-scale centralized data centers & Distributed, small-scale edge nodes with spatio-temporal variations \\
		\hline
		Location Constraint & Resources are globally schedulable & Geographically bound (users must connect to nearby edge servers) \\
		\hline
		Dynamism & Relatively smooth and predictable load variations & Highly dynamic and bursty (e.g., peak traffic in intelligent transportation) \\
		\hline
		Timeliness & Tolerates moderate delays & Requires ultra-low latency for real-time applications \\
		\hline
		Overbooking Risk & Breach costs are generally controllable and financial & High risk of mission-critical task failure and service disruption \\
		\hline
	\end{tabular}
\end{table*}

\textbf{(1) Resource Scale}: Unlike cloud data centers, edge nodes are small and scattered. Overbooking here is not about maximizing revenue from abundant resources, but about ensuring these limited, localized resources are fully utilized despite unpredictable demand.

\textbf{(2) Location Constraint}: In cloud computing, a user's task can be scheduled to any available server in the data center. In edge computing, a user is better to be served by a nearby edge server to meet latency constraints. This geographical binding means that resource shortages at one edge node cannot be easily mitigated by borrowing from another, making overbooking a necessary local risk management tool.

\textbf{(3) Dynamism}: Edge computing faces far more volatile and bursty traffic patterns (e.g., a sudden surge of vehicles at an intersection). Overbooking helps absorb these fluctuations, whereas cloud workloads are comparatively stable.

\textbf{(4) Timeliness}: The ultra-low latency requirement of edge applications means that real-time resource allocation is often too slow. Our pre-auction with overbooking allows for near-instantaneous service provisioning based on pre-signed contracts, which would be impossible with a purely reactive, non-overbooked system.

\textbf{(5) Overbooking Risk}: The consequence of overbooking failure in the cloud is typically a financial penalty or service-level agreement (SLA) violation. In edge computing, however, a resource shortage can lead to the complete failure of time-sensitive, mission-critical tasks (e.g., a vehicle failing to receive real-time navigation updates). This is why our TwoSAuction incorporates a sophisticated risk analysis and a real-time backup auction (RBDAuction) to mitigate this critical risk, a consideration that is far less prominent in cloud overbooking strategies.

In summary, while the term "overbooking" is shared, its implementation, motivation, and risk profile in edge computing are distinct from and significantly more challenging than in cloud computing. To the best of our knowledge, our work is the first to systematically address these unique edge-specific challenges.

\section{Stability Analysis of Overbooking}
\label{appendix:stability}

While overbooking is a central innovation of our two-stage double auction, a natural concern arises regarding the potential instability induced by aggressive overbooking policies. Specifically, if sellers overcommit resources without adequate safeguards, the resulting contract failures could cascade through the market, undermining participant utilities and destabilizing the entire auction equilibrium. In this appendix, we formally address this concern from a game-theoretic perspective, demonstrating that our integrated design of risk-aware constraints, two-stage structure, and adaptive overbooking rate optimization collectively ensures a stable and robust trading environment.

The core insight is that our mechanism does not treat overbooking as an unbounded strategy, but rather embeds it within a carefully constructed \textit{mechanism that promotes stable participant behavior}. This framework comprises three synergistic components: \textit{(i)} risk control as a \textit{safety valve}, \textit{(ii)} the two-stage auction as a \textit{buffer}, and \textit{(iii)} the overbooking rate optimization as a \textit{tuning knob}. Together, they guarantee that the system operates within a region where participant behavior remains \textit{predictable and mutually sustainable}, where no participant can unilaterally deviate to gain a higher utility, and the market as a whole remains resilient to perturbations caused by high overbooking rates.

\subsection{Risk Control as the Safety Valve for Equilibrium}

As detailed in Section 4.2 and Appendix D, our OPDAuction incorporates three explicit risk constraints (C1-C3) to bound the probability of adverse outcomes for all parties:
\begin{itemize}
\item \textbf{SRisk} ($\mathcal{R}_m^\mathsf{SRisk}$): Limits the probability that a seller's actual utility falls significantly below its expectation due to resource shortages.
\item \textbf{BRisk} ($\mathcal{R}_n^\mathsf{BRisk}$): Limits the probability that a buyer (who is not a volunteer) receives non-positive utility.
\item \textbf{VRisk} ($\mathcal{R}_n^\mathsf{VRisk}$): Limits the probability that a member is selected as a volunteer due to overbooking.
\end{itemize}

These are not mere engineering heuristics; they serve as a \textit{mechanism to align incentives and limit harmful strategies} that defines the feasible strategy space for participants in Stage I. By constraining the system to operate within thresholds $\xi^\mathsf{S}$, $\xi^\mathsf{M}$, and $\xi^\mathsf{V}$, we effectively carve out a stable sub-region of the overall strategy space. Within this sub-region, the expected utilities of all participants are protected from catastrophic failures, which is essential for maintaining a stable and predictable market outcome where participants have little incentive to deviate from the prescribed behavior.

Our new ablation studies (Fig.~7(d)-(f) in the revised manuscript) provide empirical validation of this theoretical construct. The experiments introduce three baselines where one risk constraint is removed at a time:
\begin{itemize}
\item \textit{TwoSAuction\_noSRisk}: Removes constraint (C3).
\item \textit{TwoSAuction\_noBRisk}: Removes constraint (C1).
\item \textit{TwoSAuction\_noVRisk}: Removes constraint (C2).
\end{itemize}

The results are striking. Under high overbooking rates (50\%-100\%), the removal of any risk control leads to a significant degradation in stability, as measured by participant utilities. Most critically, \textit{TwoSAuction\_noSRisk} suffers the most severe collapse in both seller utility (Fig.~7(f)) and social welfare (Fig.~7(d)), confirming that seller-side risk is the most critical for market stability. This is because an uncontrolled seller, in an attempt to maximize short-term gains, will overbook to an extent that it cannot fulfill its contracts, triggering a wave of penalties and service failures that erode its own utility and that of its buyers. This behavior is precisely the kind of market failure that a stable mechanism must prevent. The risk constraints act as a pressure-release valve, ensuring that the system can only operate at overbooking levels that are jointly sustainable for all participants, thereby maintaining a \textit{stable operating state}.

\subsection{Two-Stage Auction as the Dynamic Buffer}

A key strength of our design is its ability to absorb shocks that might destabilize a single-stage auction. The two-stage structure (OPDAuction + RBDAuction) transforms the resource trading process from a static, one-shot game into a dynamic, two-period game with recourse.

In Stage I, the OPDAuction establishes long-term contracts based on \textit{expected} market conditions, aiming to maximize expected social welfare. The risk constraints ensure these contracts are robust. However, real-world dynamics may still cause deviations from these expectations. This is where Stage II, the RBDAuction, plays its crucial role as a \textit{buffer}. In particular, RBDAuction provides a real-time, on-demand market for residual supply and demand. If a seller in Stage I is unable to fulfill all its contracts due to overbooking (i.e., some members become volunteers), or if a guest arrives with unmet demand, the RBDAuction efficiently re-matches these parties using up-to-date information. This mechanism is analogous to a "re-negotiation" or "contingent claim" in dynamic game theory, which allows the system to recover from a sub-optimal state.

This buffering effect is clearly demonstrated in Fig.~7(d). Even when risk controls are removed (e.g., in \textit{TwoSAuction\_noSRisk}), the social welfare does not collapse to zero under high overbooking. The RBDAuction successfully recovers a significant portion of the lost welfare by facilitating new trades in the spot market. From a game-theoretic perspective, this means the system is not trapped in a single, fragile equilibrium. Instead, it has the \textit{ability to recover and re-stabilize after a shock}: even if the initial plan (Stage I) is perturbed, the system has a built-in mechanism (Stage II) to evolve towards a new, feasible, and stable state. This robustness is a hallmark of a well-designed dynamic game.

\subsection{Overbooking Rate Optimization as the Stability Tuning Knob}

The overbooking rate $\lambda$ in our modeling is not a fixed, exogenous parameter but an endogenous variable that is optimized by Algorithm~3. The optimization process is not merely a welfare-maximization exercise; instead, it is a \textit{risk-aware search for a stable operating point}. Algorithm~3 systematically evaluates candidate overbooking rates by solving the constrained optimization problem $\bm{\mathcal{F}_1}$. For each candidate $\lambda$, it simultaneously assesses the resulting expected social welfare and the associated risk levels (SRisk, BRisk, VRisk). Candidates that violate the risk thresholds are pruned, ensuring that the final selected rate $\lambda^*$ is \textit{consistent with system stability requirements}.

This process can be interpreted as a search for a \textit{balanced trade-off between stability and social welfare}. The optimal rate $\lambda^*$ (e.g., 33\% in our experiments) represents a "sweet spot" where the marginal gain in expected welfare from increasing $\lambda$ is exactly balanced by the marginal increase in systemic risk. Operating at this point ensures that the market is both efficient and stable. Any deviation from $\lambda^*$—either higher or lower—would either increase risk beyond acceptable levels or sacrifice potential welfare, thereby moving the system away from a \textit{functional and sustainable state}.

In conclusion, our two-stage double auction is not vulnerable to the risks of aggressive overbooking. On the contrary, it is explicitly designed to harness the benefits of overbooking while embedding multiple, complementary layers of defense namely, risk constraints, a dynamic two-stage structure, and an adaptive optimization process, that collectively \textit{promote a stable and robust trading environment}. The new ablation experiments serve as a powerful empirical proof of this stability, showing that the removal of any of these layers leads to a measurable and significant degradation in market performance under stress.

\section{challenge of DP, PP, CR and SC}
In addition to the specific, unique challenges of this study mentioned in the main text, the key technical difficulty lies in integrating our advanced components: Pre-Auction (PA), Overbooking (OB), and Risk Control (RC), while rigorously preserving the essential economic properties of a double auction, including Dynamic Pricing (DP), Privacy Preservation (PP), Collusion Resistance (CR), and Smart Contract (SC) compatibility. The core tension stems from the fact that these advanced functions and fundamental properties often impose conflicting requirements. For example:

\noindent $\bullet$ \textit{PA + OB vs. DP/CR}: A pre-auction with overbooking inherently relies on \textit{predictions} and \textit{probabilistic guarantees} about future states (e.g., buyer attendance, resource availability). This is fundamentally at odds with the ideal of Dynamic Pricing (DP), which assumes prices are set based on \textit{real-time}, perfectly observable supply and demand. Enforcing strict collusion resistance (CR) in a pre-auction is also exceptionally difficult, as participants have time to communicate and form coalitions before the market clears, unlike in a fast, sealed-bid real-time auction.

\noindent $\bullet$ \textit{OB + RC vs. Simplicity/SC:} While Risk Control (RC) is essential for making overbooking viable, it introduces significant complexity. The risk constraints (C1, C2, C3) are non-linear and interdependent, making the optimization problem (F1) computationally intensive and difficult to encode into a simple, verifiable Smart Contract (SC). A naive SC might only handle the basic matching logic, but incorporating our sophisticated, risk-aware overbooking rate optimization (Algorithm 3) requires a more complex, potentially off-chain, computation.

\noindent $\bullet$ \textit{PA + OB vs. PP:} The pre-auction process, where participants commit to long-term contracts, necessitates revealing more information (e.g., historical attendance patterns, resource fluctuation models) to the auctioneer for accurate overbooking rate calculation. This increased information sharing can inherently conflict with strong Privacy Preservation (PP) goals, which aim to minimize the data disclosed by participants.

In essence, the challenge is not merely to implement PA, OB, and RC, but to do so in a way that \textit{does not catastrophically break} DP, PP, CR, or SC. Our work navigates this complex design space by making pragmatic, well-justified trade-offs. We prioritize truthfulness and individual rationality as non-negotiable core properties, use RC to make OB safe and effective, and accept that perfect CR and PP, while desirable, are secondary to the primary goal of creating a practical, efficient, and robust market for dynamic edge environments. The design of Algorithm 3, which jointly optimizes for social welfare and risk, is a direct response to this systemic challenge, ensuring that the benefits of PA and OB are realized without sacrificing the economic soundness of the auction.

We will conduct in-depth investigations into the challenges and discussions regarding DP, PP, CR, and SC in similar settings and scenarios in our future work.

\color{black}


\begin{thebibliography}{1}
	\bibitem{b1} N. R. Zema, E. Natalizio and L. Di Puglia Pugliese, "3D Trajectory Optimization for Multimission UAVs in Smart City Scenarios," \textit{IEEE Trans. Mobile Comput.}, vol. 23, no. 1, pp. 1-11, 2024.
	\bibitem{d1-2} S. Wang, X. Song, T. Song and Y. Yang, "Fairness-Aware Computation Offloading With Trajectory Optimization and Phase-Shift Design in RIS-Assisted Multi-UAV MEC Network," \textit{IEEE Internet Things J.}, vol. 11, no. 11, pp. 20547-20561, 2024.
	\bibitem{b2} J. Liu, J. Ren, Y. Zhang, X. Peng and Y. Yang, "Efficient Dependent Task Offloading for Multiple Applications in MEC-Cloud System," \textit{IEEE Trans. Mobile Comput.}, vol. 22, no. 4, pp. 2147-2162, 2023.
	\bibitem{d1-4}Y. Ju, Z. Cao, Y. Chen, L. Liu, Q. Pei and S. Mumtaz, "NOMA-Assisted Secure Offloading for Vehicular Edge Computing Networks With Asynchronous Deep Reinforcement Learning," \textit{IEEE Trans. Intell. Transp. Syst.}, vol. 25, no. 3, pp. 2627-2640, 2024.
	\bibitem{b3} J. Huang, S. Li, L. Yang, J. Si and X. Ma, "Multi-participant Double Auction for Resource Allocation and Pricing in Edge Computing," \textit{IEEE Internet Things J.}, vol. 11, no. 8, pp. 14007-14016, 2024.
	\bibitem{d1-6}C. Jiang, Z. Luo, L. Gao and J. Li, "A Truthful Incentive Mechanism for Movement-Aware Task Offloading in Crowdsourced Mobile Edge Computing Systems," \textit{IEEE Internet Things J.}, vol. 11, no. 10, pp. 18292-18305, 2024.
	\bibitem{b4} E. Moro and I. Filippini, "Joint Management of Compute and Radio Resources in Mobile Edge Computing: A Market Equilibrium Approach," \textit{IEEE Trans. Mobile Comput.}, vol. 22, no. 2, pp. 983-995, 2023.
	\bibitem{d1-8} Q. Li, Z. Wang, and H. Du, "Incentive Mechanism Design for Value-decreasing Tasks in Dynamic Competitive Edge Computing Networks," \textit{J. Comb. Optim.}, vol. 49, no. 1, pp. 18, 2025.
	\bibitem{b5} D. Li, Q. Yang, C. Li, D. An, and Y. Shi, "Bayesian-based Inference Attack Method and Individual Differential Privacy-based Auction Mechanism for Double Auction Market," \textit{IEEE Trans. Autom. Sci. Eng.}, vol. 20, no. 2, pp. 950–968, 2023.
	\bibitem{b5-1} Z. Tang, F. Mou, J. Lou, W. Jia, Y. Wu and W. Zhao, "Joint Resource Overbooking and Container Scheduling in Edge Computing," \textit{IEEE Trans. Mob. Comput.,} vol. 23, no. 12, pp. 10903-10917, 2024.
	\bibitem{b6} D. Li, C. Hu, Q. Yang, Y. Ma, F. Zhang and D. An, "An Attack-Defense Game-Based Reinforcement Learning Privacy-Preserving Method Against Inference Attack in Double Auction Market," \textit{IEEE Trans. Autom. Sci. Eng.}, vol. 22, pp. 9075-9089, 2025.
	\bibitem{b7} Y. Chen, F. Lin, Z. Chen, C. Tang and C. Chen, "Optimal Production Capacity Matching for Blockchain-Enabled Manufacturing Collaboration with the Iterative Double Auction Method," \textit{IEEE/CAA J. Autom. Sin.}, vol. 12, no. 3, pp. 550-562, 2025.
	\bibitem{b8} S. Cui, S. Xu, F. Hu, Y. Zhao, J. Wen and J. Wang, "A Consortium Blockchain-Enabled Double Auction Mechanism for Peer-to-Peer Energy Trading among Prosumers," \textit{Prot. Control Mod. Power Syst.,} vol. 9, no. 3, pp. 82-97, 2024.
	\bibitem{b9} M. Dai, Z. Luo, Y. Wu, L. Qian, B. Lin, and Z. Su, "Incentive Oriented Two-tier Task Offloading Scheme in Marine Edge Computing Networks: A Hybrid Stackelberg-auction Game Approach," \textit{IEEE Trans. Wireless Commun.}, vol. 22, no. 12, pp. 8603-8619, 2023.
	\bibitem{b10} L. Zhang, K. Xiao, L. Jin, P. Dong and Z. Tong, "Mobility-Aware and Double Auction-Based Joint Task Offloading and Resource Allocation Algorithm in MEC," \textit{IEEE Trans. Netw. Serv. Manag.}, vol. 21, no. 1, pp. 821-837, 2024.
	\bibitem{b11} J. Li, Z. Chen, T. Zang, T. Liu, J. Wu and Y. Zhu, "Reinforcement Learning-Based Dual-Identity Double Auction in Personalized Federated Learning," \textit{IEEE Trans. Mobile Comput.,} vol. 24, no. 5, pp. 4086-4103, 2025.
	\bibitem{b12} N. Qi, Z. Huang, W. Sun, S. Jin and X. Su, "Coalitional Formation-Based Group-Buying for UAV-Enabled Data Collection: An Auction Game Approach," \textit{IEEE Trans. Mobile Comput.}, vol. 22, no. 12, pp. 7420-7437, 2023.
	\bibitem{b13} Y. Du, Z. Wang, J. Li, L. Shi, D. Jayakody and Q. Chen, "Blockchain-Aided Edge Computing Market: Smart Contract and Consensus Mechanisms," \textit{IEEE Trans. Mobile Comput.}, vol. 22, no. 6, pp. 3193-3208, 2023.
	\bibitem{b14} B. Yin, H. Weng, Y. Hu, J. Xi, P. Ding and J. Liu, "Multi-Agent Deep Reinforcement Learning for Simulating Centralized Double-Sided Auction Electricity Market," \textit{IEEE Trans. Power Syst.,} vol. 40, no. 1, pp. 518-529, 2025.
	\bibitem{b15} Y. Zhang, H. Zhang, Y. Lu, W. Sun and L. Wei, "Adaptive Digital Twin Placement and Transfer in Wireless Computing Power Network," \textit{IEEE Internet Things J.}, vol. 11, no. 6, pp. 10924-10936, 2024.
	\bibitem{b16} H. Qi and M. Liwang, "Matching-Based Hybrid Service Trading for Task Assignment Over Dynamic Mobile Crowdsensing Networks," \textit{IEEE Trans. Serv. Comput.}, vol. 17, no. 5, pp. 2597-2612, 2024.
	\bibitem{b17} M. Liwang, Z. Gao and X. Wang, "Let’s Trade in the Future! A Futures-Enabled Fast Resource Trading Mechanism in Edge Computing-Assisted UAV Networks," \textit{IEEE J. Sel. Areas Commun.}, vol. 39, no. 11, pp. 3252-3270, 2021.
	\bibitem{b20} M. Liwang, R. Chen and X. Wang, "Resource Trading in Edge Computing-Enabled IoV: An Efficient Futures-Based Approach," \textit{IEEE Trans. Serv. Comput.}, vol. 15, no. 5, pp. 2994-3007, 2022.
	\bibitem{R1} Y. Guo, D. Hao, M. Xiao and B. Li, "Networked Combinatorial Auction for Crowdsourcing and Crowdsensing," \textit{IEEE Internet Things J.}, vol. 11, no. 4, pp. 5951-5966, 15 Feb.15, 2024.
	\bibitem{R2} X. Wang et al., "Differentially Private and Truthful Reverse Auction With Dynamic Resource Provisioning for VNFI Procurement in NFV Markets," \textit{IEEE Trans. Cloud Comput.}, vol. 13, no. 1, pp. 259-272, Jan.-March 2025.
	\bibitem{R3} H. Qiu, K. Zhu, N. C. Luong, C. Yi, D. Niyato and D. I. Kim, "Applications of Auction and Mechanism Design in Edge Computing: A Survey," \textit{IEEE Trans. Cogn. Commun. Netw.}, vol. 8, no. 2, pp. 1034-1058, June 2022.
	\bibitem{R4} H. Wang, A. Liu, N. N. Xiong, S. Zhang and T. Wang, "TVD-RA: A Truthful Data Value Discovery-Based Reverse Auction Incentive System for Mobile Crowdsensing," \textit{IEEE Internet Things J.}, vol. 11, no. 4, pp. 5826-5839, 15 Feb.15, 2024.
	\bibitem{R5} U. Habiba, S. Maghsudi and E. Hossain, "A Repeated Auction Model for Load-Aware Dynamic Resource Allocation in Multi-Access Edge Computing," \textit{IEEE Trans. Mobible Comput.}, vol. 23, no. 7, pp. 7801-7817, July 2024.
	\bibitem{b20-1} R. Ou, G. Boateng, D. Ayepah-Mensah, G. Sun, G. Liu, "Stackelberg Game-based Dynamic Resource Trading for Network Slicing in 5G Networks," \textit{J. Netw. Comput. Appl.,} vol. 214, pp. 103600, 2023.
	\bibitem{b20-2} B. Hu, Y. Gao, W. Zhang, D. Jia and H. Liu, "Computation Offloading and Resource Allocation in IoT-Based Mobile Edge Computing Systems," \textit{IEEE SmartIoT,} pp. 119-123, 2023.
	\bibitem{b20-23} H. Qiu, K. Zhu, N. C. Luong, C. Yi, D. Niyato and D. I. Kim, "Applications of Auction and Mechanism Design in Edge Computing: A Survey," \textit{IEEE Trans. Cognit. Commun. Netw.}, vol. 8, no. 2, pp. 1034-1058, 2022.
	\bibitem{b20-3} M. Shi, X. Lin and L. Jiao, "Power-of-2-Arms for Adversarial Bandit Learning With Switching Costs," \textit{IEEE Trans. Netw.,} 2024.
	\bibitem{b21} A. R. Hossain, W. Liu and N. Ansari, "AI-Native End-to-End Network Slicing for Next-Generation Mission-Critical Services," \textit{IEEE Trans. Cogn. Commun. Netw.,} vol. 11, no. 1, pp. 48-58, 2025.
	\bibitem{b25} X. Liu and J. Liu, "A Truthful Double Auction Mechanism for Multi-Resource Allocation in Crowd Sensing Systems," \textit{IEEE Trans. Serv. Comput.}, vol. 15, no. 5, pp. 2579-2590, 2022.
	\bibitem{b25-2} X. Zhou, J. Xu, Y. Sun, S. Niu, B. Zhou and L. Zhang, "Torque Performance Improving Strategy for DSEM Based on Online Self-Searching Six-State Advanced Angle Commutation," \textit{IEEE Trans. Power Electron.,} vol. 40, no. 7, pp. 9223-9233, 2025.
	\bibitem{b26} M. Anditta and N. Amartya, "Dynamic Programming Algorithm using Furniture Industry 0/1 Knapsack Problem," \textit{ICORIS}, 2023.
	\bibitem{b27} X. Yang, S. He, K. G. Shin and M. Tabatabaie, "Cross-Modality and Equity-Aware Graph Pooling Fusion: A Bike Mobility Prediction Study," \textit{IEEE Trans. Big Data,} vol. 11, no. 1, pp. 286-302, 2025.
	\bibitem{b28} M. Ester, H. Kriegel, J. Sander, X. Xu and others, "A Density-based Algorithm for Discovering Clusters in Large Spatial Databases With Noise," \textit{Knowledge Discovery and Data Mining}. vol. 96. no. 34. 1996.
	\bibitem{b29} J. Hu, E. Sever, O. Babazadeh, I. Jeffrey, V. Okhmatovski and C. Sideris, "H-Matrix Accelerated Direct Matrix Solver for Maxwell’s Equations Using the Chebyshev-Based Nyström Boundary Integral Equation Method," \textit{IEEE Open J. Antennas Propag.,} vol. 6, no. 1, pp. 171-180, 2025.
	
	
	
	
	
	
	
	
	
	
	
	
	
	
	
	
	
	
	
	
	
	
	
	
	
	
	
	
\end{thebibliography}
\end{document}